\newcommand\tenq[2][1]{%
\def\useanchorwidth{T}%
\ifnum#1>1%
\stackunder[0pt]{\tenq[\numexpr#1-1\relax]{#2}}{\!\scriptscriptstyle\thicksim}%
\else%
\stackunder[1pt]{#2}{\!\scriptstyle\thicksim}%
\fi%
}
\DeclareRobustCommand\widecheck[1]{{\mathpalette\@widecheck{#1}}}
\def\@widecheck#1#2{%
    \setbox\z@\hbox{\m@th$#1#2$}%
    \setbox\tw@\hbox{\m@th$#1%
       \widehat{%
          \vrule\@width\z@\@height\ht\z@
          \vrule\@height\z@\@width\wd\z@}$}%
    \dp\tw@-\ht\z@
    \@tempdima\ht\z@ \advance\@tempdima2\ht\tw@ \divide\@tempdima\thr@@
    \setbox\tw@\hbox{%
       \raise\@tempdima\hbox{\scalebox{1}[-1]{\lower\@tempdima\box
\tw@}}}%
    {\ooalign{\box\tw@ \cr \box\z@}}}
\def\given{\,|\,}
\def\tr{\mathop{\text{tr}}\kern.2ex}
\def\tZ{{\tilde Z}}
\def\P{{\mathrm P}}
\def\E{{\mathrm E}}
\def\d{{\mathrm d}}
\newcommand{\F}{\mathrm{F}}
\newcolumntype{L}[1]{>{\raggedright\let\newline\\\arraybackslash\hspace{0pt}}m{#1}}
\newcolumntype{C}[1]{>{  \centering\let\newline\\\arraybackslash\hspace{0pt}}m{#1}}
\newcolumntype{R}[1]{>{ \raggedleft\let\newline\\\arraybackslash\hspace{0pt}}m{#1}}
\newcolumntype{d}[1]{D{.}{.}{#1}}
\newcolumntype{H}{>{\setbox0=\hbox\bgroup}c<{\egroup}@{}}
\newcolumntype{Z}{>{\setbox0=\hbox\bgroup}c<{\egroup}@{\hspace*{-\tabcolsep}}}
\newcolumntype{b}{X}
\newcolumntype{s}{>{\hsize=.5\hsize}X}
\numberwithin{equation}{section}
\newtheorem{theorem}{Theorem}[section]
\newtheorem{lemma}{Lemma}[section]
\newtheorem{proposition}{Proposition}[section]
\newtheorem{assumption}{Assumption}[section]
\newtheorem{corollary}{Corollary}[section]
\providecommand{\customgenericname}{}
\newcommand{\newcustomtheorem}[2]{%
  \newenvironment{#1}[1]
  {%
   \renewcommand\customgenericname{#2}%
   \renewcommand\theinnercustomgeneric{##1}%
   \innercustomgeneric
  }
  {\endinnercustomgeneric}
}
\theoremstyle{definition}
\newtheorem{example}{Example}[section]
\newtheorem{remark}{Remark}[section]
\newcommand{\mylabel}[2]{#2\def\@currentlabel{#2}\label{#1}}
\begin{document}

\setlength{\abovedisplayskip}{5pt}
\setlength{\belowdisplayskip}{5pt}
\setlength{\abovedisplayshortskip}{5pt}
\setlength{\belowdisplayshortskip}{5pt}
\hypersetup{colorlinks,breaklinks,urlcolor=blue,linkcolor=blue}

\title{\LARGE 
Unifying regression-based and design-based causal inference in time-series experiments}

\author{Zhexiao Lin\thanks{Department of Statistics, University of California, Berkeley, CA 94720, USA; e-mail: {\tt zhexiaolin@berkeley.edu}}~~~and~
Peng Ding\thanks{Department of Statistics, University of California, Berkeley, CA 94720, USA; e-mail: {\tt pengdingpku@berkeley.edu}}
}

\maketitle

\vspace{-1em}

\begin{abstract}
  Time-series experiments, also called switchback experiments or N-of-1 trials, play increasingly important roles in modern applications in medical and industrial areas. Under the potential outcomes framework, recent research has studied time-series experiments from the design-based perspective, relying solely on the randomness in the design to drive the statistical inference. Focusing on simpler statistical methods, we examine the design-based properties of regression-based methods for estimating treatment effects in time-series experiments. We demonstrate that the treatment effects of interest can be consistently estimated using ordinary least squares with an appropriately specified working model and transformed regressors. Our analysis allows for estimating a diverging number of treatment effects simultaneously, and establishes the consistency and asymptotic normality of the regression-based estimators. Additionally, we show that asymptotically, the heteroskedasticity and autocorrelation consistent variance estimators provide conservative estimates of the true, design-based variances. Importantly, although our approach relies on regression, our design-based framework allows for misspecification of the regression model.
\end{abstract}
    
{\bf Keywords}: carryover effect, randomization inference, robust standard error, switchback experiment.

\section{Introduction to time-series experiments}

Time-series experiments, also referred to as switchback experiments or N-of-1 trials, have attracted increasing attention across a variety of domains, including clinical trials \citep{lillie2011n}, online platforms \citep{bojinov2023design}, and other applied settings \citep{gabler2011n,mirza2017history,hawksworth2024methodological}. Unlike traditional randomized controlled trials, which assign treatments at the unit level and maintain a fixed assignment throughout the study, time-series experiments repeatedly randomize treatment assignments over time for the same, single experimental unit. Time-series experiments are closely related to longitudinal experiments \citep{robins1986new}, but focus on dynamic treatment effects on the same unit rather than average effects over a population. This experiment design allows for capturing dynamic causal effects in environments where both treatments and outcomes evolve over time and where treatment effects vary across time periods. Such designs are particularly attractive in modern applications where repeated interventions are feasible, the number of experimental units is limited, and the research interest is to estimate time-varying or lagged treatment effects. \cite{bojinov2019time} provided a concrete example of a time-series experiment conducted at AHL Partners LLP, a quantitative hedge fund that randomly assigned large trading orders to either human traders or computer algorithms to compare execution performance. Their example is a time-series experiment with sequential treatment assignments, where the outcome of interest is the relative performance across 10 equity-index futures markets over the course of a year.

Despite the wide applications of time-series experiments, the statistical analysis presents challenges, mainly due to the temporal dependence of outcomes and the risk of model misspecification. In time-series experiments, the treatment effects may persist beyond the period of assignment, inducing the dependence of the observed outcomes on past treatments. Consequently, the methods for analyzing traditional randomized experiments is not applicable. The design-based perspective offers a principled way to avoid strong assumptions on the outcome model. By viewing the treatment assignment as the sole source of randomness, we propose treatment effect estimates with desirable properties without strong model assumptions. This perspective builds on a rich literature on design-based analysis for various experimental settings, including regression analysis for completely randomized experiments \citep{lin2013agnostic}, matched-pairs experiments \citep{fogarty2018regression}, split-plot designs \citep{zhao2022reconciling,mukerjee2022causal}, cluster-randomized experiments \citep{middleton2015unbiased, su2021model}, and network experiments \citep{aronow2017estimating,leung2022causal, gao2023causal}. Recent comprehensive reviews are available in \citet{shi2024some} and \citet{ding2025randomization}. However, the design-based analysis of regression-based estimators in time-series experiments remains under explored.

In this paper, we develop a unified design-based framework for analyzing regression-based estimators in time-series experiments. We show that regressing the observed outcomes on current and past treatments using an appropriately specified working model and transformed regressors yields estimates of the lagged treatment effects of interest, even when the number of regressors diverges with the length of the time series. A key feature of our working model is the centering and scaling of treatment indicators, which ensures the regression coefficients to recover the causal estimands of interest. We establish the consistency and asymptotic normality of the ordinary least squares (OLS) and show that heteroskedasticity and autocorrelation consistent (HAC) variance estimators provide conservative variance estimates. Our results do not require the regression model to be correctly specified, reflecting the robustness inherent to the design-based analysis. Our work is related to \citet{bojinov2019time} and \citet{liang2025randomization}, who also develop design-based analyses for time-series experiments. However, our paper differs from theirs at two levels. First, at the level of estimands, we focus on treatment effects defined independently of the realized treatment path and outcome model assumptions. Second, at the level of estimators, we focus on simpler, regression-based point and variance estimators. We will provide a detailed comparison in the next section.

The remainder of the paper is organized as follows. Section~\ref{sec:setup} introduces the setup of the time-series experiment, defines the causal estimand of interest, and presents our regression-based estimator. Section~\ref{sec:theory} develops the asymptotic theory, establishing consistency, asymptotic normality, and variance estimation. Section~\ref{sec:extensions} discusses the extension to the continuous treatment. Section~\ref{sec:empirical} presents empirical studies, including simulation evidence and an application to the trading experiment data previously analyzed by \citet{bojinov2019time}. Section~\ref{sec:discussion} concludes with a discussion of possible directions for future research. The online supplementary materials contain extensions to the regression with interaction terms and general exposure mapping, as well as the technical details.

\textbf{Notation.} For a vector $a = (a_1,\ldots,a_d)^\top \in \mathbb{R}^d$, let $\lVert a \rVert_1 = \sum_{i=1}^d \lvert a_i \rvert$, and $\lVert a \rVert_2 = ( \sum_{i=1}^d a_i^2 )^{1/2}$. For a matrix $A = (a_{ij}) \in \mathbb{R}^{m \times n}$, let $\|A\|_{\mathrm{F}} = ( \sum_{i=1}^m \sum_{j=1}^n a_{ij}^2 )^{1/2}$ denote the Frobenius norm, and $\lVert A \rVert_2$ denote the spectral norm. Following the R syntax, we use $\textbf{lm}(Y \sim X)$ to denote the linear regression model with outcome $Y$ and regressors $X$.

\section{Setup, estimand and estimator}\label{sec:setup}

\subsection{Setup}

Consider a time-series experiment with a binary treatment and a single unit repeatedly measured at $T$ time points. At each time point $t=1,\ldots,T$, we apply $Z_t=0$ for the control and $Z_t=1$ for the treatment. For each time point $t$, define the treatment history up to $t$ as
$z_{t:1} = (z_t, z_{t-1}, \ldots, z_1)$. Let $\{Y_t(z_T, \ldots, z_1):z_T, \ldots, z_1 = 0 \text{ or } 1\}_{t=1}^T$ denote the potential outcomes, which can depend on the whole treatment history. We first simplify the potential outcomes by imposing the no-anticipation assumption below.
\begin{assumption}[No-anticipation]\label{asp:no-anticipation}
  For every possible treatment path $(z_1,\ldots,z_T)$, the potential outcome at time $t$ depends only on the treatment history up to and including time $t$, but does not depend on any future treatments $z_{t+1}, \ldots, z_T$:
\[
  Y_t(z_T,\ldots,z_{t+1}, z_t,\ldots,z_1) = Y_t(z_t,\ldots,z_1),~\text{for all}~t=1,\ldots,T.
\]
\end{assumption}

Under the no-anticipation assumption (Assumption~\ref{asp:no-anticipation}), the observed outcome simplifies to $Y_t=Y_t(z_{t:1})$, where $z_{t:1} = (z_t,\ldots,z_1)$ represents the treatment history up to and including time $t$, following the notation in \cite{bojinov2019time}. The no-anticipation assumption holds when treatments are rapidly randomized (e.g., online experiments), or treatments are administered sequentially (e.g., clinical N-of-1 trials). However, when there exists forward-looking behavior, where the experimental unit can predict future treatments, the no-anticipation assumption would be violated. We do not consider violations of the no-anticipation assumption in this paper.

The following example illustrates the notation.

\begin{example}
  Consider a time-series experiment with $T=3$ time points. Assume no-anticipation (Assumption~\ref{asp:no-anticipation}). At time $t=1$, there are two potential outcomes $Y_1(0)$ and $Y_1(1)$. At time $t=2$, there are four potential outcomes $Y_2(0,0)$, $Y_2(0,1)$, $Y_2(1,0)$, and $Y_2(1,1)$. At time $t=3$, there are eight potential outcomes $Y_3(0,0,0)$, $Y_3(0,0,1)$, $Y_3(0,1,0)$, $Y_3(0,1,1)$, $Y_3(1,0,0)$, $Y_3(1,0,1)$, $Y_3(1,1,0)$, and $Y_3(1,1,1)$. Given any treatment history $(Z_3,Z_2,Z_1)$, the observed outcome path is $Y_1 = Y_1(Z_1)$, $Y_2 = Y_2(Z_2,Z_1)$, and $Y_3 = Y_3(Z_3,Z_2,Z_1)$, and all other potential outcomes are not observed.
\end{example}

Time-series experiments and network experiments are closely related in that both extend the classical potential outcomes framework to accommodate structured interference. In network experiments, a unit’s potential outcome may depend not only on its own treatment but also on the treatments of its neighbors in a network \citep{aronow2017estimating}.
% , motivating assumptions such as partial interference \citep{hudgens2008toward}, exposure mappings \citep{aronow2017estimating}, or general but moderate interference \citep{savje2021average}.
Similarly, in time-series experiments with a single unit observed over multiple periods, outcomes may depend not only on the current treatment but also on past or future treatments, giving rise to temporal interference. The no-anticipation assumption specifies a form of temporal interference: outcomes at time $t$ may depend on the entire treatment history up to $t$, but not on future treatments \citep{bojinov2019time}. In this sense, no-anticipation can be seen as a generalization of the Stable Unit Treatment Value Assumption (SUTVA) \citep{rubin1980randomization}, which requires that there are no hidden versions of treatment and there is no interference between units.

We introduce the following assumption to describe the design of the time-series experiment considered in this paper.

\begin{assumption}[Time-series experiment]\label{asp:design}
  Assume the treatments $Z_1,\ldots,Z_T$ are independent. Let $p_t = \P(Z_t=1) \in (0,1)$ for each $t = 1,\ldots,T$. Assume the values of $(p_1,\ldots,p_T)$ are known by the design.
\end{assumption}

To estimate the lagged treatment effects, which capture how treatments administered in earlier periods continue to influence outcomes at later times, we consider regressing the observed outcomes on some past treatments. We motivate the working regression model by the standard causal inference setting with only one time point. Assume there is only one binary treatment $Z$ with $\P(Z) = p$ known, and one pair of potential outcomes $(Y(0),Y(1))$ and observe $Y = Y(Z)$. Then the standard inverse propensity score weighted estimator for the average treatment effect $\E[Y(1)-Y(0)]$ is the sample mean version of $\E[(ZY/p) - (1-Z)Y/(1-p)] = \E[((Z-p)/p(1-p))Y]$. Extending the idea to time-series experiments with multiple time points, we normalize the treatment indicator $Z_t$ as
\begin{align}\label{eq:z_t}
  \tZ_t = (Z_t - \E[Z_t])/\Var[Z_t] = (Z_t-p_t)/[p_t(1-p_t)].
\end{align}
Suppose we are interested in the lagged treatment effects up to the number of lags $K$. Now we consider the OLS regression without the intercept:
\begin{align}\label{eq:ols}
  \textbf{lm}(Y_t \sim \tZ_t + \tZ_{t-1} + \cdots + \tZ_{t-K})
\end{align}
using observations from $t = K+1,\ldots,T$. Let $\tilde\tau = (\tilde\tau_0,\ldots,\tilde\tau_K)^\top \in \bR^{K+1}$ denote the OLS coefficient vector from \eqref{eq:ols}.

Several key points should be noted. First, the centering of $Z_t$ in $\tZ_t$ by $\E[Z_t]$ ensures that the estimand is invariant to location shifts in potential outcomes. Specifically, if the potential outcome $(Y_1,\ldots,Y_T)$ is shifted to $(Y_1+c,\ldots,Y_T+c)$ for some constant $c$, the estimand remains unchanged. This transformation invariance property guarantees that the estimand captures causal effects rather than the absolute level of the outcomes. Second, the scaling of $Z_t$ in $\tZ_t$ by $\Var[Z_t]$ ensures equal weighting across all time points in the estimand. If differential weighting across lags is desired to reflect differences in treatment assignment probabilities, the scaling can be adjusted accordingly. See the discussion about general treatment effects and the connection with \cite{bojinov2019time} in Section~\ref{sec:comparisons}. 

To ensure that the OLS estimator matches the scale of the causal estimand of interest, where the difference in scale arises from dividing $Z_t$ by $\Var[Z_t]$, we apply a linear transformation to $\tilde\tau$. Intuitively, this step rescales the regression coefficients back into the natural scale of the treatment effect. Since $\E[\tZ_t^2] = 1/\Var[Z_t]$ for each $t = 1,\ldots,T$, we define 
\begin{align}\label{eq:hat_tau_k}
\hat\tau_k = w_k^{-1} \tilde\tau_k \quad \text{for } k = 0,\ldots,K,
\end{align} 
where $w_k$ is the harmonic mean of the variances of the lag-$k$ treatment indicators: 
\begin{align}\label{eq:w_k}
  w_k = \left[\frac{1}{T-K} \sum_{t=K+1}^T \Var[Z_{t-k}]^{-1}\right]^{-1} = \left[\frac{1}{T-K} \sum_{t=K+1}^T \left(p_{t-k}\left(1-p_{t-k}\right)\right)^{-1}\right]^{-1}.
\end{align}
By Assumption~\ref{asp:design}, the treatment probabilities are fixed by design, and thus $w_k$ is a fixed constant. Let $$\hat\tau = (\hat\tau_0,\ldots,\hat\tau_K)^\top = (w_0^{-1} \tilde\tau_0,\ldots,w_K^{-1} \tilde\tau_K)^\top \in \bR^{K+1}$$ denote the OLS estimate from \eqref{eq:ols} after the linear transformation \eqref{eq:hat_tau_k}.  In this paper, we focus on studying the design-based properties of $\hat\tau$. Under the design-based perspective, we treat the potential outcomes as fixed and the randomness solely comes from the treatment assignments.

To write $\hat\tau$ explicitly, define the outcome vector $\mY_K = (Y_{K+1},\ldots,Y_T)^\top \in \bR^{T-K}$ and the regressor matrix $\mZ_K = (\tZ_{K+1:1}^\top,\ldots,\tZ_{T:T-K}^\top)^\top \in \bR^{(T-K) \times (K+1)}$ where $\tZ_{t:t'} = (\tZ_t,\tZ_{t-1},\ldots,\tZ_{t'})^\top$ for $t>t'$.  Let $\mW_K \in \bR^{(K+1)\times(K+1)}$ be the diagonal weight matrix $\mW_K = {\rm diag}(w_k:k=0,\ldots,K)$. Then, the OLS estimator in \eqref{eq:ols} after the linear transformation \eqref{eq:hat_tau_k} can be expressed as
\begin{align}\label{eq:hat_tau_ols}
  \hat{\tau} = \mW_K^{-1} \tilde\tau = \mW_K^{-1} (\mZ_K^\top \mZ_K)^{-1} \mZ_K^\top \mY_K.
\end{align}

The OLS regression is intuitive for estimating lagged treatment effects because it directly links the outcome to both current and past treatments. Each coefficient $\hat\tau_k$ captures the average impact of a treatment administered $k$ periods earlier on the current outcome. In this way, the estimator naturally decomposes the outcome into lag-specific contributions, providing a clear measure of how treatments influence present and future outcomes over time. We will show that $\hat\tau$ can recover the causal estimand in the next subsection.

% The OLS regression in \eqref{eq:ols} is closely related to the local projections introduced by \citet{jorda2005estimation}. In local projections, one estimates horizon-$h$ impulse responses by directly regressing the future outcome $Y_{t+h}$ on the contemporaneous shock (or treatment) at time $t$, often with controls for lagged variables. Each horizon requires a separate regression, and the resulting coefficients trace out the dynamic response over time. By contrast, our regression in \eqref{eq:ols} stacks these responses into a single regression of $Y_t$ on current and lagged treatments up to lag-$k$. Instead of running a sequence of regressions for each horizon as in local projections, our estimator estimates the entire vector of lagged treatment effects in one step.

\subsection{Causal estimand}

We define the treatment effect of lag-$k$ treatment on the outcome at time $t$ as:
\begin{align}\label{eq:tau_t}
  \tau_{t,k} = \E[Y_t(Z_{t:t-k+1},1,Z_{t-k-1:1}) - Y_t(Z_{t:t-k+1},0,Z_{t-k-1:1})],
\end{align}
where the expectation is taken over the joint distribution of $(Z_{t:t-k+1}, Z_{t-k-1:1})$, holding $Z_{t-k}$ at 1 or 0. This estimand $\tau_{t,k}$ represents the average treatment effect on the outcome at time $t$ by setting the treatment at time $t-k$ to 1 rather than 0, averaging over all other treatment assignments. We then define the average treatment effect of lag-$k$ treatment on the outcome over time as:
\begin{align}\label{eq:tau}
  \tau_k = \frac{1}{T-K} \sum_{t=K+1}^T \tau_{t,k}.
\end{align}

Let $\tau = (\tau_0,\ldots,\tau_K)^\top \in \bR^{K+1}$ be the vector of lagged treatment effects. In this section, we focus on the interpretation of $\tau$, and we establish the asymptotic behavior of $\hat{\tau}$ as an estimator of $\tau$ in Section~\ref{sec:theory}.

The lag-$k$ treatment effect $\tau_k$ in \eqref{eq:tau} does not rely on any specific model assumptions about the potential outcomes. However, its interpretation becomes more explicit under additional modeling assumptions. We consider a linear model for the potential outcome to facilitate the interpretation of $\tau_k$.

\begin{example}[Linear model]\label{ex:linear}
  Consider the linear model for the potential outcome:
  \begin{align}\label{eq:linear}
    Y_t = Y_t(z_{t:1})=\sum_{k=0}^{t-1} \beta_{t,k} z_{t-k} + \epsilon_t(z_{t:1}),
  \end{align}
  for all $t = 1,\ldots,T$ and all treatment realizations $z_{t:1} \in \{0,1\}^t$ with non-random coefficients $\{\beta_{t,k}\}_{0 \le k < t}$. Under the model \eqref{eq:linear}, the lag-$k$ treatment effect can be expressed as:
  \[
    \tau_k = \frac{1}{T-K} \sum_{t = K+1}^T \beta_{t,k} + \frac{1}{T-K} \sum_{t = K+1}^T \E[\epsilon_t(Z_{t:t-k+1},1,Z_{t-k-1:1}) - \epsilon_t(Z_{t:t-k+1},0,Z_{t-k-1:1})].
  \]
  If the treatment switch at time $t-k$ does not affect the error term $\epsilon_t$, for example, if $\epsilon_t$ does not depend on $z_{t:1}$, then $\tau_k = (T-K)^{-1} \sum_{t = K+1}^T \beta_{t,k}$. Further assuming homogeneity of lag-$k$ treatment effects over time, i.e., $\beta_{t,k} = \beta_k$ for all $t > k$, we obtain $\tau_k = \beta_k$. On the other hand, under heterogeneous lag-$k$ treatment effects over time, the estimand $\tau_k$ captures the average treatment effect. This implies that the estimand $\tau_k$ remains robust to time-varying treatment effects. 
\end{example}

The linear model in Example~\ref{ex:linear} contains several important time series models, e.g., the autoregressive model and the moving-average model. In the following two examples, we illustrate the interpretation of $\tau_k$ under these two special models.

\begin{example}[Autoregressive model]\label{ex:autoregressive}
Consider the autoregressive model of order $p$ for the potential outcome:
\[
  Y_t(z_{t:1}) = \mu_t(z_{t:1}) + \sum_{k=1}^p \phi_k Y_{t-k}(z_{t-k:1}) + \epsilon_t(z_{t:1}),
\]
for all $t = 1,\ldots,T$ and all treatment realizations $z_{t:1} \in \{0,1\}^t$. Further assume $\mu_t(z_{t:1}) = \mu_t(z_t)$ and $\epsilon_t(z_{t:1}) = \epsilon_t(z_{t})$ for all $t = 1,\ldots,T$. Then we can write the potential outcome as 
\[
  Y_t(z_{t:1}) = \sum_{k=0}^{t-1} \Psi_k \mu_{t-k}(z_{t-k}) + \sum_{k=0}^{t-1} \Psi_k \epsilon_{t-k}(z_{t-k}),
\]
where the $\Psi_k$'s are defined recursively by $\Psi_0 = 1$ and $\Psi_j = \sum_{k=1}^{\min\{p,j\}} \phi_k \Psi_{j-k}$. The coefficient $\Psi_k$ is the $k$-step impulse-response coefficient from the autoregressive model of order $p$ \citep{hamilton2020time}, which measures the effect of a one-time shock today on future outcomes. We can further write the potential outcome as
\[
  Y_t(z_{t:1}) = \sum_{k=0}^{t-1} \Psi_k (\mu_{t-k}(1) - \mu_{t-k}(0))z_{t-k} + \sum_{k=0}^{t-1} \Psi_k \mu_{t-k}(0) + \sum_{k=0}^{t-1} \Psi_k \epsilon_{t-k}(z_{t-k}).
\]
Define $\beta_{t,k} = \Psi_k (\mu_{t-k}(1) - \mu_{t-k}(0))$ to see that it is a special case of Example~\ref{ex:linear}. If the error term satisfies $\epsilon_t(1) = \epsilon_t(0)$ for all $t$, then by Example~\ref{ex:linear}, we have
\[
  \tau_k = \Psi_k \Big[\frac{1}{T-K} \sum_{t = K+1}^T (\mu_{t-k}(1) - \mu_{t-k}(0))\Big].
\]
Therefore, $\tau_k$ reflects the average treatment effect on the mean function $\mu_t$ at lag $k$ over time, multiplied by the $k$-step impulse-response coefficient from the autoregressive model of order $p$.
\end{example}

\begin{example}[Moving-average model]\label{ex:moving-average}
  Consider the moving-average model of order $q$ for the potential outcome:
  \[
    Y_t(z_{t:1}) = \mu_t(z_{t:1}) + \sum_{k=1}^q \theta_k \epsilon_{t-k}(z_{t-k:1})+ \epsilon_t(z_{t:1}),
  \]
  for $t = 1,\ldots,T$ and all treatment realizations $z_{t:1} \in \{0,1\}^t$. Further assume $\mu_t(z_{t:1}) = \mu_t(z_t)$ and $\epsilon_t(z_{t:1}) = \epsilon_t(z_{t})$ for all $t = 1,\ldots,T$. Then we can write the potential outcome as
  \[
    Y_t(z_{t:1}) = \mu_t(z_{t}) + \sum_{k=1}^q \theta_k \epsilon_{t-k}(z_{t-k})+ \epsilon_t(z_{t}).
  \]
  We can further write the potential outcome as
  \[
    Y_t(z_{t:1}) = (\mu_t(1) - \mu_t(0) + \epsilon_t(1) - \epsilon_t(0)) z_t + \sum_{k=1}^q \theta_k (\epsilon_{t-k}(1) - \epsilon_{t-k}(0)) z_{t-k} + \mu_t(0) + \epsilon_t(0) + \sum_{k=1}^q \theta_k \epsilon_{t-k}(0).
  \]
  Define $\beta_{t,0} = \mu_{t}(1) - \mu_{t}(0) + \epsilon_t(1) - \epsilon_t(0)$, $\beta_{t,k} = \theta_k (\epsilon_{t-k}(1) - \epsilon_{t-k}(0))$ for $0<k\le q$, and $\beta_{t,k}=0$ for $k >q$ to see that it is a special case of Example~\ref{ex:linear}. Then by Example~\ref{ex:linear}, we have
  \begin{align*}
    \tau_0 &= \frac{1}{T-K} \sum_{t = K+1}^T (\mu_{t}(1) - \mu_{t}(0) + \epsilon_t(1) - \epsilon_t(0)), \\
    \tau_k &= \theta_k \Big[\frac{1}{T-K} \sum_{t = K+1}^T (\epsilon_{t-k}(1) - \epsilon_{t-k}(0))\Big],~~\text{for } 0<k\le q, \\
    \tau_k &= 0,~~\text{for } k > q.
  \end{align*}
  Therefore, $\tau_0$ is the average treatment effect on the mean function $\mu_t$ and the error term $\epsilon_t$ over time. For $0<k\le q$, $\tau_k$ is the average treatment effect on the error term $\epsilon_t$ at lag $k$ over time, multiplied by $\theta_k$, which is the $k$-step coefficient of the moving-average model. Moreover, $\tau_k = 0$ for $k$ larger than the moving average order $q$.
  
\end{example}

\begin{remark}
  In general, the estimand $\tau_k$ in \eqref{eq:tau} depends on the treatment assignment probabilities $(p_1,\ldots,p_T)$ because for each $\tau_{t,k}$, the expectation is taken with respect to the joint distribution of $(Z_{t:t-k+1}, Z_{t-k-1:1})$. Hence, changing the design probabilities may change the estimand. However, under the linear model considered in Examples~\ref{ex:linear}-\ref{ex:moving-average}, the estimand $\tau_k$ is independent of the treatment assignment probabilities because the treatment effects enter linearly and the potential outcome model is additive in the treatment indicators $Z_t$'s. In this case, $\tau_k$ reflects the structural effect of lag-$k$ treatment on the outcome, independent of the randomization probabilities.
\end{remark}

\section{Theory}\label{sec:theory}

\subsection{Consistency and asymptotic normality}\label{sec:asymptotic}

In this section, we analyze the asymptotic properties of $\hat\tau$ as $T \to \infty$ \citep{bojinov2019time,liang2025randomization}, analogous to the asymptotic analysis in traditional randomized experiments by letting the sample size grow to infinity \citep{freedman2008regression, lin2013agnostic,li2017general}. By the representation of $\hat\tau$ in \eqref{eq:hat_tau_ols}, we decompose $\hat\tau$ as:
\begin{align}\label{eq:hat_tau}
  \hat{\tau} = \tau + \mW_K^{-1} [(T-K)^{-1}\mZ_K^\top \mZ_K]^{-1} [(T-K)^{-1}\mZ_K^\top (\mY_K - \mZ_K \mW_K \tau)].
\end{align}

Therefore, the key is to analyze the asymptotic behaviors of $(T-K)^{-1}\mZ_K^\top \mZ_K$ and $(T-K)^{-1}\mZ_K^\top (\mY_K - \mZ_K \mW_K \tau)$, since $\mW_K$ is a fixed diagonal matrix. We impose the following assumption on the treatment probabilities of the time-series experiment.

\begin{assumption}[Treatment probabilities condition]\label{asp:ps}
  There exist some constant $\epsilon \in (0,1/2]$ such that $\epsilon \le \inf_{t \ge 1} p_t \le \sup_{t \ge 1} p_t \le 1-\epsilon$.
\end{assumption}

Assumption~\ref{asp:ps} assumes that the treatment probabilities are uniformly bounded away from 0 and 1 for all time points, which rules out extreme treatment probabilities. This corresponds to the classic overlap assumption for causal inference with observational studies \citep{d2021overlap}, which assume the propensity score are bounded away from 0 and 1 for almost all covariates.

Then we impose weak dependence conditions on potential outcomes sequence $[Y_t = Y_t(Z_{t:1})]_{t=1}^\infty$, which rule out strong dependence of the observed outcomes on the past treatments. We consider two types of weak dependence: decaying carryover effects and $m$-dependence.
% This condition is analogous to the assumption that interference decays for large distance in network experiments \citep{leung2022causal}. We consider two common notions of weak dependence in time series analysis: $\alpha$-mixing \citep{rosenblatt1956central} and $M$-dependence \citep{hoeffding1948central}. 
Our results can be extended to more general weak dependence structures using the same proof techniques.

We first introduce the decaying carryover effects assumption, which assumes the carryover effects of the treatment decay over time.
\begin{assumption}[decaying carryover effects condition]\label{asp:ani}
Let
\[
\theta_{T,k} = \max_{k < t \le T} \E\left[\left\lvert Y_t(Z_{t:t-k+1},Z_{t-k:1}) - Y_t(Z_{t:t-k+1},Z'_{t-k:1})\right\rvert \right],
\]
where $Z'_{t-k:1}$ is an identically distributed copy of $Z_{t-k:1}$ and is independent of $Z_{t:1}$, and the expectation is taken over the joint distribution of $(Z_{t:t-k+1},Z_{t-k:1})$ and the distribution of $Z'_{t-k:1}$. Assume that
\begin{enumerate}[label=(\roman*)]
  \item\label{asp:ani:i} $\sum_{k=1}^T \theta_{T,k}/T \to 0$ as $T \to \infty$.
  \item\label{asp:ani:ii} $\theta_{T,k} = O(k^{-1-\delta})$ for some $\delta>0$.
\end{enumerate}
\end{assumption}

% In the following, we introduce the concept of $\alpha$-mixing, which quantifies the dependence between distant observations in a time series.

% \begin{definition}\label{def:alpha}
%   The $\alpha$-mixing coefficient of the process $[Y_t = Y_t(Z_{t:1})]_{t=1}^\infty$ is defined as
%   \begin{align*}
%     \alpha(m) = \sup_{t} \{\lvert \P(A \cap B) - \P(A) \P(B)\rvert : A \in \cF_1^t, B \in \cF_{t+m}^\infty \},
%   \end{align*}
%   where $\cF_1^t$ denotes the $\sigma$-algebra generated by $(Y_1,\ldots,Y_t)$, and $\cF_{t+m}^\infty$ is the $\sigma$-algebra generated by $(Y_{t+m},\ldots,Y_\infty)$.
% \end{definition}

% The coefficient $\alpha(m)$ in Definition~\ref{def:alpha} measures the strength of dependence between observations that are at least $m$ time points apart. A smaller value of $\alpha(m)$  indicates weaker dependence. We introduce the following assumption on the decay rate of the $\alpha$-mixing coefficient.

% \begin{assumption}[$\alpha$-mixing condition]\label{asp:mixing}
% The $\alpha$-mixing coefficient of $[Y_t = Y_t(Z_{t:1})]_{t=1}^\infty$ satisfies:
% \begin{enumerate}[label=(\roman*)]
%   \item\label{asp:mixing:i} $\sum_{m=1}^T \alpha (m)/T \to 0$ as $T \to \infty$.
%   \item\label{asp:mixing:ii} $\alpha (m) = O(m^{-1-\delta})$ for some $\delta>0$.
% \end{enumerate}
% \end{assumption}

We then introduce the $m$-dependence assumption \citep{hoeffding1948central}, which restricts the influence of past treatments to some given number of time points.

\begin{assumption}[$m$-dependence condition]\label{asp:mdep}
  There exist some $m = m(T)$ such that $Y_t = Y_t(Z_{t:1}) = Y_t(Z_{t:t-m})$ for all $m < t \le T$ with 
  \begin{enumerate}[label=(\roman*)]
    \item\label{asp:mdep:i} $m/T \to 0$.
    \item\label{asp:mdep:ii} $m^4/T \to 0$.
  \end{enumerate}
\end{assumption}

Assumption~\ref{asp:ani} requires that the carryover effects of the treatment decay sufficiently fast. Similar conditions appear in the previous time-series experiments literature \citep{ni2023design, liang2025randomization}. The decaying carryover effects assumption is analogous to the approximate neighborhood interference assumption in network experiments \citep{leung2022causal}, and can be viewed as a causal analogue of mixing conditions in traditional time series analysis \citep{rosenblatt1956central, bradley2005basic}. Assumption~\ref{asp:mdep} requires that the observed outcome at time $t$ depends only on the most recent $m$ treatments. Assumptions~\ref{asp:ani} and~\ref{asp:mdep} are related but do not necessarily imply one another. Assumption~\ref{asp:ani} states that the carryover effects of the treatment decay over time, while still allowing dependence on all prior treatments. In contrast, Assumption~\ref{asp:mdep} permits arbitrary dependence on past treatments but restricts this dependence to a given number of previous time points. In both Assumptions~\ref{asp:ani} and~\ref{asp:mdep}, the first condition is weaker than the second condition, as first condition is required for consistency and the second condition is required for asymptotic normality.

\begin{remark}
  Recall the linear potential outcome model in Example~\ref{ex:linear}, and suppose the error term $\epsilon_t$ is independent of the treatment path $Z_{t:1}$, that is,
  \begin{align}\label{eq:linear_model}
    Y_t = Y_t(z_{t:1}) = \sum_{k=0}^{t-1} \beta_{t,k} z_{t-k} + \epsilon_t.
  \end{align}
  Under the model \eqref{eq:linear_model}, the quantity $\theta_{T,k}$ in the decaying carryover effects assumption (Assumption~\ref{asp:ani}) can be expressed as
  \[
    \theta_{T,k} = \max_{k < t \le T} \E\left[\left\lvert \sum_{j=k}^{t-1} \beta_{t,j} Z_{t-j} - \sum_{j=k}^{t-1} \beta_{t,j} Z'_{t-j} \right\rvert \right] \le 2 \max_{k < t \le T} \sum_{j=k}^{t-1} p_{t-j} (1-p_{t-j}) \lvert \beta_{t,j} \rvert,
  \]
  where $(Z'_1,\ldots,Z'_{t-k})$ is an independent and identically distributed copy of $(Z_1,\ldots,Z_{t-k})$ for all $k < t \le T$. This upper bound shows that the decay rate of $\theta_{T,k}$ is determined by how quickly the coefficients $\beta_{t,j}$ decrease as the lag $j$ increases. In contrast, the $m$-dependence assumption (Assumption~\ref{asp:mdep}) is equivalent to for all $k < t \le T$:
  \[
    \beta_{t,k} = 0 \quad \text{for all } m < k \le t-1.
  \]
  Thus, Assumption~\ref{asp:mdep} enforces a strict truncation of the carryover effects after $m$ lags.
\end{remark}

\begin{remark}
  The decaying carryover effects assumption (Assumption~\ref{asp:ani}) involves the quantity $\theta_{T,k}$, which depends on the distribution of treatment assignments. One might therefore be concerned with this dependence. However, $\theta_{T,k}$ is upper bounded by
  \[
    \theta_{T,k} \le \max_{k < t \le T} \max_{\substack{z_{t:t-k+1} \in \{0,1\}^{k} \\ z_{t-k:1},z'_{t-k:1} \in \{0,1\}^{t-k}}}\left\lvert Y_t(z_{t:t-k+1},z_{t-k:1}) - Y_t(z_{t:t-k+1},z'_{t-k:1})\right\rvert,
  \]
  where the maximum is taken over all possible realizations of treatment assignments $z_{t:t-k+1}, z_{t-k:1}, z'_{t-k:1}$. Consequently, we may impose a stronger assumption that depends only on the potential outcomes but not on the distribution of treatment assignments.
\end{remark}

\begin{remark}
  What distinguishes time-series experiments from network experiments is that the primary causal estimands of interest are lagged treatment effects, i.e., the impact of current and past treatments on the current outcome, and the associated structure of potential outcomes. The temporal structure enables us to regress outcomes directly on normalized treatment indicators from multiple past periods. This necessitates asymptotic analysis of regression-based estimators and HAC variance estimation in this section. The causal estimands, identification strategy, and asymptotic theory in time-series experiments are fundamentally different from and not covered by the existing network experiments literature.
\end{remark}

% Assumption~\ref{asp:mixing} requires that the observed outcome sequence $[Y_t = Y_t(Z_{t:1})]_{t=1}^\infty$ is $\alpha$-mixing and the dependence between past and future outcomes diminishes sufficiently fast. Assumption~\ref{asp:mdep} requires that the potential outcome at time $t$ depends only on the most recent $m$ treatments. Assumptions~\ref{asp:mixing} and~\ref{asp:mdep} are related but do not necessarily imply one another \citep{rosenblatt1956central,bradley2005basic}. Assumption~\ref{asp:mixing} states that the dependence of the potential outcome on past treatments diminishes over time, while still allowing dependence on all prior treatments. In contrast, Assumption~\ref{asp:mdep} permits arbitrary dependence on past treatments but restricts this dependence to a fixed number of previous time points. In both Assumptions~\ref{asp:mixing} and~\ref{asp:mdep}, the first condition is weaker than the second condition, as first condition is required for consistency and the second condition is required for asymptotic normality.

In the decomposition \eqref{eq:hat_tau}, we can write
\begin{align}
  \frac{1}{T-K} \mZ_K^\top (\mY_K - \mZ_K \mW_K \tau) = \frac{1}{T-K} \sum_{t=K+1}^T \tZ_{t:t-K} (Y_t - \tZ_{t:t-K}^\top \mW_K \tau).
\end{align} 
Consider the sequence $[\tZ_{t:t-K} (Y_t - \tZ_{t:t-K}^\top \mW_K \tau)]_{t=K+1}^\infty$. Under Assumption~\ref{asp:ani}, for a fixed $K$, the carryover effects of the sequence decay with the same asymptotic order as $\theta_{T,k}$. Under Assumption~\ref{asp:mdep}, the sequence is $(K \vee m)$-dependent. Therefore, we can establish the consistency and apply the central limit theorem (CLT) for the weak dependence sequence \citep{chen2004normal, chandrasekhar2023general}.

To ensure well-behaved asymptotic properties of $\hat\tau$, we impose the following standard moment conditions on the potential outcomes.

\begin{assumption}[Moment condition]\label{asp:moment}
  The potential outcomes $Y_t(z_{t:1})$ are uniformly bounded for all $t$ and $z_{t:1} \in \{0,1\}^t$, i.e., $\sup_{t\ge 1} \sup_{z_{t:1} \in \{0,1\}^t} \lvert Y_t(z_{t:1}) \rvert < \infty$. Additionally, $\lVert \tau \rVert_1$ is uniformly bounded for all $T$, i.e., $\sup_{T \ge 1} \lVert \tau \rVert_1 < \infty$.
\end{assumption}

In Assumption~\ref{asp:moment}, if we further assume $K$ is fixed, the uniformly boundedness of $\lVert \tau \rVert_1$ follows directly from the uniformly boundedness of the potential outcomes $Y_t$. However, when $K$ is diverging, the uniformly boundedness of $\lVert \tau \rVert_1$ implies that $\tau_k$ decays to zero as $k \to \infty$, which indicates that the treatment effect decays over time. It is possible to relax the uniformly boundedness assumption on $\tau$, but it will complicate the asymptotic analysis. We impose Assumption~\ref{asp:moment} for the simplicity of presentation. 

We first show the consistency of $\hat\tau$ in the following theorem.

\begin{theorem}[Consistency]\label{thm:consistency}
  Assume Assumption~\ref{asp:no-anticipation}, Assumption~\ref{asp:design}, Assumption~\ref{asp:ps}, Assumption~\ref{asp:ani}\ref{asp:ani:i} with $K \sum_{k=1}^T \theta_{T,k}/T \to 0$ or Assumption~\ref{asp:mdep}\ref{asp:mdep:i} with $K m/T \to 0$, Assumption~\ref{asp:moment}, and $K^2/T \to 0$. Then we have
  \begin{align*}
    \lVert \hat\tau - \tau \rVert_2 \stackrel{\sf p}{\longrightarrow} 0.
  \end{align*}
\end{theorem}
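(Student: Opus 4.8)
The plan is to work from the decomposition \eqref{eq:hat_tau}, writing $\hat\tau - \tau = \mW_K^{-1} A_K^{-1} b_K$ with $A_K = (T-K)^{-1}\mZ_K^\top\mZ_K$ and $b_K = (T-K)^{-1}\mZ_K^\top(\mY_K - \mZ_K\mW_K\tau)$. Since $\mW_K$ is a fixed diagonal matrix whose entries $w_k$ are harmonic means of the variances $\Var[Z_{t-k}] \in [\epsilon(1-\epsilon),\,1/4]$ (Assumption~\ref{asp:ps}), the matrix $\mW_K^{-1}$ is uniformly well-conditioned: its eigenvalues lie in $[4,\,\{\epsilon(1-\epsilon)\}^{-1}]$ for every $K$. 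By submultiplicativity of the spectral norm, $\lVert\hat\tau-\tau\rVert_2 \le \lVert\mW_K^{-1}\rVert_2\,\lVert A_K^{-1}\rVert_2\,\lVert b_K\rVert_2$, so the theorem reduces to two claims: (i) $A_K$ is invertible with high probability and $\lVert A_K^{-1}\rVert_2 = O_{\mathrm p}(1)$; and (ii) $\lVert b_K\rVert_2 \stackrel{\sf p}{\longrightarrow} 0$.

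For claim (i), I would first note that independence of the treatments (Assumption~\ref{asp:design}) with $\E[\tZ_t]=0$ and $\E[\tZ_t^2]=\Var[Z_t]^{-1}$ gives $\E[A_K]=\mW_K^{-1}$, since the off-diagonal entries have mean zero and the $k$th diagonal entry averages to $w_k^{-1}$. The entries of $A_K-\mW_K^{-1}$ are averages of bounded ($\lvert\tZ_t\rvert\le\epsilon^{-1}$), weakly correlated terms: independence forces $\cov(\tZ_{t-j}\tZ_{t-k},\tZ_{t'-j}\tZ_{t'-k})=0$ unless $t=t'$, so each entry has variance $O(1/T)$. Summing over the $(K+1)^2$ entries yields $\E\lVert A_K-\mW_K^{-1}\rVert_{\mathrm F}^2 = O(K^2/T)\to 0$, so $\lVert A_K-\mW_K^{-1}\rVert_2\stackrel{\sf p}{\longrightarrow}0$ by Markov's inequality. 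As $\lambda_{\min}(\mW_K^{-1})\ge 4$ uniformly, Weyl's inequality gives $\lVert A_K^{-1}\rVert_2\le(4-o_{\mathrm p}(1))^{-1}=O_{\mathrm p}(1)$.

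Claim (ii) is the crux, and here the normalizations pay off. A one-coordinate conditioning argument gives $\E[\tZ_{t-k}Y_t]=\tau_{t,k}$, which together with $\E[\tZ_{t-k}\tZ_{t-j}]=\Var[Z_{t-k}]^{-1}\mathbf{1}\{j=k\}$ and the definition of $w_k$ in \eqref{eq:w_k} makes $b_K$ exactly centered, $\E[b_K]=0$. It then suffices to bound $\E\lVert b_K\rVert_2^2=\sum_{k=0}^K\Var[(b_K)_k]$, where $(b_K)_k=(T-K)^{-1}\sum_{t=K+1}^T U_{t,k}$ with $U_{t,k}=\tZ_{t-k}(Y_t-\tZ_{t:t-K}^\top\mW_K\tau)$; each $U_{t,k}$ is uniformly bounded, using $\sup_T\lVert\tau\rVert_1<\infty$ (Assumption~\ref{asp:moment}) to control the subtracted term as $K$ grows. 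Under $m$-dependence (Assumption~\ref{asp:mdep}\ref{asp:mdep:i}) the sequence $(U_{t,k})_t$ is $(m\vee K)$-dependent, so $\cov(U_{t,k},U_{t',k})=0$ for $\lvert t-t'\rvert>m\vee K$ and $\Var[(b_K)_k]=O((m\vee K)/T)$. Under decaying carryover (Assumption~\ref{asp:ani}\ref{asp:ani:i}) I would, for $t>t'$, replace the treatments at times $\le t'$ inside $Y_t$ by an independent copy $Z'_{t':1}$; when $\lvert t-t'\rvert>K$ the modified term is independent of $U_{t',k}$ (the regressors $\tZ_{t:t-K}$ no longer touch $Z_{\le t'}$) and differs from $U_{t,k}$ only through $Y_t$, whose expected change is at most $\theta_{T,\lvert t-t'\rvert}$ by definition, giving $\lvert\cov(U_{t,k},U_{t',k})\rvert=O(\theta_{T,\lvert t-t'\rvert})$, while for $\lvert t-t'\rvert\le K$ the covariance is merely $O(1)$. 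Summing yields $\Var[(b_K)_k]=O(K/T)+O(T^{-1}\sum_{\ell\ge1}\theta_{T,\ell})$, so that
\begin{align*}
  \E\lVert b_K\rVert_2^2 = O(K^2/T) + O\Big(K\,\textstyle\sum_{\ell=1}^T\theta_{T,\ell}/T\Big) \quad\text{or}\quad O\big(K(m\vee K)/T\big),
\end{align*}
which vanishes under the stated rates $K^2/T\to0$ together with $K\sum_k\theta_{T,k}/T\to0$ or $Km/T\to0$.

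The main obstacle is the covariance bookkeeping in claim (ii) when $K$ diverges: one must verify that widening the regression window to $K$ lags only inflates the short-range ($\lvert t-t'\rvert\le K$) covariance budget by the benign factor $O(K)$ per coordinate, while the long-range covariances still inherit the decay of $\theta_{T,\ell}$ uniformly in $k$. Carefully accounting for the window offset in the coupling — so that the $\theta_{T,\ell}$ bound is triggered precisely when $\lvert t-t'\rvert>K$ — is what makes the two rate conditions $K^2/T\to0$ and $K\sum_k\theta_{T,k}/T\to0$ suffice. Everything else is assembling the pieces through the submultiplicative bound above.
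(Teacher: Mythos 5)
Your proposal is correct and follows essentially the same route as the paper's proof: the same decomposition $\hat\tau-\tau=\mW_K^{-1}A_K^{-1}b_K$, the same Gram-matrix concentration argument (the paper's Lemma~\ref{lemma:matrix}), and the same mean-zero identity plus covariance bookkeeping for $b_K$ under the two dependence regimes (the paper's Lemma~\ref{lemma:consistency}). The only cosmetic difference is that you couple the entire summand $\tZ_{t-k}(Y_t-\tZ_{t:t-K}^\top\mW_K\tau)$ at once, while the paper splits the covariance into three pieces and couples only the $Y_t$ factors; both yield the same $\theta_{T,\lvert t-t'\rvert}$ bound.
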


Since the dimension of $\tau$ can potentially grow with $T$, we establish the CLT for a linear projection of $\hat\tau$ rather than $\hat\tau$ itself. To characterize the asymptotic variance, we introduce a covariance matrix $\mV \in \bR^{(K+1) \times (K+1)}$ below. Since $\hat\tau$ is the estimate of $\tau$ and $\tilde\tau = \mW_K \hat\tau$ by linear transformation \eqref{eq:hat_tau_k}, $\tilde\tau$ is the estimate of $\mW_K \tau$. We define the oracle residual of OLS regression \eqref{eq:ols} at time $t$ as 
\begin{align}\label{eq:residual}
  U_t = Y_t - \tZ_{t:t-K}^\top \mW_K \tau = Y_t - \sum_{k=0}^K \tZ_{t-k} w_k \tau_k.
\end{align}
Then by \eqref{eq:hat_tau}, for $k,k'=1,\ldots,K+1$, define the $k,k'$-th element of $\mV$ as:
\begin{align}\label{eq:mV}
  [\mV]_{k, k'} = \frac{1}{T-K}\Cov\Big[\sum_{t=K+1}^T \tZ_{t-k+1} U_t, \sum_{t=K+1}^T \tZ_{t-k'+1} U_t \Big].
\end{align}

By definition of $\mV$ in \eqref{eq:mV}, $\mV$ depends on $T$ but not on the observed treatment path and potential outcomes path. This covariance matrix captures the dependence structure of the residuals across different lags. To ensure the limit distirbution of $\hat\tau$ does not degenerate, we need the following assumption on the smallest eigenvalue of $\mV$.

\begin{assumption}[Eigenvalue condition]\label{asp:eigen}
  The smallest eigenvalue of $\mV$ is uniformly bounded away from zero for all $T$, i.e., $\inf_{T \ge 1} \lambda_{\min}(\mV) > 0$, where $\lambda_{\min}(\mV)$ is the smallest eigenvalue of $\mV$.
\end{assumption}

We can then establish the CLT for the linear projection of $\hat\tau$ in the following theorem.

\begin{theorem}[Asymptotic normality]\label{thm:clt}
  Assume Assumption~\ref{asp:no-anticipation}, Assumption~\ref{asp:design}, Assumption~\ref{asp:ps}, Assumption~\ref{asp:ani}\ref{asp:ani:ii} with $K$ fixed or Assumption~\ref{asp:mdep}\ref{asp:mdep:ii} with $K^4/T \to 0$, Assumption~\ref{asp:moment}, and Assumption~\ref{asp:eigen}. Then for any $\lambda_K \in \bR^{K+1}$ satisfying $0 < \inf_{T \ge 1} (\lVert \lambda_K \rVert_2/\lVert \lambda_K \rVert_1) \le \sup_{T \ge 1} (\lVert \lambda_K \rVert_2/\lVert \lambda_K \rVert_1) < \infty$, we have
  \begin{align*}
    \sqrt{T-K} (\lambda_K^\top \mV \lambda_K)^{-1/2} \lambda_K^\top (\hat\tau - \tau) \stackrel{\sf d}{\longrightarrow} N(0,1).
  \end{align*}
\end{theorem}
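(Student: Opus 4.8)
The plan is to reduce $\lambda_K^\top(\hat\tau-\tau)$ to a scalar weighted average of a weakly dependent, mean-zero sequence and then apply a Stein-type central limit theorem. Starting from \eqref{eq:hat_tau}, set $\widehat{\mathbf G}=(T-K)^{-1}\mZ_K^\top\mZ_K$ and $\mathbf s=(T-K)^{-1}\sum_{t=K+1}^T\tZ_{t:t-K}U_t$, with $U_t$ the oracle residual \eqref{eq:residual}, so that $\hat\tau-\tau=\mW_K^{-1}\widehat{\mathbf G}^{-1}\mathbf s$ and $\mV=\Var(\sqrt{T-K}\,\mathbf s)$. The engine of the argument is that $\mathbf s$ is centered: conditioning on all treatments except $Z_{t-k}$ gives the inverse-propensity identity $\E[\tZ_{t-k}Y_t\mid (Z_s)_{s\ne t-k}]=Y_t(\ldots,Z_{t-k}{=}1,\ldots)-Y_t(\ldots,Z_{t-k}{=}0,\ldots)$, hence $\E[\tZ_{t-k}Y_t]=\tau_{t,k}$; combined with $\E[\tZ_{t-k+1}\tZ_{t-j}]=0$ for $j\ne k-1$ and the definitions \eqref{eq:w_k}, \eqref{eq:tau} of $w_k$ and $\tau_k$, this gives $\sum_{t=K+1}^T\E[\tZ_{t-k+1}U_t]=0$ for every $k$. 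The target reduces to showing $\sqrt{T-K}\,\lambda_K^\top(\hat\tau-\tau)=\lambda_K^\top\sqrt{T-K}\,\mathbf s+o_p\big((\lambda_K^\top\mV\lambda_K)^{1/2}\big)$ together with a CLT for $\lambda_K^\top\sqrt{T-K}\,\mathbf s$.

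First I would control the Gram matrix. A direct computation from $\E[\tZ_{t-k+1}^2]=\Var[Z_{t-k+1}]^{-1}$ and $\E[\tZ_{t-k+1}\tZ_{t-k'+1}]=0$ for $k\ne k'$ shows $\E[\widehat{\mathbf G}]=\mW_K^{-1}$. Bounding the spectral norm by the Frobenius norm and using that each entry of $\widehat{\mathbf G}$ is an average of bounded (Assumption~\ref{asp:ps}) weakly dependent products, I would establish $\lVert\widehat{\mathbf G}-\mW_K^{-1}\rVert_2=O_p(K/\sqrt{T})$. Since Assumption~\ref{asp:ps} keeps the $w_k$, and hence the eigenvalues of $\mW_K^{-1}$, bounded away from $0$ and $\infty$, the matrix $\widehat{\mathbf G}$ is invertible with high probability and, via $\mW_K^{-1}\widehat{\mathbf G}^{-1}-\mathbf I_{K+1}=\mW_K^{-1}\widehat{\mathbf G}^{-1}(\mW_K^{-1}-\widehat{\mathbf G})\mW_K$, one gets $\lVert\mW_K^{-1}\widehat{\mathbf G}^{-1}-\mathbf I_{K+1}\rVert_2\lesssim\lVert\widehat{\mathbf G}-\mW_K^{-1}\rVert_2=O_p(K/\sqrt{T})$. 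This is the matrix concentration already underlying Theorem~\ref{thm:consistency}.

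Next I would discard the $\mW_K^{-1}\widehat{\mathbf G}^{-1}-\mathbf I_{K+1}$ factor. Writing $\sqrt{T-K}\,\lambda_K^\top(\hat\tau-\tau)=\lambda_K^\top\sqrt{T-K}\,\mathbf s+\lambda_K^\top(\mW_K^{-1}\widehat{\mathbf G}^{-1}-\mathbf I_{K+1})\sqrt{T-K}\,\mathbf s$, Cauchy--Schwarz bounds the remainder by $\lVert\lambda_K\rVert_2\,\lVert\mW_K^{-1}\widehat{\mathbf G}^{-1}-\mathbf I_{K+1}\rVert_2\,\lVert\sqrt{T-K}\,\mathbf s\rVert_2$. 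Because $\mathbf s$ is centered, $\E\lVert\sqrt{T-K}\,\mathbf s\rVert_2^2=\tr(\mV)$, and the $m$-dependence structure gives $[\mV]_{k,k}=O(K+m)$, hence $\tr(\mV)=O(K(K+m))$ (while $\tr(\mV)=O(1)$ when $K$ is fixed under Assumption~\ref{asp:ani}), so $\lVert\sqrt{T-K}\,\mathbf s\rVert_2=O_p\big((K(K+m))^{1/2}\big)$ in the diverging regime. Combining with the Gram rate and the lower bound $(\lambda_K^\top\mV\lambda_K)^{1/2}\ge\lambda_{\min}(\mV)^{1/2}\lVert\lambda_K\rVert_2\ge c\,\lVert\lambda_K\rVert_2$ from Assumption~\ref{asp:eigen}, the remainder is $O_p\big(\lVert\lambda_K\rVert_2\,(K^3+K^2m)^{1/2}/\sqrt{T}\big)$, which is $o_p\big((\lambda_K^\top\mV\lambda_K)^{1/2}\big)$ precisely because $K^4/T\to 0$ and $m^4/T\to 0$ force $(K^3+K^2m)/T\to 0$.

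Finally I would prove the scalar CLT $(\lambda_K^\top\mV\lambda_K)^{-1/2}\lambda_K^\top\sqrt{T-K}\,\mathbf s\stackrel{\sf d}{\longrightarrow}N(0,1)$ for $g_t:=\lambda_K^\top(\tZ_{t:t-K}U_t-\E[\tZ_{t:t-K}U_t])$, whose partial sum equals $\lambda_K^\top\sum_t\tZ_{t:t-K}U_t$. Under Assumption~\ref{asp:mdep} the sequence $(g_t)$ is $(K+m)$-dependent, and under Assumption~\ref{asp:ani}\ref{asp:ani:ii} with $K$ fixed its dependence decays at rate $O(k^{-1-\delta})$; in either case I would invoke a Stein-type Berry--Esseen bound for weakly dependent sums \citep{chen2004normal,chandrasekhar2023general}. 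The norm-ratio condition on $\lambda_K$ converts $|\lambda_K^\top\tZ_{t:t-K}|\le\epsilon^{-1}\lVert\lambda_K\rVert_1$ into $|g_t|\lesssim\lVert\lambda_K\rVert_2$, Assumption~\ref{asp:moment} bounds $U_t$, and Assumption~\ref{asp:eigen} yields $\Var(\sum_t g_t)=(T-K)\lambda_K^\top\mV\lambda_K\gtrsim(T-K)\lVert\lambda_K\rVert_2^2$. The main obstacle is tracking the Berry--Esseen rate as both the projection dimension $K+1$ and the dependence range $K+m$ diverge: with dependency neighborhoods of size $O(K+m)$ and normalized summands of order $\lVert\lambda_K\rVert_2/\sqrt{T}$, the approximation error is of order $(K+m)^2/\sqrt{T}$, which vanishes exactly under $K^4/T\to 0$ and $m^4/T\to 0$. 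Slutsky's theorem then combines the reduction of the previous step with this CLT to deliver the stated result.
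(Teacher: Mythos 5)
Your proposal is correct and takes essentially the same route as the paper's own proof: the identical decomposition of $\hat\tau-\tau$ into the score term $(T-K)^{-1}\mZ_K^\top(\mY_K-\mZ_K\mW_K\tau)$ plus a Gram-matrix remainder, the same concentration of $(T-K)^{-1}\mZ_K^\top\mZ_K$ around $\mW_K^{-1}$ (the paper's Lemma~\ref{lemma:matrix}), the same inverse-propensity centering identity, and the same CLT machinery for the score (the Chen--Shao $m$-dependent Berry--Esseen bound and the covariance-based CLT of Chandrasekhar et al., exactly as in the paper's Lemma~\ref{lemma:clt}). Two small remarks: your remainder bound contains a harmless arithmetic slip --- the product of the Gram rate $K/\sqrt{T}$ with the score norm $(K(K+m))^{1/2}$ gives $(K^4+K^3m)^{1/2}/\sqrt{T}$, not $(K^3+K^2m)^{1/2}/\sqrt{T}$, though this still vanishes under $K^4/T\to 0$ and $m^4/T\to 0$ --- and your explicit per-$t$ recentering of the summands $g_t$ is in fact more careful than the paper, whose Lemma~\ref{lemma:clt} asserts $\E[U_{T,t}]=0$ termwise when in general only the sum of these means vanishes (the termwise means are exactly the bias entries $\mb_K$ appearing in Theorem~\ref{thm:var}), so your version is the one that literally satisfies the mean-zero hypothesis of the cited CLTs.
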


By the Cramér–Wold theorem, Theorem~\ref{thm:clt} directly implies a multivariate CLT for $\hat\tau$ when $K$ is fixed.

\begin{corollary}\label{cor:clt}
    Under the conditions of Theorem~\ref{thm:clt}, if $K$ is fixed, we have
    \begin{align*}
      \sqrt{T-K} \mV^{-1/2} (\hat\tau - \tau) \stackrel{\sf d}{\longrightarrow} N(0, \mI_{K+1}).
    \end{align*}
\end{corollary}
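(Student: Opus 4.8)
The plan is to deduce the multivariate statement from the univariate CLT in Theorem~\ref{thm:clt} through the Cramér–Wold device, so that no new probabilistic work is required. Write $S_T = \sqrt{T-K}\,\mV^{-1/2}(\hat\tau-\tau) \in \bR^{K+1}$, where $\mV^{-1/2}$ denotes the symmetric inverse square root of $\mV$; this is well defined because Assumption~\ref{asp:eigen} guarantees that $\mV$ is positive definite. To establish $S_T \stackrel{\sf d}{\longrightarrow} N(0,\mI_{K+1})$ it suffices, by Cramér–Wold, to verify that $a^\top S_T \stackrel{\sf d}{\longrightarrow} N(0,\lVert a\rVert_2^2)$ for every fixed $a \in \bR^{K+1}$. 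The case $a=0$ is immediate, so I would fix $a \neq 0$.

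The key step is to invoke Theorem~\ref{thm:clt} with the (in general $T$-dependent) direction $\lambda_K = \mV^{-1/2} a$. The only hypothesis of that theorem not inherited verbatim is the norm-equivalence requirement on $\lambda_K$, and this is exactly where the fixedness of $K$ enters. For any nonzero $v \in \bR^{K+1}$ one has $\lVert v\rVert_2 \le \lVert v\rVert_1 \le \sqrt{K+1}\,\lVert v\rVert_2$, hence $(K+1)^{-1/2} \le \lVert v\rVert_2/\lVert v\rVert_1 \le 1$; because $K$ is fixed these bounds are uniform in $T$, so the condition $0 < \inf_{T} (\lVert \lambda_K\rVert_2/\lVert \lambda_K\rVert_1) \le \sup_{T}(\lVert\lambda_K\rVert_2/\lVert\lambda_K\rVert_1) < \infty$ holds automatically for $\lambda_K = \mV^{-1/2}a$.

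With this choice, Theorem~\ref{thm:clt} gives $\sqrt{T-K}\,(\lambda_K^\top \mV\lambda_K)^{-1/2}\,\lambda_K^\top(\hat\tau-\tau) \stackrel{\sf d}{\longrightarrow} N(0,1)$. Using the symmetry of $\mV^{-1/2}$, the normalizing scalar simplifies to $\lambda_K^\top \mV \lambda_K = a^\top \mV^{-1/2}\mV\mV^{-1/2} a = a^\top a = \lVert a\rVert_2^2$, while $\lambda_K^\top(\hat\tau-\tau) = a^\top \mV^{-1/2}(\hat\tau-\tau)$. Substituting and multiplying through by $\lVert a\rVert_2$ yields $a^\top S_T = \sqrt{T-K}\,a^\top\mV^{-1/2}(\hat\tau-\tau) \stackrel{\sf d}{\longrightarrow} N(0,\lVert a\rVert_2^2)$, the required one-dimensional projection. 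Since $a$ was arbitrary, Cramér–Wold delivers $S_T \stackrel{\sf d}{\longrightarrow} N(0,\mI_{K+1})$.

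There is no substantive obstacle here beyond bookkeeping, as the corollary is essentially a specialization of Theorem~\ref{thm:clt}. The one point I would be careful to flag is that $\lambda_K = \mV^{-1/2}a$ depends on $T$ through $\mV$, so the norm-equivalence must be checked \emph{uniformly} in $T$; this is precisely why fixedness of $K$ is the operative hypothesis, rather than merely $K+1$ being the ambient dimension at each fixed $T$.
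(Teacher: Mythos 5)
Your proof is correct and takes exactly the paper's route: the paper also obtains the corollary as a direct consequence of Theorem~\ref{thm:clt} via the Cram\'er--Wold device, and your verification that $\lambda_K = \mV^{-1/2}a$ satisfies the norm-equivalence hypothesis (uniformly in $T$, precisely because $K$ is fixed) is just the bookkeeping the paper leaves implicit.
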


In Appendix, we establish three lemmas for proving Theorem~\ref{thm:consistency} (Consistency) and Theorem~\ref{thm:clt} (Asymptotic normality). Recall the decomposition of $\hat\tau$ in \eqref{eq:hat_tau}. The first lemma shows that $(T-K)^{-1} \mZ_K^\top \mZ_K$ asymptotically converges to $\mW_K^{-1}$ given Assumption~\ref{asp:ps}. The second and third lemmas show the asymptotic behavior of $(T-K)^{-1} \mZ_K^\top (\mY_K - \mZ_K \mW_K \tau)$ given all other assumptions.

% Then
% \begin{theorem}
%   \begin{align*}
%     \sqrt{T-K} \mV^{-1/2} (\hat\tau - \tau) \stackrel{\sf d}{\longrightarrow} N(0,\mI_K).
%   \end{align*}
% \end{theorem}

\subsection{HAC variance estimation}\label{sec:hac}

Recall the oracle residual $U_t$ defined in \eqref{eq:residual}. Let 
\begin{align}\label{eq:residual_sample}
\hat U_t = Y_t - \tZ_{t:t-K}^\top \tilde\tau
\end{align}
be the sample residual of OLS regression \eqref{eq:ols} at time $t$. Let $\mU_K = {\rm diag}(U_t:t=K+1,\ldots,T) \in \bR^{(T-K) \times (T-K)}$ and $\hat \mU_K = {\rm diag}(\hat U_t:t=K+1,\ldots,T) \in \bR^{(T-K) \times (T-K)}$ be the diagonal matrices of the oracle and sample residuals, respectively.

To account for autocorrelation, we introduce a kernel weighting matrix $\mQ_K \in \bR^{(T-K) \times (T-K)}$. Specifically, we use the Bartlett kernel \citep{bartlett1950periodogram}, defined entrywise as
\[
[\mQ_K]_{ij} =
\begin{cases}
1 - \dfrac{\lvert j-i \rvert}{L+1}, & \lvert j-i \rvert \le L, \\
0, & \lvert j-i \rvert > L,
\end{cases}
\]
for $1 \le i,j \le T-K$, where $L$ is the bandwidth parameter controlling the effective dependence range. A common choice is $L = \lfloor T^{1/4} \rfloor$ \citep{greene2003econometric}. By construction, $\mQ_K$ is positive semidefinite. The HAC estimator with the Bartlett kernel corresponds to Newey–West estimator \citep{newey1987simple}. Our results can be generalized to HAC estimators with other kernels, see \cite{andrews1991heteroskedasticity} for different choices of kernels.

To construct the HAC variance estimator for $\hat\tau$, we first define the HAC variance estimator for $\tilde\tau$ by OLS regression \eqref{eq:ols} as:
\begin{align*}
  \tilde\mV = (T-K) (\mZ_K^\top \mZ_K)^{-1} (\mZ_K^\top \hat\mU_K \mQ_K \hat\mU_K \mZ_K) (\mZ_K^\top \mZ_K)^{-1},
\end{align*}
and then define the HAC variance estimator for $\hat\tau$ by linear transformation \eqref{eq:hat_tau_k} as:
\begin{align}\label{eq:hac_var}
  \hat\mV = \mW_K^{-1} \tilde\mV \mW_K^{-1} = (T-K) \mW_K^{-1} (\mZ_K^\top \mZ_K)^{-1} (\mZ_K^\top \hat\mU_K \mQ_K \hat\mU_K \mZ_K) (\mZ_K^\top \mZ_K)^{-1} \mW_K^{-1}.
\end{align}
Intuitively, the HAC variance estimator corrects the usual OLS variance formula by allowing residuals at nearby time points to be correlated. The kernel matrix $\mQ_K$ assigns larger weights to correlations between residuals that are closer in time and gradually downweights correlations as the lag length increases, with the bandwidth $L$ determining how far into the past these correlations are taken into account. This adjustment ensures that the estimated variance remains consistent under general forms of heteroskedasticity and autocorrelation in the sampling-based inference literature. We borrow their intuition and analyze $\hat\mV$ from the design-based framework.

Recall the lag-$k$ treatment effect defined in \eqref{eq:tau_t}. Define $\mb_{K} \in \bR^{(T-K) \times (K+1)}$ with $$[\mb_{K}]_{t-K,k+1} = \tau_{t,k} - \frac{w_k}{p_{t-K}(1-p_{t-K})}\tau_k,$$ for $t=K+1,\ldots,T$ and $k=0,\ldots,K$, which measures the heterogeneity of the treatment effect and treatment probability across time points. We then define the bias term as
\[
\mB_K = \mb_K^\top \mQ_K \mb_K.
\]
By construction, $\mB_K$ is positive semidefinite because $\mQ_K$ is positive semidefinite under the Bartlett kernel. Intuitively, $\mB_K$ captures the bias component in the HAC variance estimator that arises when treatment probabilities vary over time or when treatment effects are heterogeneous across time points. The following theorem establishes that $\hat\mV$ is a conservative estimator of $\mV$.
\begin{theorem}\label{thm:var}Assume Assumption~\ref{asp:no-anticipation}, Assumption~\ref{asp:design}, Assumption~\ref{asp:ps}, Assumption~\ref{asp:ani}\ref{asp:ani:ii} with $K$ fixed, $L \to \infty$ or Assumption~\ref{asp:mdep}\ref{asp:mdep:ii} with $K^4/T \to 0, Km^2 / L \to 0$, Assumption~\ref{asp:moment} and $K^4 L^2/T \to 0$. Then
\begin{align*}
  \lVert \hat\mV - \mV - \mB_K \rVert_{\rm F} \stackrel{\sf p}{\longrightarrow} 0.
\end{align*}
If the treatment probabilities are constant across time ($p_t = p$ for all $t$) and the treatment effects are homogeneous over time ($\tau_{t,k} = \tau_k$ for all $t$ and $k$), then $\mb_{K} = 0$, which implies $\mB_K = 0$. In this case, the HAC variance estimator $\hat \mV$ is consistent for $\mV$, i.e., $\lVert \hat\mV - \mV\rVert_{\rm F} \stackrel{\sf p}{\longrightarrow} 0$.
\end{theorem}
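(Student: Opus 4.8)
The plan is to peel off the outer Gram-matrix factors in \eqref{eq:hac_var}, reduce everything to a single kernel-weighted ``meat'' matrix, and then (i) replace the sample residuals by oracle ones, (ii) split the oracle meat into a deterministic part that is exactly $\mB_K$ plus stochastic parts, and (iii) show the stochastic parts converge to $\mV$. Write $\widehat{\mathbf S} = (T-K)^{-1}\mZ_K^\top\hat\mU_K\mQ_K\hat\mU_K\mZ_K$, so that \eqref{eq:hac_var} reads $\hat\mV = \mathbf C\,\widehat{\mathbf S}\,\mathbf C^\top$ with $\mathbf C = \mW_K^{-1}[(T-K)^{-1}\mZ_K^\top\mZ_K]^{-1}$. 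The appendix lemma giving $(T-K)^{-1}\mZ_K^\top\mZ_K\stackrel{\sf p}{\longrightarrow}\mW_K^{-1}$ yields $\mathbf C\stackrel{\sf p}{\longrightarrow}\mI$; in fact $\E[(T-K)^{-1}\mZ_K^\top\mZ_K]=\mW_K^{-1}$ exactly by \eqref{eq:w_k}, so the entrywise fluctuations are $O_{\sf p}(T^{-1/2})$ and $\|\mathbf C-\mI\|_2 = O_{\sf p}(K/\sqrt T)$. Because $\widehat{\mathbf S}$ has $(K+1)^2$ entries each of size $O(L)$ (each is a band-limited average over $O(L)$ nonzero Bartlett weights), $\|\widehat{\mathbf S}\|_{\rm F}=O_{\sf p}(KL)$, and expanding $\mathbf C\widehat{\mathbf S}\mathbf C^\top-\widehat{\mathbf S}$ gives $\|\hat\mV-\widehat{\mathbf S}\|_{\rm F}\le(2\|\mathbf C-\mI\|_2+\|\mathbf C-\mI\|_2^2)\|\widehat{\mathbf S}\|_{\rm F}=O_{\sf p}(K^2L/\sqrt T)$. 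This is exactly $O_{\sf p}((K^4L^2/T)^{1/2})$, so the stated rate $K^4L^2/T\to0$ is what lets us discard the Gram factors despite $\widehat{\mathbf S}$ growing with $L$.

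It then remains to analyze $\widehat{\mathbf S}$. First I would pass to the oracle meat $\mathbf S=(T-K)^{-1}\mZ_K^\top\mU_K\mQ_K\mU_K\mZ_K$: since $\tilde\tau=\mW_K\hat\tau$, the residual gap is $\hat U_t-U_t=-\tZ_{t:t-K}^\top\mW_K(\hat\tau-\tau)$, which is uniformly small by the consistency of Theorem~\ref{thm:consistency}; expanding $\hat U_t\hat U_s-U_tU_s$, summing against the Bartlett weights, and applying Cauchy--Schwarz bounds $\|\widehat{\mathbf S}-\mathbf S\|_{\rm F}$ by $\|\hat\tau-\tau\|_2$ times bounded averaged second moments (controlled by Assumptions~\ref{asp:ps} and~\ref{asp:moment}), with the attendant $K,L$ factors again absorbed by $K^4L^2/T\to0$. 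Next I would split each regressor--residual product about its mean, $\tZ_{t-k}U_t=\mu_{t,k}+\xi_{t,k}$ with $\mu_{t,k}=\E[\tZ_{t-k}U_t]$ and $\E[\xi_{t,k}]=0$. The deterministic$\times$deterministic piece of $\mathbf S$ is the kernel-weighted outer product of the $\mu_{t,k}$'s, which is \emph{exactly} $\mB_K$: using the independence of the $Z_t$'s and $\E[\tZ_{t-k}\tZ_{t-j}]=\Var[Z_{t-k}]^{-1}\mathbbm{1}\{k=j\}$ one computes from \eqref{eq:residual} first $\E[\tZ_{t-k}Y_t]=\tau_{t,k}$ and then $\mu_{t,k}=\tau_{t,k}-w_k\tau_k\Var[Z_{t-k}]^{-1}=[\mb_K]_{t-K,k+1}$.

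The remaining task is to show that the other three pieces of $\mathbf S$ --- the two deterministic$\times$stochastic cross terms and the stochastic$\times$stochastic term --- converge in probability to $\mV$. The stochastic$\times$stochastic term has expectation $(T-K)^{-1}\sum_{t,s}[\mQ_K]_{t-K,s-K}\Cov[\tZ_{t-k}U_t,\tZ_{s-k'}U_s]$, which converges to $[\mV]_{k+1,k'+1}$ of \eqref{eq:mV}: the Bartlett weight $1-|t-s|/(L+1)$ tends to $1$ at each fixed lag while the covariances decay in $|t-s|$ under the weak dependence (Assumption~\ref{asp:ani}\ref{asp:ani:ii} with $K$ fixed, or $(K\vee m)$-dependence from Assumption~\ref{asp:mdep}\ref{asp:mdep:ii}, where $Km^2/L\to0$ forces the weights uniformly close to $1$ across the whole dependence range), and its variance vanishes under $K^4L^2/T\to0$. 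The two cross terms have mean zero, and showing they are $o_{\sf p}(1)$ is the crux: each couples the mean-zero $\xi_{s,k'}$ against a band sum of the $O(1)$ deterministic $\mu_{t,k}$'s of size $O(L)$, so one must verify that its variance --- governed by how the bandwidth $L$, the dimension $K$, and the dependence range $m$ combine in the fourth-moment bookkeeping --- is killed by the joint rate conditions. This variance control is the main obstacle; everything else is bookkeeping. Assembling the three stages gives $\|\hat\mV-\mV-\mB_K\|_{\rm F}\stackrel{\sf p}{\longrightarrow}0$. Finally, when $p_t\equiv p$ and $\tau_{t,k}\equiv\tau_k$, the harmonic mean \eqref{eq:w_k} gives $w_k=p(1-p)=\Var[Z_{t-k}]$, so $\mu_{t,k}=\tau_k-\tau_k=0$, whence $\mb_K=0$, $\mB_K=0$, and $\hat\mV$ is consistent for $\mV$.
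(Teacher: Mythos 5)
Your outline follows the paper's own route almost exactly: discard the Gram factors, replace $\hat U_t$ by the oracle residual $U_t$, center the products $\tZ_{t-k}U_t$ about their means, identify the deterministic$\times$deterministic piece with $\mB_K$ via $\E[\mU_K\mZ_K]=\mb_K$, and show the centered quadratic form has expectation converging to $\mV$ with vanishing fluctuation. Your identity $\mu_{t,k}=\tau_{t,k}-w_k\tau_k/\Var[Z_{t-k}]$ is the correct one (it matches the appendix and the continuous-treatment analogue; the $p_{t-K}$ in the theorem statement is evidently a typo for $p_{t-k}$). In one respect you are more careful than the paper: the paper simply asserts $\hat\mV=(T-K)^{-1}\mZ_K^\top\hat\mU_K\mQ_K\hat\mU_K\mZ_K+o_\P(1)$ after Lemma~\ref{lemma:matrix}, whereas your quantitative argument — $\lVert\mathbf{C}-\mI_{K+1}\rVert_2=O_\P(K/\sqrt{T})$ against $\lVert\widehat{\mathbf{S}}\rVert_\F=O_\P(KL)$, giving $O_\P(K^2L/\sqrt{T})$ — explains why the meat matrix's growth in $L$ does not spoil this step and why $K^4L^2/T\to0$ is the natural condition.

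Two gaps remain. First, in your step (i) you invoke Theorem~\ref{thm:consistency}, which yields only $\lVert\hat\tau-\tau\rVert_2=o_\P(1)$ with no rate; since the Bartlett band contributes a factor of order $L$ and the Frobenius norm a further factor of order $K$, both diverging, bare consistency cannot close this step. The paper instead uses the rate $\lVert\hat\tau-\tau\rVert_1=O_\P(K/\sqrt{T-K})$ (extracted from Theorem~\ref{thm:clt}, or equivalently from the variance bounds behind Lemma~\ref{lemma:consistency}), which gives $O_\P(K^2L/\sqrt{T-K})=o_\P(1)$ under $K^4L^2/T\to0$. Second, the variance controls you defer as ``the crux'' are precisely the substance of the theorem and must be executed, though they are not an obstacle: for the cross term, each entry equals $\frac{1}{T-K}\sum_{t}c_t\bigl(\tZ_{t-k+1}U_t-\E[\tZ_{t-k+1}U_t]\bigr)$ with deterministic coefficients $c_t=O(L)$, so its variance is $O(L^2/(T-K))$ and the Frobenius norm is $O_\P(KL/\sqrt{T-K})$; for the centered quadratic minus its expectation the analogous entrywise variance bound is $O(L/(T-K))$; and the kernel-bias term $\E[\text{quadratic}]-\mV$ is where the remaining conditions enter, giving entrywise bounds of order $K^2/L+1/L^{\delta}$ (carryover case, vanishing since $K$ is fixed and $L\to\infty$) or $m^2/L$ ($m$-dependence case, vanishing since $Km^2/L\to0$). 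Once these second-moment computations are written out, your plan coincides with the paper's proof.
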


Theorem~\ref{thm:var} generalizes the classical variance results of \citet{neyman1923application} for completely randomized experiments. In the special case with constant treatment probabilities and homogeneous treatment effects, the additional term $\mB_K$ vanishes, and the HAC variance estimator consistently estimates the true variance. This mirrors \cite{neyman1923application}’s result that the randomization-based variance estimator is consistent under constant effects. When treatment effects or assignment probabilities vary over time, however, the extra positive semidefinite term $\mB_K$ inflates the variance, making the HAC estimator conservative, just as the randomization-based variance estimator is conservative without the constant effects assumption. In this sense, our framework extends Neyman’s design-based analysis from randomized experiments to time-series experiments.

\begin{remark}
  Equivalently, from \eqref{eq:ols} and \eqref{eq:hat_tau_k}, $\hat\tau$ can be obtained directly from the following OLS regression without an intercept:
  \begin{align}\label{eq:ols_w}
    \textbf{lm}(Y_t \sim w_0 \tZ_t + w_1 \tZ_{t-1} + \cdots + w_K \tZ_{t-K}).
  \end{align}
  In this formulation, the outcome vector remains $\mY_K$, but the regressor matrix is rescaled to $\mZ_K \mW_K$. The only difference lies in whether the scaling is applied before or after the OLS regression. Thus, the regression in \eqref{eq:ols_w} yields the same estimator as in \eqref{eq:ols} after applying the linear transformation \eqref{eq:hat_tau_k}, and both procedures lead to the same HAC variance estimator. When the $p_t$'s are constant across time $t = 1,\ldots,T$, all treatments share the same normalization with the same centering and scaling. In this case, there is no need for the normalization and the OLS fit is equivalent to regressing the observed outcomes on unnormalized treatment indicators with intercepts: $\textbf{lm}(Y_t \sim 1 + Z_t + Z_{t-1} + \ldots + Z_{t-K})$ using samples with $t = K+1,\ldots,T$.
  %  Since the two OLS formulations \eqref{eq:ols} and \eqref{eq:ols_w} are equivalent, we adopt the form in \eqref{eq:ols_w} in Section~\ref{sec:theory} for inference, as it provides a more convenient representation for asymptotic analysis.
\end{remark}

\subsection{Fisher randomization test}
The Fisher randomization test (FRT) is a powerful tool for analyzing randomized experiments \citep{imbens2015causal,ding2024first}. We can consider two types of hypothesis tests: the sharp null hypothesis and the weak null hypothesis. The sharp null hypothesis states that the potential outcomes are identical under all treatment assignments: 
\[
  H_0: Y_t(z_{t:1}) = Y_t(z'_{t:1}) \text{ for any } z_{t:1},z'_{t:1} \in \{0,1\}^t \text{ and for all } t=\{1,\ldots,T\},
\]
which allows for an exact FRT. The weak null hypothesis tests whether the treatment effects are zero only for a subset of lags:
\[
  H_0: \tau_k = 0 \text{ for all } k \in \cS,
\]
where $\cS \subset \{0,\ldots,K\}$ denotes an arbitrary set of lags we wish to test, which leads to an asymptotic test by using the asymptotic distribution of $\hat\tau$ (Theorem~\ref{thm:clt}) and the HAC variance estimator $\hat\mV$ (Theorem~\ref{thm:var}). One can run FRTs with studentized statistics based on our regression estimator $\hat\tau$ together with the HAC variance estimator $\hat\mV$ \citep{wu2021randomization}. Studentization works here because a conservative variance estimator $\hat\mV$ is available in general, which is also consistent under the sharp null hypothesis. This highlights the value of our asymptotic analysis: even when the goal is to perform a FRT, having a conservative variance estimator is crucial to apply studentization. 

Importantly, the definition of the weak null hypothesis differs across frameworks. In \citet{bojinov2019time}, the weak null hypothesis is formulated in terms of their path-dependent causal estimands and focuses on a single lag, reflecting the estimands defined in their design-based analysis. In our setting, the estimands are multiple lagged treatment effects, so the weak null hypothesis naturally takes a different form, where we test multiple lag-specific effects simultaneously. Our contribution is to show how such regression-based estimators can be used to construct asymptotic tests for lag-specific treatment effects, providing a complementary perspective that is closely aligned with standard regression practice.

\subsection{Relationship between full OLS, marginal OLS and weighted least squares}\label{sec:full_marginal}

Although the OLS procedure in \eqref{eq:ols} involves regressing $Y_t$ on the full set of transformed treatment indicators $\tZ_t,\ldots,\tZ_{t-K}$, all of our results remain valid when regressing $Y_t$ on a subset of these transformed treatment indicators. Suppose we are interested in estimating the treatment effect at lag $k$. There are two natural approaches. One approach is the full OLS in \eqref{eq:ols}, where we regress $Y_t$ on $\tZ_t,\ldots,\tZ_{t-K}$ to obtain the estimate $\tilde\tau_k$, and then apply the linear transformation \eqref{eq:hat_tau_k} to obtain the estimate $\hat\tau_k = w_k^{-1} \tilde\tau_k$. The other approach is the marginal OLS, where we regress $Y_t$ on $\tZ_{t-k}$, i.e.,
\begin{align}\label{eq:marginal_ols}
  \textbf{lm}(Y_t \sim \tZ_{t-k}),
\end{align}
to obtain the estimate $\tilde\tau_{{\rm marginal},k}$ and then apply the linear transformation \eqref{eq:hat_tau_k} to obtain the estimate $\hat\tau_{{\rm marginal},k} = w_k^{-1} \tilde\tau_{{\rm marginal},k}$.
Both approaches yield consistent and asymptotic normal estimators ($\hat\tau_k$ and $\hat\tau_{{\rm marginal},k}$) for $\tau_k$ with conservative variance estimators by HAC variance estimation. The key differences lie in their efficiency and variance estimation. First, the two approaches have different asymptotic variances. Second, the two approaches have different variance estimators, since the variance estimator in the marginal OLS is based on noisier residuals, which inflates the standard errors. As a direct result from Theorem~\ref{thm:var}, both approaches share the same bias term $\mB_K$ in HAC variance estimation. Thus, comparing the HAC variance estimators $\hat\mV$ reduces to comparing the asymptotic variances $\mV$, since the two approaches have the same level of conservativeness. See Appendix for more technical details on this discussion. Therefore, we focus on comparing the asymptotic variances below.

By Theorem~\ref{thm:clt}, the asymptotic variance from the full OLS \eqref{eq:ols} is $(T-K)^{-1} \Var[\sum_{t=K+1}^T \tZ_{t-k} (Y_t - \sum_{\ell=0}^K \tZ_{t-\ell} w_\ell \tau_\ell)],$ while the asymptotic variance from the marginal OLS \eqref{eq:marginal_ols} is $(T-K)^{-1} \Var[\sum_{t=K+1}^T \tZ_{t-k} (Y_t - \tZ_{t-k} w_k \tau_k)]$. Intuitively, the asymptotic variance depends on the residual $Y_t - \sum_{\ell=0}^K \tZ_{t-\ell} w_\ell \tau_\ell$ for full OLS and $Y_t - \tZ_{t-k} w_k \tau_k$ for marginal OLS. By adding more regressors in full OLS, the residual should have lower variance, which imples the asymptotic variance will be smaller. To compare these asymptotic variances explicitly, we consider the linear model on the potential outcomes and establish the following result.
\begin{proposition}\label{prop:var}
  Consider a linear model with homogeneous effects and nonrandom error: $Y_t = Y_t(z_{t:1})=\sum_{k=0}^{t-1} \beta_k z_{t-k} + \epsilon_t$. Assume a constant treatment probability $p_t=p$ for all $t$. Then, the asymptotic variance of $\hat\tau_k$ from the full OLS is
  \begin{align*}
    \frac{1}{T-K} \sum_{t=K+1}^T \sum_{\ell=K+1}^{t-1} \beta_\ell^2 + \frac{1}{T-K} \frac{1}{p(1-p)}\sum_{t=K+1}^T \Big(p\sum_{k=0}^{t-1} \beta_k + \epsilon_t \Big)^2,
  \end{align*}
  while the asymptotic variance of $\hat\tau_{{\rm marginal},k}$ from the marginal OLS is
  \begin{align*}
    \frac{1}{T-K} \sum_{t=K+1}^T \sum_{\ell=0,\ell \neq k}^{t-1} \beta_\ell^2 + \frac{1}{T-K} \frac{1}{p(1-p)}\sum_{t=K+1}^T \Big(p\sum_{k=0}^{t-1} \beta_k + \epsilon_t \Big)^2.
  \end{align*}
\end{proposition}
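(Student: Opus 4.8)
The plan is to start from the two asymptotic-variance expressions already recorded immediately before the statement: by Theorem~\ref{thm:clt} the asymptotic variance of the full-OLS estimator $\hat\tau_k$ equals $(T-K)^{-1}\Var[\sum_{t=K+1}^T \tZ_{t-k}U_t]$ with oracle residual $U_t = Y_t - \sum_{\ell=0}^K \tZ_{t-\ell}w_\ell\tau_\ell$ as in \eqref{eq:residual}, while the marginal analogue replaces $U_t$ by $Y_t - \tZ_{t-k}w_k\tau_k$. So the proposition reduces to evaluating these two variances under the specialization $p_t\equiv p$ and the homogeneous linear model. The first preparatory step is to collect the simplifications this produces: since $\Var[Z_{t-k}]=p(1-p)$ for every $t$, the harmonic-mean weights \eqref{eq:w_k} are constant, $w_k=p(1-p)$; and since the error is nonrandom with homogeneous effects, Example~\ref{ex:linear} gives $\tau_k=\beta_k$, hence $w_k\tau_k=p(1-p)\beta_k$.

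The second step is to re-express the outcome through the centered indicators. Substituting $Z_{t-j}=p(1-p)\tZ_{t-j}+p$ into the model gives $Y_t=p(1-p)\sum_{j=0}^{t-1}\beta_j\tZ_{t-j}+c_t$, where $c_t=p\sum_{j=0}^{t-1}\beta_j+\epsilon_t$ is nonrandom. Plugging this into the two residuals, the regressed-out lags cancel exactly: the full-OLS residual becomes $U_t=p(1-p)\sum_{j=K+1}^{t-1}\beta_j\tZ_{t-j}+c_t$, keeping only lags beyond $K$, whereas the marginal residual becomes $p(1-p)\sum_{j\ne k}\beta_j\tZ_{t-j}+c_t$, keeping every lag except $k$. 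Multiplying each residual by $\tZ_{t-k}$ and summing over $t$ splits the sum into a ``signal'' part $A_k$ (the double sum of products $\tZ_{t-k}\tZ_{t-j}$) and a ``noise'' part $B_k=\sum_t c_t\tZ_{t-k}$; since $\E[\tZ_{t-k}]=0$ the whole sum has mean zero, so its variance is $\E[A_k^2]+2\E[A_kB_k]+\E[B_k^2]$.

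The third step is the moment bookkeeping, which I expect to be the main obstacle. The noise and cross terms are routine: by independence and $\E[\tZ^2]=1/[p(1-p)]$ one gets $\E[B_k^2]=(p(1-p))^{-1}\sum_t c_t^2$, which is the second summand in both formulas, and $\E[A_kB_k]=0$ because each term is a product of three independent centered indicators whose time indices cannot all coincide, so at least one factor appears to the first power and contributes $\E[\tZ]=0$. The delicate piece is $\E[A_k^2]$, a sum over $(t,t',j,j')$ of fourth-order products $\E[\tZ_{t-k}\tZ_{t-j}\tZ_{t'-k}\tZ_{t'-j'}]$, which by independence vanishes unless the four time indices coincide in pairs. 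The ``diagonal'' coincidence $t=t',\,j=j'$ always survives and, using $\E[\tZ^2]^2=1/[p(1-p)]^2$ together with $w_k^2=[p(1-p)]^2$, contributes $\sum_t\sum_j\beta_j^2$ over the relevant lag range: $\sum_{\ell=K+1}^{t-1}\beta_\ell^2$ for the full OLS and $\sum_{\ell\ne k}\beta_\ell^2$ for the marginal OLS, matching the stated first terms. The structural point to verify is that the only competing pairing forces the reflected lag $j'=2k-j$; in the full-OLS residual this is impossible, since $j,j'\ge K+1>k$ cannot satisfy $j+j'=2k$, so the diagonal term is the entire answer and the full-OLS formula is clean.

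Finally I would combine $\E[A_k^2]$ with $\E[B_k^2]$, divide by $T-K$, and use the uniform boundedness of the potential outcomes and of $\lVert\tau\rVert_1=\sum_k|\beta_k|$ from Assumption~\ref{asp:moment} to control the $O(K)$ edge contributions coming from the $t$-dependent upper limits $t-1$ and to justify passing to the stated limiting expressions. The point demanding the most care is precisely the fourth-moment enumeration for the marginal estimator: because its residual retains \emph{all} lags except $k$, the reflected coincidence $j'=2k-j$ is no longer excluded by the lag restriction, so verifying exactly which coincidences contribute, and establishing that the diagonal term is the dominant one relative to these cross-time contributions, is the crux of the argument and the place where the contrast between the full and marginal regressions is pinned down.
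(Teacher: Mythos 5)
Your overall route is the same as the paper's: rewrite the model in centered indicators as $Y_t = p(1-p)\sum_{j=0}^{t-1}\beta_j\tZ_{t-j} + c_t$ with $c_t = p\sum_{j=0}^{t-1}\beta_j + \epsilon_t$ nonrandom, note $w_k = p(1-p)$ and $\tau_k=\beta_k$, and evaluate $(T-K)^{-1}\Var[\sum_{t=K+1}^T \tZ_{t-k}R_t]$ (with $R_t$ the respective oracle residual) by enumerating second and fourth moments of independent centered indicators. Your treatment of the noise term $\E[B_k^2]$, the vanishing signal--noise cross term, the diagonal contribution, and the full-OLS case (where the reflected pairing is infeasible because every residual lag exceeds $K\ge k$, so all cross-$t$ covariances vanish exactly) is correct and reproduces the paper's first display.

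The gap is in the marginal half, and it is fatal to the plan as written. You correctly observe that the reflected coincidence $j=k+d$, $j'=k-d$ is admissible there, but you propose to finish by showing the diagonal term dominates these cross-time contributions. That cannot be done: for $d=\lvert t-t'\rvert\in\{1,\ldots,k\}$ one computes exactly $\Cov[\tZ_{t-k}R_t,\,\tZ_{t'-k}R_{t'}] = p^2(1-p)^2\,\beta_{k+d}\beta_{k-d}\,\E[\tZ_{t-k}^2]\,\E[\tZ_{t'-k}^2] = \beta_{k+d}\beta_{k-d}$, so after dividing by $T-K$ these pairs contribute approximately $2\sum_{d=1}^{k}\beta_{k+d}\beta_{k-d}$, an $O(1)$ quantity of the same order as the diagonal and nonzero in general (for $k=1$ it equals $2\beta_0\beta_2$). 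Hence the diagonal does not dominate, and your argument, carried out honestly, yields the stated marginal formula \emph{plus} this extra cross term; the stated formula is recovered only when the products $\beta_{k+d}\beta_{k-d}$ all vanish (e.g.\ $k=0$). For comparison, the paper's own proof disposes of this step by equating the variance of the sum with the sum of per-$t$ variances for \emph{both} regressions; that identity is precisely what your enumeration verifies for the full OLS, but it is asserted without justification for the marginal OLS, where the reflected pairings you identified do contribute. In short, your instinct that this enumeration is the crux is right, but the resolution is not diagonal dominance: it is either an additional cross term in the marginal variance or an additional hypothesis that kills it.
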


By Proposition~\ref{prop:var}, under the linear model with homogeneous effects and nonrandom error, full OLS has a smaller asymptotic variance than marginal OLS. The variance difference is $(T-K)^{-1} \sum_{t=K+1}^T \sum_{\ell = 0,\ell \neq k}^K \beta_\ell^2 = \sum_{\ell = 0,\ell \neq k}^K \beta_\ell^2$ which is typically large, as smaller lags often have strong effects. Furthermore, full OLS achieves lower variance unless $\beta_\ell = 0$ for all $\ell > K$. However, including too many lags may lead to high model complexity. Therefore, it is crucial for practitioners to carefully choose the number of lags in the model. This selection can be guided by standard model selection techniques from the linear regression literature.

\begin{remark}
  \cite{gao2023causal} consider the network experiment setting and estimate the average of potential outcomes separately for each group. Compared with the weighted least squares (WLS) approach in \cite{gao2023causal}, our method differs in two key ways. First, we estimate the causal effect directly rather than estimating the average potential outcomes. Second, we normalize the treatment indicator directly, instead of applying weights to each observation with an unnormalized treatment indicator. Our OLS approach allows us to estimate treatment effects across multiple lags simultaneously. Since treatment indicators at different time points should receive different weights due to variations in propensity scores, and each observation contains treatment indicators from multiple time points, assigning a single weight per observation is not feasible. However, the marginal OLS estimator \eqref{eq:marginal_ols} is asymptotically equivalent with a version of WLS analogous to \cite{gao2023causal}. If further the propensity scores are constant, the two estimators coincide. See Appendix for more details on the equivalence between two methods.
\end{remark}

\subsection{Comparisons with the existing estimands}\label{sec:comparisons}

\cite{bojinov2019time}  defined the general treatment effect of lag $k$ at time $t$ as:
\begin{align*}
  \sum_{\substack{z_{t:t-k+1} \in \{0,1\}^k \\ z_{t-k-1:1} \in \{0,1\}^{t-k-1}}} a_{z_{t:t-k+1}, z_{t-k-1:1}}[Y_t(z_{t:t-k+1},1,z_{t-k-1:1}) - Y_t(z_{t:t-k+1},0,z_{t-k-1:1})],
\end{align*}
for some non-stochastic weights satisfying $\sum_{z_{t:t-k+1}, z_{t-k-1:1}}a_{z_{t:t-k+1}, z_{t-k-1:1}} = 1$ and $a_{z_{t:t-k+1}, z_{t-k-1:1}} \ge 0$. Then the general average treatment effect is defined as the corresponding average over all time points, analogous to the definition of $\tau_k$ based on $\tau_{t,k}$.

By the design under Assumption~\ref{asp:design}, we can write $\tau_{t,k}$ in \eqref{eq:tau_t} explicitly as
\begin{align*}
  \tau_{t,k} = \sum_{\substack{z_{t:t-k+1} \in \{0,1\}^k \\ z_{t-k-1:1} \in \{0,1\}^{t-k-1}}} a_{z_{t:t-k+1}, z_{t-k-1:1}}[Y_t(z_{t:t-k+1},1,z_{t-k-1:1}) - Y_t(z_{t:t-k+1},0,z_{t-k-1:1})],
\end{align*}
where
\[
  a_{z_{t:t-k+1}, z_{t-k-1:1}} = \prod_{s=1, s \neq t-k}^t p_s^{z_s} (1-p_s)^{1-z_s}.
\]
Therefore, $\tau_{t,k}$ is a special case of the general treatment effect in \cite{bojinov2019time}. However, \cite{bojinov2019time} argued that the general average treatment effect cannot be estimated without strong assumptions, so they changed the estimand to the causal effect as a function of the observed treatment path, which is random (See Section 3.2 in \cite{bojinov2019time}). In contrast, we focus on a special class of the general treatment effects, which does not depend on the observed treatment path, and can be estimated by OLS under time-series experiments with independent treatments.

In general, we can adjust the weights in $\tZ_t$ to estimate a broad class of average treatment effects. Consider a new normalization: $\tZ_{h,t} = h_t(p_1,\ldots,p_t)(Z_t - \E[Z_t])/\Var[Z_t]$ for some function $h_t:\{0,1\}^t \to \bR$ at each time $t$. Define the weight as $w_{h,k} = [(T-K)^{-1} \sum_{t=K+1}^T h_t(p_1,\ldots,p_t)^2 (p_{t-k}(1-p_{t-k}))^{-1}]^{-1} $. We then consider the OLS regression without an intercept: $\textbf{lm}(Y_t \sim \tZ_{h,t} + \tZ_{h,t-1} + \cdots + \tZ_{h,t-K})$ using observations from $t = K+1,\ldots,T$, and let $(\tilde\tau_{h,0},\ldots,\tilde\tau_{h,K})$ denote the OLS coefficients. For each $k=0,\ldots,K$, we define $
\hat\tau_{h,k} = w_{h,k}^{-1} \tilde\tau_{h,k}$. The corresponding estimands of this procedure are $(\tau_{h,0},\ldots,\tau_{h,K})$, where 
\begin{align*}
  \tau_{h,k} =&\frac{1}{T-K} \sum_{t=K+1}^T h_t(p_1,\ldots,p_t) \tau_{t,k}\\
  =& \frac{1}{T-K} \sum_{t=K+1}^T h_t(p_1,\ldots,p_t) \E[Y_t(Z_{t:t-k+1},1,Z_{t-k-1:1}) - Y_t(Z_{t:t-k+1},0,Z_{t-k-1:1})].
\end{align*}

\cite{liang2025randomization} also studied time-series experiments and the corresponding design-based analysis. Our paper differs from theirs in two key ways. First, their analysis assumes a linear convolution model (their Section 2) and a circular convolution model (their Section 3), along with a Bernoulli randomization scheme where $p_t = 1/2$ for all $t$. In contrast, our framework does not impose any structural assumptions on the potential outcomes, allowing for arbitrary and unknown heterogeneous effects over time. Second, we consider more general experimental designs with time-varying treatment probabilities. Third, \cite{liang2025randomization} proposed a method-of-moments estimator, established a central limit theorem, and proposed a variance estimator. In our work, we focus on OLS estimation and analyze its asymptotic properties, along with the performance of the HAC variance estimator derived from the same OLS procedure. Both OLS and HAC variance estimation are widely used in practice, and are implemented in standard statistical software routines. We provide theoretical support for their application in time-series experiments.

\section{Extension to continuous treatment}\label{sec:extensions}

Many modern experimental settings feature treatments that take discrete or continuous values rather than binary values. For instance, in online platforms, treatment intensity may correspond to the proportion of user exposure to a new algorithm or the duration for which an experimental interface is displayed. In clinical trials, treatments often involve dosage levels or exposure durations, which are naturally measured on a continuous scale. In this section, we extend our framework to accommodate treatments with continuous values.

In settings where the experiment involves multiple discrete treatment levels in addition to a control group, one natural approach is to run separate OLS regressions for each treatment level relative to control. Our results then can be applied. Therefore, we only focus on the case where the treatment variable $Z_t \in \bR$ is continuous. We introduce the following assumption, which is the continuous treatment analogue of Assumption~\ref{asp:design}.

\begin{assumption}[Time-series experiment for continuous treatment]\label{asp:design_cont}
  Assume the treatments $Z_1,\ldots,Z_T$ are independent. Assume the values of $(\E[Z_1],\ldots,\E[Z_T])$ and $(\Var[Z_1],\ldots,\Var[Z_T])$ exist and are known by the design.
\end{assumption}

As in the binary treatment case, we normalize the treatment variable by considering $\tZ_t = (Z_t - \E[Z_t])/\Var[Z_t]$. We then fit the following OLS regression model: $$\textbf{lm}(Y_t \sim \tZ_t  + \cdots + \tZ_{t-K}),$$ where the coefficient vector is denoted by $\tilde\tau$ and its transformed version by $\hat\tau = \mW_K^{-1} \tilde\tau$ as \eqref{eq:hat_tau_ols} before with the same $\mW_K$. The causal estimand is the vector of lagged treatment effects $\tau= (\tau_0,\ldots,\tau_K)^\top \in \bR^{K+1}$, where the treatment effect of lag-$k$ treatment on the outcome at time $t$ is defined as
\begin{align*}
  \tau_{t,k} = \E\left[\frac{(Z_{t-k} - \E[Z_{t-k}])}{\Var[Z_{t-k}]} Y_t\right],
\end{align*}
and the average treatment effect of lag-$k$ treatment on the outcome over time is defined as
\begin{align*}
  \tau_k = \frac{1}{T-K} \sum_{t=K+1}^T \tau_{t,k},
\end{align*}
which extend the notion of the treatment effect of lag-$k$ in \eqref{eq:tau_t} and \eqref{eq:tau} to the continuous treatment setting. This estimand $\tau_k$ represents the average slope of the potential outcome with respect to the treatment assigned $k$ periods earlier, while averaging over all other past and concurrent treatment assignments.

In the literature on continuous treatments, the potential outcome under treatment intensity $z$ is referred to as the (causal) dose-response curve, denoted $Y(z)$, while its derivative $(\d/\d z) Y(z)$ is known as the causal derivative effect; see \citet{zhang2025doubly} and references therein. Building on this perspective, we adapt the notion of derivative effects to the setting of time-series experiments. Specifically, we define the effect of treatment at lag-$k$ on the outcome at time $t$ as
\begin{align*}
  \tau_{t,k}'(z) = \frac{\d}{\d z} \E[Y_t(Z_{t:t-k+1},z,Z_{t-k-1:1})],
\end{align*}
where the expectation is taken over the joint distribution of $(Z_{t:t-k+1}, Z_{t-k-1:1})$. Intuitively, $\tau_{t,k}'(z)$ measures the local sensitivity of the expected outcome at time $t$ to a marginal change in treatment intensity at lag-$k$, holding the rest of the treatment path fixed. Aggregating across time, we can equivalently represent the estimand $\tau_k$ as a weighted average of these local effects:
\begin{align*}
  \tau_k = \frac{1}{T-K} \sum_{t=K+1}^T \left(\int w_{t,k}(z) \d z\right)^{-1} \int \tau_{t,k}'(z) w_{t,k}(z) \d z,
\end{align*}
where the weighting function $w_{t,k}(z)$ is given by
\begin{align*}
  w_{t,k}(z) = \{\E[Y_t(Z_{t:1}) \given Z_{t-k} \ge z] - \E[Y_t(Z_{t:1}) \given Z_{t-k} < z]\} \P(Z_{t-k} 
  \ge z) \P(Z_{t-k} < z).
\end{align*}

Thus, the estimand $\tau_k$ can be interpreted as the time-averaged, weighted average derivative effect at lag $k$. This places our estimand within the broader class of weighted average derivatives \citep{newey1993efficiency}, and connects naturally to discussions of how OLS and instrument variable identify such objects \citep{graham2022semiparametrically, borusyak2024negative}. Importantly, both the dose–response curve and the derivative effect are nonregular parameters and cannot generally be estimated at parametric rates. In contrast, we consider the weighted average of the derivative effect, which admits parametric convergence rates.

We next impose regularity conditions on the treatment distribution and outcome to extend our asymptotic results to the continuous treatment setting. We consider the following assumption on the distribution of $Z_t$, which serves as a continuous treatment analogue of Assumption~\ref{asp:ps}.

\begin{assumption}[Treatment distributions  condition]\label{asp:ps_cont}
  There exists some constant $\epsilon>0$ such that $\epsilon \le \inf_{t \ge 1} \Var[Z_t]$.
\end{assumption}

We also consider the following assumption, which serves as a continuous treatment analogue of Assumption~\ref{asp:moment}.
\begin{assumption}[Moment condition]\label{asp:moment_cont}
  The potential outcomes $Y_t(z_{t:1})$ are uniformly bounded for all $t$ and all $z_{t:1} \in \bR^t$, i.e., $\sup_{t \ge 1, z_{t:1} \in \bR^t} \lvert Y_t(z_{t:1}) \rvert < \infty$. Additionally, $\lVert \tau \rVert_1$ is uniformly bounded for all $T$, i.e., $\sup_{T} \lVert \tau \rVert_1 < \infty$.
\end{assumption}

With these conditions in place, we obtain the consistency and asymptotic normality of the OLS estimator. These results are the continuous treatment counterparts to Theorem~\ref{thm:consistency} and Theorem~\ref{thm:clt}.

\begin{theorem}[Consistency]\label{thm:consistency_cont}
  Assume Assumption~\ref{asp:no-anticipation}, Assumption~\ref{asp:design_cont}, Assumption~\ref{asp:ps_cont}, Assumption~\ref{asp:ani}\ref{asp:ani:i} with $K \sum_{k=1}^T \theta_{T,k}/T \to 0$ or Assumption~\ref{asp:mdep}\ref{asp:mdep:i} with $K m/T \to 0$, Assumption~\ref{asp:moment_cont}, and $K^2/T \to 0$. Then we have
  \begin{align*}
    \lVert \hat\tau - \tau \rVert_2 \stackrel{\sf p}{\longrightarrow} 0.
  \end{align*}
\end{theorem}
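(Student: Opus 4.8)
The plan is to mirror the proof of the binary consistency result (Theorem~\ref{thm:consistency}) through the decomposition~\eqref{eq:hat_tau},
\[
  \hat\tau-\tau=\mW_K^{-1}\big[(T-K)^{-1}\mZ_K^\top\mZ_K\big]^{-1}\big[(T-K)^{-1}\mZ_K^\top(\mY_K-\mZ_K\mW_K\tau)\big].
\]
Writing $G=(T-K)^{-1}\mZ_K^\top\mZ_K$ and $N=(T-K)^{-1}\mZ_K^\top(\mY_K-\mZ_K\mW_K\tau)$, it suffices to prove the continuous-treatment analogues of the three appendix lemmas: $G\stackrel{\sf p}{\longrightarrow}\mW_K^{-1}$ with $\mW_K^{-1}$ uniformly well conditioned, and $\lVert N\rVert_2\stackrel{\sf p}{\longrightarrow}0$. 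Given these, $\mW_K^{-1}G^{-1}\stackrel{\sf p}{\longrightarrow}\mI_{K+1}$ and Slutsky's theorem yields $\lVert\hat\tau-\tau\rVert_2=\lVert\mW_K^{-1}G^{-1}N\rVert_2\stackrel{\sf p}{\longrightarrow}0$. The skeleton is identical to the binary case; all the work lies in redoing the two convergences now that $\tZ_t=(Z_t-\E[Z_t])/\Var[Z_t]$ is unbounded.

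First I would record the moment bookkeeping that replaces binary boundedness. By Assumption~\ref{asp:design_cont} and the normalization, $\E[\tZ_t]=0$ and $\E[\tZ_t^2]=\Var[Z_t]^{-1}\le\epsilon^{-1}$ by Assumption~\ref{asp:ps_cont}, while $\E[\tZ_{t-k}\tZ_{t-\ell}]=\delta_{k\ell}\,\Var[Z_{t-k}]^{-1}$ by independence. Hence $\E[G]=\mW_K^{-1}$ exactly, with off-diagonal entries vanishing and $(k,k)$ entry $(T-K)^{-1}\sum_t\Var[Z_{t-k}]^{-1}=w_k^{-1}$. For $N$ the crucial simplification is that in the continuous setting the estimand is \emph{defined} by $\tau_{t,k}=\E[\tZ_{t-k}Y_t]$, so the centering identity holds by construction rather than by the change-of-measure computation needed in the binary case; combined with the above and the definitions of $w_k$ and $\tau_k$, this gives $\E[N]=0$ entrywise, exactly as before.

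Next I would upgrade these mean statements to convergence in probability by bounding variances and summing over the $O((K+1)^2)$ entries, using $K^2/T\to0$ for the dimensional bookkeeping. The component $(T-K)^{-1}\sum_t\tZ_{t-k}Y_t$ of $N$ needs only $\E[\tZ_t^2]$, since $\lvert Y_t\rvert$ is uniformly bounded (Assumption~\ref{asp:moment_cont}) so each summand has second moment $O(\epsilon^{-1})$; the weak-dependence hypotheses (decaying carryover with $K\sum_k\theta_{T,k}/T\to0$, or $(K\vee m)$-dependence with $Km/T\to0$) make the cross-covariances summable and yield an $o(1)$ variance. The residual pieces $(T-K)^{-1}\sum_t\tZ_{t-k}\tZ_{t-\ell}w_\ell\tau_\ell$ with $\ell\ne k$, and the off-diagonal Gram entries, are handled the same way because $\E[\tZ_{t-k}^2\tZ_{t-\ell}^2]=\E[\tZ_{t-k}^2]\E[\tZ_{t-\ell}^2]\le\epsilon^{-2}$ by independence and $\lVert\tau\rVert_1$ is bounded.

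The main obstacle is the single place independence does not save us: the diagonal Gram terms $(T-K)^{-1}\sum_t\tZ_{t-k}^2$ and the matching $\ell=k$ quadratic piece of $N$. Their concentration requires a second moment of $\tZ_{t-k}^2$, i.e.\ a uniform fourth-moment bound $\sup_t\E[\tZ_t^4]=\sup_t\E[(Z_t-\E[Z_t])^4]/\Var[Z_t]^4<\infty$, which was automatic in the binary case from $\lvert\tZ_t\rvert\le\epsilon^{-1}$ but must now be imposed as the continuous-treatment substitute (together with an upper bound on $\Var[Z_t]$, so that $\mW_K^{-1}$ stays bounded below and $G$ remains well conditioned). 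Granting such regularity, the $\tZ_{t-k}^2$ are independent across $t$, so $\Var[(T-K)^{-1}\sum_t\tZ_{t-k}^2]\le(T-K)^{-2}\sum_t\E[\tZ_{t-k}^4]=O(1/(T-K))$; the Gram diagonal converges, $\lambda_{\min}(G)$ is bounded away from zero with probability tending to one, and the matrix inversion follows by the continuous mapping theorem. The remainder is then word-for-word the binary argument.
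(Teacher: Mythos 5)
Your proposal takes the same route as the paper's own proof, which is a short sketch: re-establish Lemmas~\ref{lemma:matrix}, \ref{lemma:consistency} and \ref{lemma:clt} with $(p_t,\,p_t(1-p_t))$ replaced by $(\E[Z_t],\Var[Z_t])$, and verify the centering identity $\E[(T-K)^{-1}\mZ_K^\top(\mY_K-\mZ_K\mW_K\tau)]=0$ directly from the definition $\tau_{t,k}=\E[\tZ_{t-k}Y_t]$. That is exactly your decomposition $\hat\tau-\tau=\mW_K^{-1}G^{-1}N$ and your second paragraph, so structurally the two proofs coincide.

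Where you depart from the paper is the fourth-moment discussion, and you are right to insist on it: the binary proofs silently use the boundedness $\lvert\tZ_t\rvert\le 1/(p_t(1-p_t))\lesssim 1$ (e.g.\ in bounding $\Var[\tZ_t^2]$ for the Gram diagonal), which is lost for continuous $Z_t$; Assumption~\ref{asp:ps_cont} controls only $\E[\tZ_t^2]\le 1/\epsilon$ and says nothing about $\Var[\tZ_t^2]$ or the uniform integrability of $\{\tZ_t^2\}$. This is not pedantry: without such a condition the theorem as stated can fail. Take $Z_t$ independent with $Z_t=\pm t$ with probability $1/(2t^2)$ each and $Z_t=0$ otherwise (so $\E[Z_t]=0$, $\Var[Z_t]=1$), and $Y_t\equiv c\neq 0$; every stated hypothesis holds with $\tau=0$, yet by Borel--Cantelli only finitely many $\tZ_t$ are nonzero almost surely, so for $K=0$ the Gram matrix $(T-K)^{-1}\mZ_K^\top\mZ_K\to 0$ rather than $\mW_K^{-1}$, and $\hat\tau_0=c\sum_t\tZ_t/\sum_t\tZ_t^2$ converges to a nondegenerate random limit rather than to $0$. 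So the extra regularity you impose (or some uniform-integrability substitute) is genuinely needed; the gap it patches sits in the paper's statement and sketch, not in your argument, but strictly speaking your proof establishes a theorem with an added hypothesis.

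One smaller caveat on your own write-up: you assert that the cross-covariance terms are handled "the same way" by the weak-dependence hypotheses, but the binary bound $\lvert\Cov[\tZ_{t-k}U_t,\tZ_{t'-k}U_{t'}]\rvert\lesssim\theta_{T,\lvert t-t'\rvert}$ also exploits $\lvert\tZ\rvert\lesssim 1$. With unbounded regressors, H\"older/Cauchy--Schwarz yields only a bound of order $\theta_{T,\lvert t-t'\rvert}^{1/2}$, so under Assumption~\ref{asp:ani}\ref{asp:ani:i} with diverging $K$ the stated rate $K\sum_{k}\theta_{T,k}/T\to 0$ would need strengthening (by Cauchy--Schwarz, $K^2\sum_k\theta_{T,k}/T\to 0$ suffices); for fixed $K$ the stated condition still does the job, since $\sum_k\theta_{T,k}/T\to 0$ implies $\sum_k\theta_{T,k}^{1/2}/T\to 0$.
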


\begin{theorem}[Asymptotic normality]\label{thm:clt_cont}
  Assume Assumption~\ref{asp:no-anticipation}, Assumption~\ref{asp:design_cont}, Assumption~\ref{asp:ps_cont}, Assumption~\ref{asp:ani}\ref{asp:ani:ii} with $K$ fixed or Assumption~\ref{asp:mdep}\ref{asp:mdep:ii} with $K^4/T \to 0$, Assumption~\ref{asp:moment_cont}, and Assumption~\ref{asp:eigen}. Then for any $\lambda_K \in \bR^{K+1}$ satisfying $0 < \inf_{T \ge 1} (\lVert \lambda_K \rVert_2/\lVert \lambda_K \rVert_1) \le \sup_{T \ge 1} (\lVert \lambda_K \rVert_2/\lVert \lambda_K \rVert_1) < \infty$, we have
  \begin{align*}
    \sqrt{T-K} (\lambda_K^\top \mV \lambda_K)^{-1/2} \lambda_K^\top (\hat\tau - \tau) \stackrel{\sf d}{\longrightarrow} N(0,1),
  \end{align*}
  where $\mV$ is defined in \eqref{eq:mV}.
\end{theorem}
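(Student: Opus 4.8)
The plan is to follow the proof of Theorem~\ref{thm:clt} essentially verbatim, because the only properties of the normalized treatments that enter that argument are shared by the continuous construction. With $\tZ_t = (Z_t - \E[Z_t])/\Var[Z_t]$ we still have $\E[\tZ_t] = 0$ and $\E[\tZ_t^2] = 1/\Var[Z_t]$, and Assumption~\ref{asp:design_cont} keeps the $\tZ_t$'s independent across $t$. Hence the decomposition \eqref{eq:hat_tau} remains valid, and I would reduce the claim to the same two building blocks behind Theorem~\ref{thm:clt}: (i) the Gram matrix $(T-K)^{-1}\mZ_K^\top\mZ_K$ converges in probability to $\mW_K^{-1}$, and (ii) the score $(T-K)^{-1/2}\mZ_K^\top(\mY_K - \mZ_K\mW_K\tau)$ obeys a central limit theorem with limiting covariance $\mV$ as in \eqref{eq:mV}.

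For (i), the diagonal entry $(T-K)^{-1}\sum_{t=K+1}^T \tZ_{t-k}^2$ has expectation $(T-K)^{-1}\sum_t \Var[Z_{t-k}]^{-1} = w_k^{-1}$ by the harmonic-mean definition of $w_k$, while each off-diagonal entry averages products $\tZ_{t-k}\tZ_{t-k'}$ whose expectation vanishes by independence; since the $Z_t$'s are independent these averages are sums of finitely-dependent terms, so a weak law of large numbers gives $(T-K)^{-1}\mZ_K^\top\mZ_K \stackrel{\sf p}{\longrightarrow}\mW_K^{-1}$. For (ii), I would first check the score is centered: writing $U_t = Y_t - \tZ_{t:t-K}^\top\mW_K\tau$, the per-time expectation equals $\tau_{t,k} - \Var[Z_{t-k}]^{-1} w_k\tau_k$ (using $\E[\tZ_{t-k}Y_t] = \tau_{t,k}$, the continuous estimand, and $\E[\tZ_{t-k}\tZ_{t-\ell}] = \Var[Z_{t-k}]^{-1}\mathbbm{1}\{k=\ell\}$), and summing over $t$ telescopes to $\tau_k - \tau_k = 0$ exactly as in the binary case. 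Under Assumption~\ref{asp:ani}\ref{asp:ani:ii} with $K$ fixed, or Assumption~\ref{asp:mdep}\ref{asp:mdep:ii} with $K^4/T\to 0$, the vector sequence $\tZ_{t:t-K}U_t$ is weakly dependent, so I would apply the Stein-method CLT for weakly dependent sequences \citep{chen2004normal, chandrasekhar2023general} to the scalar projection $\lambda_K^\top \mZ_K^\top(\mY_K - \mZ_K\mW_K\tau)/\sqrt{T-K}$, whose variance is $\lambda_K^\top\mV\lambda_K$, bounded away from zero by Assumption~\ref{asp:eigen} and the $\lVert\lambda_K\rVert_2/\lVert\lambda_K\rVert_1$ condition.

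Combining, I would write $\sqrt{T-K}\,\lambda_K^\top(\hat\tau-\tau) = \lambda_K^\top\mW_K^{-1}[(T-K)^{-1}\mZ_K^\top\mZ_K]^{-1}\cdot(T-K)^{-1/2}\mZ_K^\top(\mY_K-\mZ_K\mW_K\tau)$ and apply Slutsky's theorem: the middle factor converges to $\mW_K$ by (i), so the left-hand side is asymptotically equivalent to $\lambda_K^\top$ times the centered score, which is asymptotically $N(0,\lambda_K^\top\mV\lambda_K)$ by (ii); dividing by $(\lambda_K^\top\mV\lambda_K)^{1/2}$ yields the stated standard normal limit.

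The main obstacle is that, unlike the binary case, $\tZ_t$ is no longer bounded: Assumption~\ref{asp:ps_cont} only lower-bounds $\Var[Z_t]$. In the proof of Theorem~\ref{thm:clt}, boundedness of $\tZ_t$ is precisely what supplies the uniform moment control needed both for the weak law in (i) and for the Lyapunov/third-moment remainder term in the weak-dependence CLT in (ii). The substantive new work is therefore to secure these bounds from the continuous-treatment regularity, namely uniformly bounded higher moments of $\tZ_t$ (for instance uniformly bounded $(4+\delta)$-th moments, which together with the variance lower bound control $\Var[\tZ_t^2]$ for (i) and the $(2+\delta)$-th moments of $\tZ_{t-k}U_t$ for (ii)). Once these uniform moment bounds replace boundedness, every estimate in the binary proof transfers line by line, and the analogous argument simultaneously reproduces the consistency statement of Theorem~\ref{thm:consistency_cont} under its weaker rate conditions.
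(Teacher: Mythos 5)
Your proposal follows the paper's own route exactly. The paper proves Theorem~\ref{thm:clt_cont} by precisely the reduction you describe: it re-establishes Lemma~\ref{lemma:matrix}, Lemma~\ref{lemma:consistency}, and Lemma~\ref{lemma:clt} with the binary mean and variance $(p_t, p_t(1-p_t))$ replaced by $(\E[Z_t], \Var[Z_t])$, verifies the centering identity $\E[(T-K)^{-1}\mZ_K^\top(\mY_K - \mZ_K \mW_K \tau)] = 0$ using the continuous estimand $\tau_{t,k} = \E[\tZ_{t-k} Y_t]$ (your telescoping computation $\tau_k - \tau_k = 0$ is exactly the display the paper gives), and then combines the pieces through the decomposition \eqref{eq:hat_tau} and Slutsky, as in the proof of Theorem~\ref{thm:clt}.

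The one place you genuinely depart from the paper is your closing observation about unboundedness, and it is to your credit: this is a real gap in the paper's own sketch rather than in your argument. The paper asserts the three lemmas go through ``in the same way,'' but the binary proofs use boundedness of $\tZ_t$ at exactly the points you name: the bound $\Var[\tZ_{t-i+1}^2] \lesssim 1$ for the diagonal entries in Lemma~\ref{lemma:matrix}, the variance bound on $\tZ_{t-k}\big(Y_t - \sum_\ell \tZ_{t-\ell} w_\ell \tau_\ell\big)$ in Lemma~\ref{lemma:consistency} (whose $k=\ell$ term involves $\E[\tZ_{t-k}^4]$), and the estimate $\lvert \lambda_K^\top \tZ_{t:t-K} \rvert \lesssim \lVert \lambda_K \rVert_1$ driving the $p$-th moment bound for $U_{T,t}$ in Lemma~\ref{lemma:clt}. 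Assumptions~\ref{asp:design_cont}, \ref{asp:ps_cont} and \ref{asp:moment_cont} control only second moments of $\tZ_t$ (via the variance lower bound); nothing in them controls fourth or higher standardized moments of $Z_t$, and without some uniform control of this kind even the weak law for the Gram matrix can fail (bounded $\E[\tZ_t^2]$ alone does not give uniform integrability of $\tZ_t^2$). So the transfer is not literally line-by-line, and your proposed patch --- uniformly bounded $(4+\delta)$-th moments of $\tZ_t$, giving the $\Var[\tZ_t^2]$ control for the Gram matrix and the $(2+\delta)$-th moments of $\tZ_{t-k}U_t$ for the weak-dependence CLT --- is the right additional condition to make the paper's stated proof rigorous. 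In short: same approach as the paper, executed more carefully.
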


To estimate the asymptotic variance $\mV$ in Theorem~\ref{thm:clt_cont}, we can use exactly the same HAC variance estimator $\hat\mV$ in \eqref{eq:hac_var}. Define $\mb_{K} \in \bR^{(T-K) \times (K+1)}$ with $$[\mb_{K}]_{t-K,k+1} = \tau_{t,k} - \frac{w_k}{\Var[Z_{t-k}]}\tau_k,$$ for $t=K+1,\ldots,T$ and $k=0,\ldots,K$, and define the bias term as
\[
\mB_K = \mb_K^\top \mQ_K \mb_K.
\] The following theorem establish that $\hat\mV$ is a conservative estimator of $\mV$, which is parallel to Theorem~\ref{thm:var}.
\begin{theorem}\label{thm:var_cont}Assume Assumption~\ref{asp:no-anticipation}, Assumption~\ref{asp:design_cont}, Assumption~\ref{asp:ps_cont}, Assumption~\ref{asp:ani}\ref{asp:ani:ii} with $K$ fixed, $L \to \infty$ or Assumption~\ref{asp:mdep}\ref{asp:mdep:ii} with $K^4/T \to 0, Km^2 / L \to 0$, Assumption~\ref{asp:moment_cont} and $K^4 L^2/T \to 0$. Then
  \begin{align*}
    \lVert \hat\mV - \mV - \mB_K \rVert_{\rm F} \stackrel{\sf p}{\longrightarrow} 0.
  \end{align*}
  If the treatment variances are constant across time ($\Var[Z_t] = \sigma^2$ for all $t$) and the treatment effects are homogeneous over time ($\tau_{t,k} = \tau_k$ for all $t$ and $k$), then $\mb_{K} = 0$, which implies $\mB_K = 0$. In this case, the HAC variance estimator $\hat \mV$ is consistent for $\mV$, i.e., $\lVert \hat\mV - \mV\rVert_{\rm F} \stackrel{\sf p}{\longrightarrow} 0$.
\end{theorem}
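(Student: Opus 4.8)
The plan is to follow the proof of Theorem~\ref{thm:var} almost verbatim, isolating the continuous treatment as the only genuinely new difficulty. First I would invoke the continuous-treatment analogue of the first appendix lemma, namely $(T-K)^{-1}\mZ_K^\top\mZ_K\stackrel{\sf p}{\longrightarrow}\mW_K^{-1}$ under Assumption~\ref{asp:ps_cont}, to absorb the two outer inverse factors of $\hat\mV$ into $\mW_K$. Up to an $o_p(1)$ Frobenius error this reduces the problem to the normalized meat matrix $(T-K)^{-1}\mZ_K^\top\hat\mU_K\mQ_K\hat\mU_K\mZ_K$, whose $(k+1,k'+1)$ entry is $(T-K)^{-1}\sum_{t,s=K+1}^T\tZ_{t-k}\hat U_t\,[\mQ_K]_{t-K,s-K}\,\hat U_s\,\tZ_{s-k'}$.

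Second I would replace the sample residuals $\hat U_t$ by the oracle residuals $U_t$. Because $\hat U_t-U_t=\tZ_{t:t-K}^\top\mW_K(\tau-\hat\tau)$ and $\hat\tau\stackrel{\sf p}{\longrightarrow}\tau$ by Theorem~\ref{thm:consistency_cont}, the resulting perturbation of each meat entry is a kernel-weighted double sum whose terms are linear or quadratic in $\tau-\hat\tau$; I would bound it by Cauchy--Schwarz, factoring out $\lVert\hat\tau-\tau\rVert_2$ and controlling the remaining quadratic forms in the $\tZ$'s through the rate conditions $K^4L^2/T\to 0$ (together with $Km^2/L\to 0$ in the $m$-dependent case). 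With the oracle residuals in place I would compute the key first moment $\E[\tZ_{t-k}U_t]=\tau_{t,k}-w_k\tau_k/\Var[Z_{t-k}]=[\mb_K]_{t-K,k+1}$, using independence of the treatments together with the defining identity $\E[\tZ_{t-k}Y_t]=\tau_{t,k}$ and $\E[\tZ_{t-k}\tZ_{t-k'}]=\delta_{kk'}/\Var[Z_{t-k}]$. The elementary split $\E[\tZ_{t-k}U_t\,\tZ_{s-k'}U_s]=\Cov[\tZ_{t-k}U_t,\tZ_{s-k'}U_s]+[\mb_K]_{t-K,k+1}[\mb_K]_{s-K,k'+1}$ then separates the expected meat into a covariance part and a mean part. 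Exactly as in the binary case, the kernel-weighted covariance part converges to $\mV$ --- under Assumption~\ref{asp:mdep}\ref{asp:mdep:ii} the summands vanish once $|t-s|$ exceeds $K\vee m$ and $Km^2/L\to 0$ forces $[\mQ_K]_{t-K,s-K}\to 1$ on that support, while under Assumption~\ref{asp:ani}\ref{asp:ani:ii} with $L\to\infty$ the decaying carryover makes the Bartlett truncation error negligible --- and the mean part reproduces the bias matrix $\mB_K=\mb_K^\top\mQ_K\mb_K$.

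Third I would show that the oracle meat matrix concentrates around its expectation, i.e.\ that $(T-K)^{-1}\mZ_K^\top\mU_K\mQ_K\mU_K\mZ_K$ minus its mean is $o_p(1)$ in Frobenius norm. This is a variance computation for a kernel-weighted double sum of the weakly dependent array $\tZ_{t-k}U_t$: weak dependence makes $\Cov[\tZ_{t-k}U_t\tZ_{s-k'}U_s,\,\tZ_{t'-k}U_{t'}\tZ_{s'-k'}U_{s'}]$ non-negligible only when the four time indices cluster, the two kernel factors restrict $|t-s|,|t'-s'|\le L$, and the count of surviving quadruples, together with the growth in $K$, is absorbed by $K^4L^2/T\to 0$. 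Combining the three steps gives $\lVert\hat\mV-\mV-\mB_K\rVert_{\rm F}\stackrel{\sf p}{\longrightarrow}0$. In the special case $\Var[Z_t]=\sigma^2$ and $\tau_{t,k}=\tau_k$, one has $w_k=\sigma^2$, hence $[\mb_K]_{t-K,k+1}=\tau_k-\sigma^2\tau_k/\sigma^2=0$, so $\mb_K=0$, $\mB_K=0$, and $\hat\mV$ is consistent for $\mV$.

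The main obstacle, and the only ingredient not already present in the binary proof, is moment control of the \emph{unbounded} normalized regressors $\tZ_t=(Z_t-\E[Z_t])/\Var[Z_t]$. In the binary case $\tZ_t$ is bounded by Assumption~\ref{asp:ps}, so every quadratic and quartic form in the two preceding steps is automatically controlled; here no such bound exists. The boundedness of the potential outcomes (Assumption~\ref{asp:moment_cont}) tames the terms linear in $\tZ$, since $\E[(\tZ_{t-k}Y_t)^2]\le(\sup_{t,z_{t:1}}|Y_t(z_{t:1})|)^2\,\E[\tZ_{t-k}^2]=(\sup_{t,z_{t:1}}|Y_t(z_{t:1})|)^2/\Var[Z_{t-k}]\le(\sup_{t,z_{t:1}}|Y_t(z_{t:1})|)^2/\epsilon$ by Assumption~\ref{asp:ps_cont}; but the diagonal residual contribution $-w_k\tau_k\tZ_{t-k}^2$ inside $\tZ_{t-k}U_t$ pushes higher powers of $\tZ_{t-k}$ into the second- and fourth-moment bounds of Steps~2 and~3. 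The crux is therefore to secure uniform bounds on the relevant higher moments of $Z_t$ and to propagate them through the kernel-weighted double and quadruple sums without the diverging bandwidth $L$ or lag order $K$ overwhelming the $1/(T-K)$ normalization.
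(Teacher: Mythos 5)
Your proposal follows the paper's own route: the paper proves this theorem with a single sentence deferring to the four-term decomposition \eqref{eq:appendix_2} used for Theorem~\ref{thm:var}, and your three steps reproduce exactly that decomposition, with your concentration step simply absorbing the paper's cross term $2(\mU_K \mZ_K - \E[\mU_K \mZ_K])^\top \mQ_K \E[\mU_K \mZ_K]$ because you center the oracle meat at its expectation rather than at $\mB_K$ plus a centered quadratic form; the identification $\E[\tZ_{t-k}U_t]=[\mb_K]_{t-K,k+1}$, the split of the expected meat into a covariance part converging to $\mV$ plus $\mB_K$, the treatment of the two dependence regimes, and the special-case computation $w_k=\sigma^2$ all match. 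Two remarks. First, a small repair: the residual-replacement step needs the rate $\lVert \hat\tau - \tau \rVert_1 = O_\P(K/\sqrt{T-K})$, which the paper's binary argument takes from Theorem~\ref{thm:clt} (here it would be Theorem~\ref{thm:clt_cont}); the plain consistency you cite from Theorem~\ref{thm:consistency_cont} is not enough to beat the factor $KL$ produced by the kernel-weighted double sum. Second, the obstacle you single out at the end is genuine, but it is not a defect of your proposal relative to the paper --- it is a gap the paper itself inherits silently. The binary proof repeatedly uses $\lvert \tZ_t \rvert \lesssim 1$ (in $\lvert \hat U_t - U_t\rvert \lesssim \lVert \hat\tau - \tau \rVert_1$, in $\lvert U_t \rvert + \lvert \hat U_t \rvert \lesssim \lvert Y_t \rvert + \lVert \tau \rVert_1$, and in every variance bound), whereas Assumptions~\ref{asp:design_cont} and~\ref{asp:ps_cont} control only $\E[\tZ_t^2] \le 1/\epsilon$ and say nothing about the fourth-through-eighth moments of $\tZ_t$ that your Steps 2 and 3 (and already the continuous analogue of Lemma~\ref{lemma:matrix}) require. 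A complete argument under the stated hypotheses would need an added uniform bound on higher central moments of $Z_t$, or a truncation device, which neither your proposal nor the paper's one-line proof supplies; in that respect your write-up is as complete as the paper's and more candid about where the incompleteness lies.
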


% Let $\mG_K = (\tg_{1:K}(Z_{1:1})^\top,\ldots,\tg_{1:K}(Z_{T:1})^\top)^\top \in \bR^{T \times K}$ be the design matrix, where $\tg_{1:K}(Z_{t:1}) = (\tg_1(Z_{t:1}),\ldots,\tg_K(Z_{t:1}))^\top$. Let $\hat{\tau}_g = (\mG_K^\top \mG_K)^{-1} \mG_K^\top \mY_K$ be the OLS fit.

% Note that $T^{-1} \mG_K^\top \mG_K = T^{-1} \sum_{t=1}^T \tg_{1:K}(Z_{t:1}) \tg_{1:K}(Z_{t:1})^\top$. Let $\Sigma_g \in \bR^{K \times K}$ be the diagonal matrix with $[\Sigma_g]_{kk} = T^{-1} \sum_{t=1}^T \Var[g_k(Z_{t:1})]^{-1}$. Assume $\Cov[g_k(Z_{t:1}),g_k'(Z_{t:1})] = 0$ for any $k \neq k'$. Then we can prove $\lVert T^{-1} \mG_K^\top \mG_K - \Sigma_g \rVert_2 \stackrel{\sf p}{\longrightarrow} 0$ under some conditions. 

% Note that $T^{-1} \mG_K^\top \mY_K = T^{-1} \sum_{t=1}^T \tg_{1:K}(Z_{t:1}) Y_t$.

% Then for $k \le K$,
% \begin{align*}
%   & \E\Big[\frac{1}{T} \sum_{t=1}^T \tg_k(Z_{t:1}) Y_t\Big] = \frac{1}{T} \sum_{t=1}^T \Var[g_k(Z_{t:1})]^{-1} \E[[g_k(Z_{t:1})-\E[g_k(Z_{t:1})]]Y_t]\\
%   =& \frac{1}{T} \sum_{t=1}^T \Var[g_k(Z_{t:1})]^{-1} \Cov[g_k(Z_{t:1}), Y_t(Z_{t:1})] :=\tau_k,
% \end{align*}
% which can be regarded as the treatment effect of lag-$k$. Here $g_k$ can be continuous or discrete.

% \begin{remark}
%   If $Y_t=\sum_{k=1}^K \beta_k g_k(Z_{t:1}) + \epsilon_t$ for all $t = 1,\ldots,T$ with nonrandom $(\beta_1,\ldots,\beta_K)$ and nonrandom $(\epsilon_1,\ldots,\epsilon_T)$. Then $\tau_k = \beta_k$.
% \end{remark}

\section{Empirical studies}\label{sec:empirical}

In this section, we conduct simulations to examine the finite-sample performance of $\hat\tau$ and the HAC variance estimator $\hat\mV$ for estimating $\mV$. We then illustrate the method on data from a trading experiment.

\subsection{Simulations}

We consider the autoregressive model in Example~\ref{ex:autoregressive}. Specifically, we take a first-order autoregressive model,
\[
  Y_t(z_{t:1}) = \mu_t(z_{t:1}) + \phi Y_{t-1}(z_{t-1:1}) + \epsilon_t(z_{t:1}),
\]
for any treatment history $z_{t:1} \in \{0,1\}^t$. We set $\mu_t(z_{t:1}) = \mu_t(z_t) = \mu(z_t)$ with $\mu(1)=0.5$ and $\mu(0)=0$, take $\phi = 0.5$, and let $\epsilon_t(z_{t:1}) = \epsilon_t \sim N(0,1)$ i.i.d across $t$. Under this specification, $Y_t(z_{t:1}) = \sum_{k=0}^{t-1} \phi^k \mu(z_{t-k}) + \sum_{k=0}^{t-1} \phi^k \epsilon_{t-k}$. Then the true lag-$k$ treatment effect is $\tau_k = \phi^k (\mu(1) - \mu(0))$.

We vary the time horizon $T \in \{100,200,500,1000,2000,5000,10000\}$, fix the number of lags at $K=5$, and use a constant treatment probability $p_t=p=0.5$. Adopting a design-based perspective, randomness arises only from the assignment path $Z_1,\ldots,Z_T$. For each $T$, we therefore fix a realization $[\epsilon_1,\ldots,\epsilon_T]$ and repeatedly resample $Z_1,\ldots,Z_T$. For each draw we compute the OLS estimator $\hat\tau$ and the HAC variance estimator $\hat\mV$ with the Bartlett kernel. A common choice is $L = \lfloor T^{1/4} \rfloor$ \citep{greene2003econometric}. Since the treatment probabilities are constant across time and the treatment effects are homogeneous over time, the HAC variance estimator $\hat\mV$ is consistent for the asymptotic variance $\mV$ by Theorem~\ref{thm:var}.

Figure \ref{fig1} presents histograms of $\hat\tau_k - \tau_k$ for $k=0,\ldots,5$ across replications under $T=1000$. The empirical distributions are close to the normal distribution, supporting our central limit theorem (Theorem \ref{thm:clt}). This provides empirical evidence that the regression-based estimator achieves approximate normality even at moderate sample sizes.

\begin{figure}[htbp]
  \centering
  \includegraphics[width=0.8\linewidth]{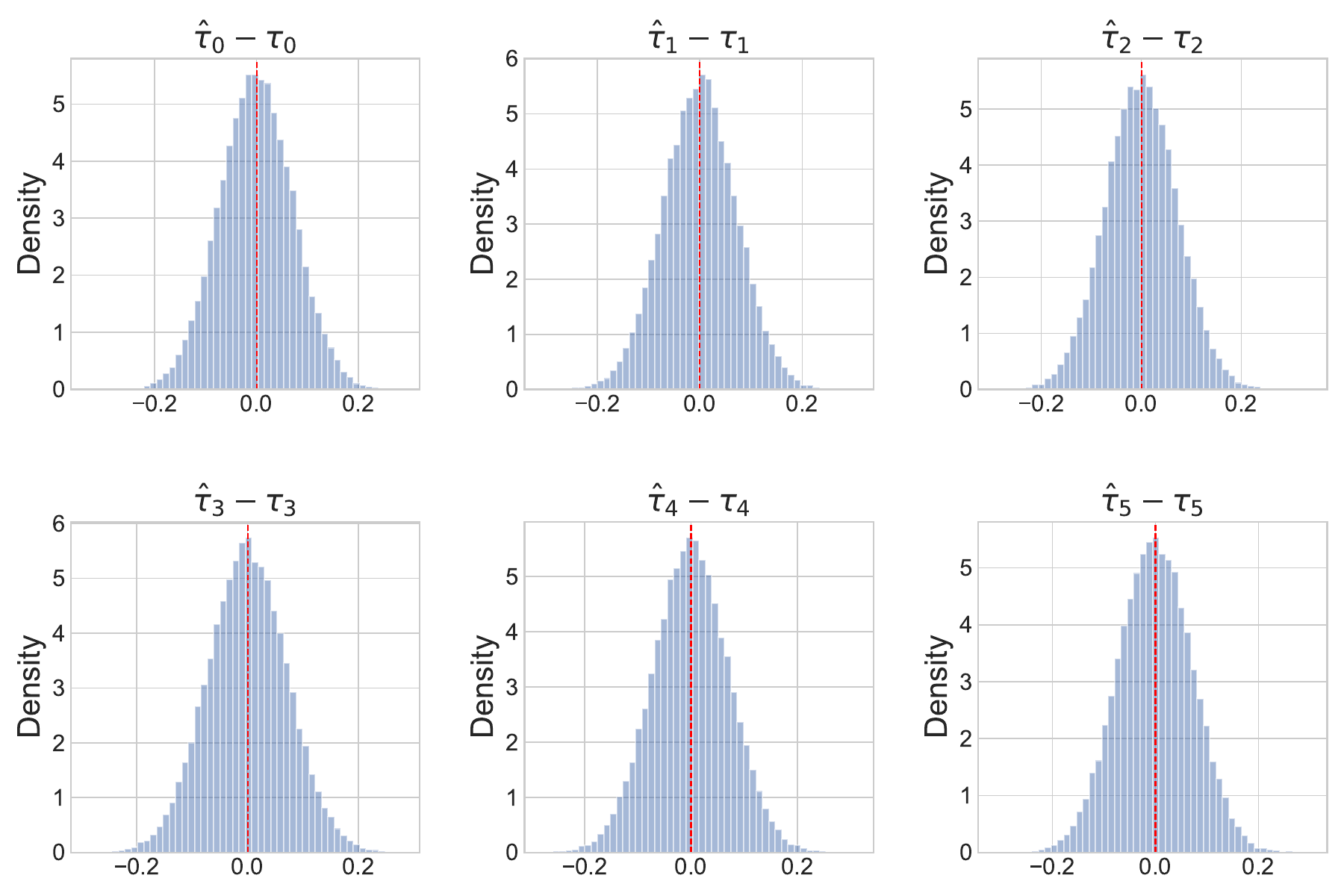}
  \caption{Histograms of $\hat\tau_k-\tau_k$ for $k=0,\ldots,5$.}
  \label{fig1}
\end{figure}

Figure \ref{fig2} displays violin plots of $\lVert \hat\tau - \tau \rVert_2$ as a function of $T$. The $L_2$ error decreases with larger sample sizes, aligning with the $T^{-1/2}$ convergence rate established in Theorem \ref{thm:clt}.

\begin{figure}[htbp]
  \centering
  \includegraphics[width=0.8\linewidth]{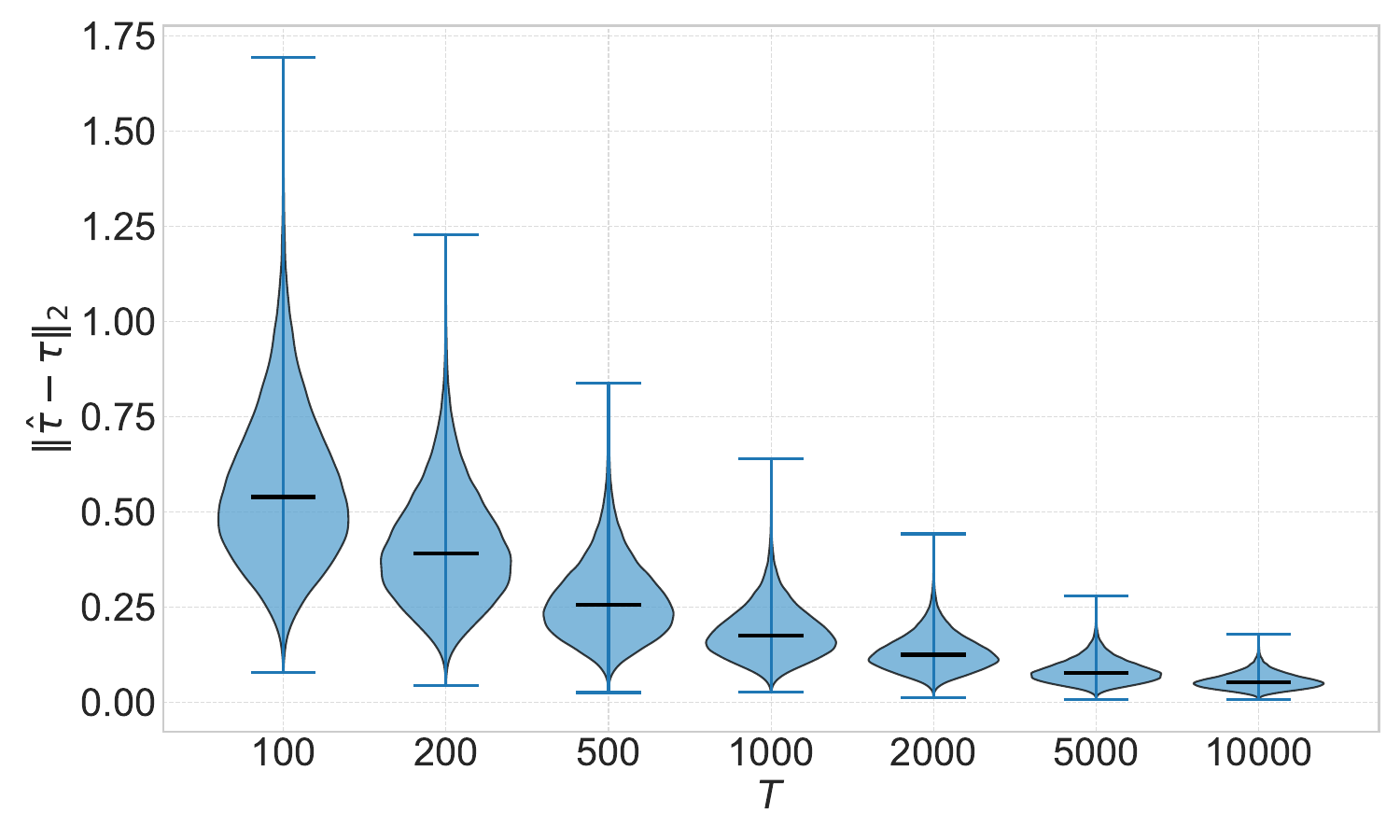}
  \caption{$\lVert \hat{\tau} - \tau \rVert_2$ versus $T$. The center bar is the median.}
  \label{fig2}
\end{figure}

Figure \ref{fig3} investigates the performance of the HAC variance estimator $\hat\mV$. The plot shows the normalized Frobenius error $\lVert \hat\mV - \mV \rVert_{\mathrm F}/\lVert \mV \rVert_{\mathrm F}$ for different kernel bandwidth choices $L \in \{0,1,5,\lfloor T^{1/4} \rfloor\}$, averaged across replications. When no autocorrelation adjustment is made ($L=0$), the estimator remains biased and does not improve with $T$. By contrast, increasing the bandwidth significantly reduces error, with $L = T^{1/4}$ yielding the best finite-sample performance and approaching zero error as $T$ grows. This confirms our theoretical variance results (Theorem \ref{thm:var}) and demonstrates the importance of selecting an appropriate increasing bandwidth.

\begin{figure}[htbp]
  \centering
  \includegraphics[width=0.8\linewidth]{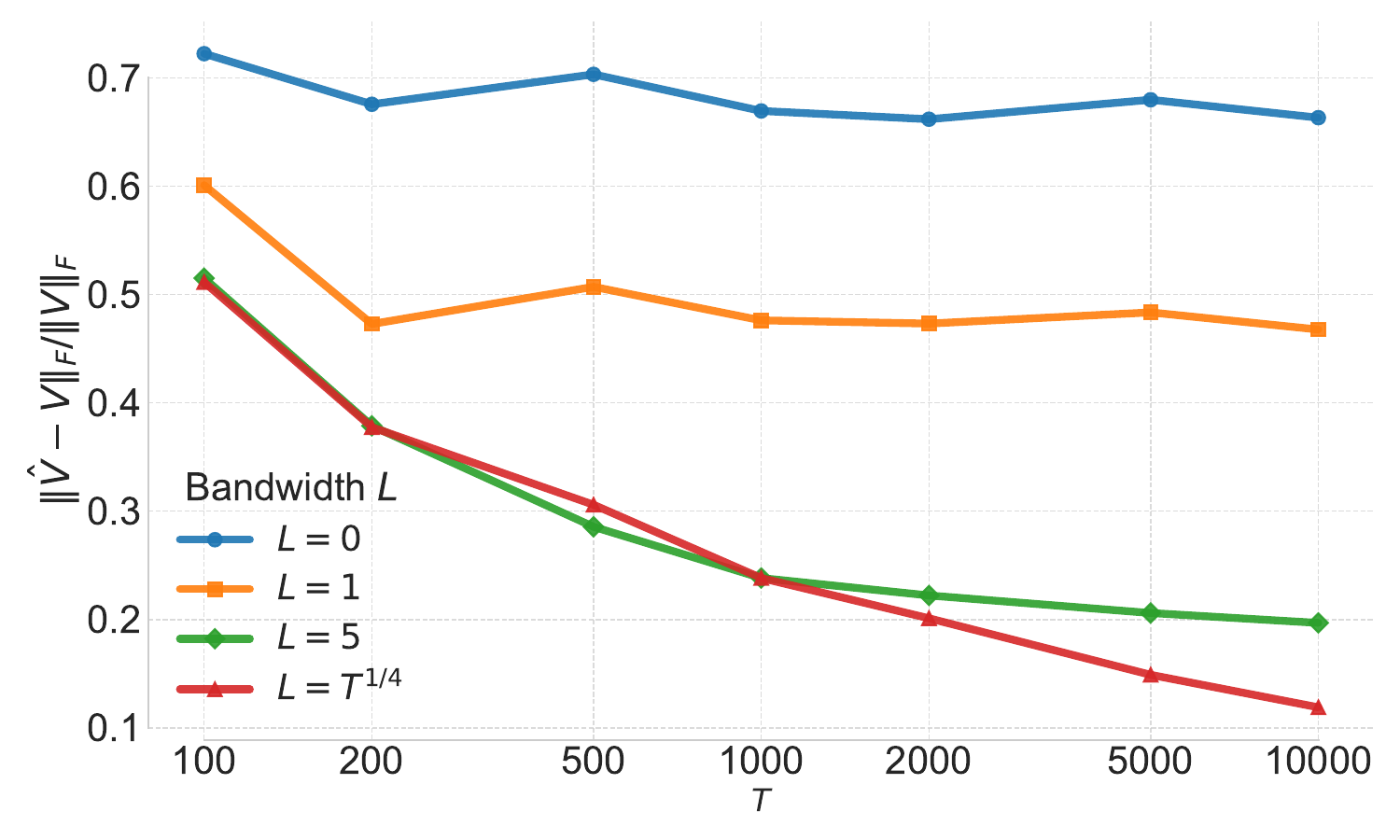}
  \caption{Normalized Frobenius error \(\lVert \hat\mV-\mV \rVert_{\mathrm F}/\lVert \mV \rVert_{\mathrm F}\) versus $T$ for different bandwidths $L$.}
  \label{fig3}
\end{figure}

Table \ref{tab4} reports the empirical coverage probabilities of nominal 95\% confidence intervals for $\hat\tau_k$, constructed using the HAC variance estimator $\hat\mV$ with bandwidth $L = T^{1/4}$. Across all lags $k=0,\ldots,5$ and all sample sizes, the coverage probabilities remain very close to the nominal level, even for relatively small $T=100$. As $T$ increases, the coverage stabilizes tightly around 0.95, confirming that the asymptotic normal approximation in Theorem \ref{thm:clt} provides accurate inference in finite samples.

\begin{table}[h]
  \centering
  \caption{Empirical 95\% coverage for $\hat\tau_k$ using $\hat\mV$ with $L=T^{1/4}$.}
  \begin{tabular}{ccccccc}
  \hline
  $T$ & $\tau_0$ & $\tau_1$ & $\tau_2$ & $\tau_3$ & $\tau_4$ & $\tau_5$ \\
  \hline
  100 & 0.934 & 0.935 & 0.932 & 0.933 & 0.934 & 0.934 \\
  200 & 0.944 & 0.943 & 0.940 & 0.943 & 0.942 & 0.943 \\
  500 & 0.943 & 0.946 & 0.947 & 0.945 & 0.948 & 0.945 \\
  1000 & 0.949 & 0.947 & 0.946 & 0.947 & 0.946 & 0.950 \\
  2000 & 0.949 & 0.949 & 0.948 & 0.949 & 0.950 & 0.950 \\
  5000 & 0.949 & 0.949 & 0.950 & 0.949 & 0.951 & 0.951 \\
  10000 & 0.950 & 0.950 & 0.950 & 0.949 & 0.950 & 0.950 \\
  \hline
  \end{tabular}
  \label{tab4}
\end{table}

\subsection{Real data analysis}

We analyze the trading experiment described in \cite{bojinov2019time}, which was conducted by AHL Partners, a quantitative trading firm specializing in financial futures. The firm executes orders in one of two ways: using a human trader or using an algorithmic trader. For the experiment, AHL restricted attention to orders whose sizes lay between fixed lower and upper bounds (time-invariant within each market, but varying across markets). These eligible orders were randomly assigned to either a human trader or an algorithmic trader. Therefore, treatments were assigned independently across time, following an i.i.d. Bernoulli design, ensuring that the setup satisfied our no-anticipation assumption. AHL provided 10 independent datasets from 2016, each corresponding to a different equity index futures market spanning the US, Europe, and Asia. We analyze these markets separately and regard each as a distinct experimental unit. To protect confidentiality, AHL did not reveal which of methods ``A'' and ``B'' corresponds to human versus algorithmic traders, nor did they disclose the identities of the individual markets. For consistency, we label A as the control group and B as the treatment group. The outcome of interests $Y$ is the slippage, which is a simple function of the trade prices and volumes at which an order was executed. Lower slippage indicates better execution quality and thus serves as a proxy for higher trading performance.

With $K=4$, Table \ref{tab1} reports, for each market, the sample length, mean outcomes by group (A = control, B = treatment), the joint p-value for $H_0:\tau=\mathbf{0}$, and the per-lag estimates $\hat\tau_k$ with their corresponding p-values. We find clear heterogeneity across markets: the null is rejected in Markets 4, 6, 7, and 9, indicating the presence of dynamic treatment effects, while other markets show no joint evidence of impact. In both Market 6 ($\hat\tau_0=5.65,\,p_0=0.00$) and Market 9 ($\hat\tau_0=2.43,\,p_0=0.00$), ``A'' (control) performed significantly better than ``B'' (treatment), and there is no market where ``B'' performed significantly better than ``A'', consistent with the findings of \cite{bojinov2019time}. Market 4 shows lagged effects at longer lags ($\hat\tau_2=4.89,p_2=0.05;\ \hat\tau_4=6.69,p_4=0.02$), whereas Market 7 shows a negative lag-2 effect ($\hat\tau_2=-3.77,p_2=0.00$). Some isolated single-lag signals (e.g., Market 5’s positive lag-1 estimate) do not lead to a joint rejection.

\begin{table}[htbp]
  \centering
  \caption{Trading experiment results.}
  \scriptsize
\begin{tabular}{rrrrrrrrrrrrrrr}
  \toprule
  Market & Length & A & B & $p$ & $\hat{\tau}_0$ & $p_0$ & $\hat{\tau}_1$ & $p_1$ & $\hat{\tau}_2$ & $p_2$ & $\hat{\tau}_3$ & $p_3$ & $\hat{\tau}_4$ & $p_4$ \\
  \midrule
  1 & 252 & 1.48 & 2.29 & 0.87 & 0.79 & 0.41 & 0.36 & 0.73 & 0.25 & 0.78 & 0.19 & 0.86 & 0.90 & 0.38 \\
  2 & 149 & 4.11 & 3.35 & 0.90 & -1.25 & 0.46 & 0.39 & 0.81 & -0.83 & 0.68 & -1.98 & 0.30 & -0.23 & 0.90 \\
  3 & 376 & -0.05 & 1.07 & 0.89 & 1.36 & 0.28 & -1.11 & 0.46 & 0.46 & 0.72 & 0.68 & 0.65 & -0.98 & 0.48 \\
  4 & 78 & 3.38 & 3.23 & 0.03 & 0.88 & 0.72 & 2.92 & 0.23 & 4.89 & 0.05 & 3.56 & 0.14 & 6.69 & 0.02 \\
  5 & 103 & 0.57 & 0.63 & 0.21 & -0.80 & 0.72 & 5.88 & 0.03 & -3.99 & 0.11 & -1.36 & 0.64 & -0.99 & 0.64 \\
  6 & 80 & -1.48 & 3.72 & 0.01 & 5.65 & 0.00 & -1.50 & 0.49 & 0.04 & 0.99 & 1.18 & 0.51 & 0.34 & 0.86 \\
  7 & 247 & 1.99 & 1.64 & 0.00 & -0.43 & 0.74 & -2.20 & 0.08 & -3.77 & 0.00 & 2.44 & 0.07 & -1.31 & 0.27 \\
  8 & 143 & -0.08 & -0.07 & 0.70 & -0.29 & 0.91 & -0.97 & 0.68 & -2.63 & 0.32 & -1.72 & 0.50 & 0.56 & 0.81 \\
  9 & 417 & -2.19 & 0.64 & 0.00 & 2.43 & 0.00 & 0.09 & 0.91 & 1.30 & 0.08 & -0.70 & 0.40 & 0.09 & 0.90 \\
  10 & 426 & 0.80 & 2.10 & 0.11 & 1.56 & 0.13 & 1.67 & 0.16 & 1.32 & 0.25 & -1.58 & 0.19 & -1.66 & 0.15 \\
  \bottomrule
\end{tabular}
  \label{tab1}
\end{table}

\section{Discussion}\label{sec:discussion}

This paper develops a unified design-based framework for regression-based estimators in time-series experiments, establishing consistency, asymptotic normality, and conservative variance estimation without requiring correct model specification. We also extend our framework to the continuous treatment. In this section, we discuss the extensions we have done and some future directions. 

In Appendix~\ref{sec:interaction}, we extend our framework to regression with interaction terms. Beyond estimating the main effects of treatments at different lags, it is often of interest to investigate interaction effects between treatments administered at distinct time periods. Including interaction terms is particularly useful in settings where the effect of a treatment depends on its recent history or on combinations of past interventions. In online platforms, repeated exposures (e.g., showing a user the same advertisement or recommendation on consecutive days) may produce reinforcement or fatigue, leading to nonlinear cumulative effects. In clinical and behavioral studies, the efficacy of an intervention (e.g., medication, exercise, or reminders) often depends on treatment consistency, with consecutive applications generating biological accumulation or tolerance effects.

In Appendix~\ref{sec:general}, we extend our framework to general exposure mapping. So far, we have proposed working models for the treatment effects through individual lags or interactions of consecutive lags. While this approach captures direct and pairwise temporal dependencies, it may be restrictive in settings where outcomes depend on more general features of the treatment path. To address this, we consider exposure  mappings that summarize treatment assignments over multiple periods into lower-dimensional summary statistics. Specifically, we divide the past treatment assignments into several time blocks, and the treatments within each block are aggregated into a single exposure mapping. For example, in behavioral or clinical studies, treatments (e.g., reminders or interventions) often have cumulative effects that depend on how consistently the treatment has been received in the past rather than on any single recent treatment. This allows estimation of short-term, medium-term and long-term persistence in treatment effects.

We now discuss some future directions. First, while our focus has been on the single-unit, time-series setting, a natural next step is to consider panel experiments \citep{bojinov2021panel}, where multiple units are observed over time and treatments are repeatedly randomized both across units and across time. Panel experiments introduce additional layers of complexity beyond single-unit time-series experiments. Interference may occur both across time within each unit and across units at a given time, leading to a rich class of potential outcomes. Researchers may be interested not only in lagged effects within units, but also in spillover effects across units and dynamic causal effects aggregated across the population. Recent work has begun to formalize and analyze these designs \citep{bojinov2021panel,ni2023design,han2024population}. Our results provide a foundation for extending design-based regression analysis beyond single-unit time-series experiments. The working model we specify, together with HAC variance estimation, can be naturally generalized to panel settings. Such an extension would allow for the joint estimation of both lagged effects within units and spillover effects across units, thereby offering a unified design-based framework for dynamic causal inference in panel experiments.

Second, while our analysis assumes that treatment assignments are independent across time, many practical experiments, particularly in online platforms and sequential decision-making contexts, employ adaptive designs, where treatment assignment probabilities depend on past outcomes or treatment history. Our framework naturally extends to such adaptive experimental settings. The key observation is that one can normalize the treatment indicators conditionally on the past history instead of unconditionally. With this normalization, the same regression-based procedure and design-based analysis carry through, and asymptotic properties can be established using martingale theory.

Third, the regression-based procedure allows for incorporating lagged outcomes and additional covariates to improve efficiency. This type of regression adjustment parallels the use of covariate adjustment in classical randomized experiments. In the time-series experiments, one can augment the regression with covariates that are unaffected by the treatment of interest. Incorporating lagged outcomes is particularly relevant in time series literature, as it aligns with the structure of autoregressive models and helps account for temporal dependence in the outcome. By absorbing predictable variation, such adjustments improve efficiency without introducing bias to the treatment effect estimation.

\section*{Acknowledgment}

We thank Iavor Bojinov for providing the data and Iavor Bojinov, Avi Feller and participants of the 2025 NBER-NSF Time Series Conference for helpful comments. Lin was partially supported by  the Two Sigma PhD Fellowship. Ding was supported by the U.S. National Science Foundation (1945136, 2514234).

{%\small
\bibliographystyle{apalike}
\bibliography{AMS}
}

\newpage

\appendix

\section*{Supplementary materials for ``Unifying regression-based and design-based causal inference in
time-series experiments''}

In Appendix~\ref{sec:interaction}, we dicuss the extension to the regression with interaction terms. 

In Appendix~\ref{sec:general}, we discuss the extension to the general exposure mapping. 

In Appendix~\ref{sec:ols_wls}, we provide more details on the relationship between full OLS, marginal OLS and WLS. 

In Appendix~\ref{sec:lemmas}, we provide the lemmas for the main results and their proofs. 

In Appendix~\ref{sec:proofs}, we provide the proofs of all the results.

{\bf Additional notation.} For any $a,b \in \bR$, write $a \vee b = \max\{a,b\}$ and $a \wedge b = \min\{a,b\}$. For any two real sequences $\{a_n\}$and $\{b_n\}$, write $a_n \lesssim b_n$ (or equivalently, $b_n \gtrsim a_n$) if $a_n = O(b_n)$.

\section{Regression with interaction terms}\label{sec:interaction}

In practice, we may be interested in studying the interaction effects between treatments at different lags. An additional advantage of the regression-based analysis is its flexibility to incorporate interaction terms. For instance, one can include regressors such as $\tZ_{t-k}\tZ_{t-k'}$ to study whether the joint presence of treatments at two different lags produces an effect beyond the sum of their individual contributions. To accommodate this possibility, we extend the OLS specification by including interaction terms between consecutive lags:
\begin{align*}
  \textbf{lm}(Y_t \sim \tZ_t + \cdots + \tZ_{t-K} + \tZ_t \tZ_{t-1} + \cdots + \tZ_{t-K+1}\tZ_{t-K}).
\end{align*}
We denote the corresponding estimator by $\tilde\tau_{\rm int} = (\tilde\tau_{{\rm int},0},\ldots,\tilde\tau_{{\rm int},K}, \tilde\tau_{{\rm int},0,1},\ldots,\tilde\tau_{{\rm int},K-1,K})^\top \in \bR^{2K+1}$. The extended regression specification includes $K+1$ main-effect terms corresponding to individual lags and $K$ interaction terms between consecutive lags. In total, the model has $2K+1$ regressors. This framework can be further generalized to incorporate higher-order interactions by augmenting the regression with corresponding interaction terms, similar to regression with interactions in classic factorial experiments \citep{zhao2023covariate}. To illustrate the key idea in time-series experiments, we focus here on two-way consecutive interactions. 

We recall the normalized treatment variable $\tZ_t$ defined in \eqref{eq:z_t} and the main effect weights $w_k$ defined in \eqref{eq:w_k}. We set $\hat\tau_{{\rm int},k} = w_k^{-1} \tilde\tau_{{\rm int},k}$ as \eqref{eq:hat_tau_k}. For the interaction terms, we introduce analogous weights that account for the joint variation in the two treatment lags. Specifically, we define
\begin{align*}
  w_{k-1,k} =& \left[\frac{1}{T-K} \sum_{t=K+1}^T (\Var[Z_{t-k}]\Var[Z_{t-k+1}])^{-1}\right]^{-1}  \\
  =& \left[\frac{1}{T-K} \sum_{t=K+1}^T [p_{t-k}p_{t-k+1}(1-p_{t-k})(1-p_{t-k+1})]^{-1}\right]^{-1},
\end{align*}
and define $\hat\tau_{{\rm int},k-1,k} = w_{k-1,k}^{-1} \tilde\tau_{{\rm int},k-1,k}$. Collecting the main and interaction effects, we obtain the transformed OLS estimator $\hat\tau_{\rm int} = (\hat\tau_{{\rm int},0},\ldots,\hat\tau_{{\rm int},K}, \hat\tau_{{\rm int},0,1},\ldots,\hat\tau_{{\rm int},K-1,K})^\top \in \bR^{2K+1}$.
The causal estimand is $\tau_{\rm int} = (\tau_0,\ldots,\tau_K,\tau_{0,1},\ldots,\tau_{K-1,K})^\top$, which jointly captures both lagged main effects and interaction effects. The lag-$k$ main effect $\tau_k$ is defined as \eqref{eq:tau}. The interaction effect between lag $k-1$ and lag $k$ is given by
\begin{align*}
  \tau_{k-1,k} = \frac{1}{T-K} \sum_{t=K+1}^T \E[&Y_t(Z_{t:t-k+2},1,1,Z_{t-k-1:1}) - Y_t(Z_{t:t-k+2},1,0,Z_{t-k-1:1})\\
  &-Y_t(Z_{t:t-k+2},0,1,Z_{t-k-1:1}) + Y_t(Z_{t:t-k+2},0,0,Z_{t-k-1:1})],
\end{align*}
where the expectation is taken over the joint distribution of $(Z_{t:t-k+2}, Z_{t-k-1:1})$. Intuitively, $\tau_{k-1,k}$ measures the incremental effect of applying treatment at both lags $k-1$ and lag $k$ simultaneously, beyond the sum of their individual effects.

To write $\hat\tau_{\rm int}$ explicitly, recall the outcome vector $\mY_K = (Y_{K+1},\ldots,Y_T)^\top \in \bR^{T-K}$ and define the regressor matrix $\mZ_{{\rm int},K} = (\tZ_{{\rm int},K+1:1}^\top,\ldots,\tZ_{{\rm int},T:T-K}^\top)^\top \in \bR^{(T-K) \times (2K+1)}$ where $\tZ_{{\rm int},t:t'} = (\tZ_t,\tZ_{t-1},\ldots,\tZ_{t'}, \tZ_t \tZ_{t-1},\ldots,\tZ_{t'+1} \tZ_{t'})^\top$ for $t>t'$.  Let $\mW_{{\rm int},K} \in \bR^{(2K+1)\times(2K+1)}$ be the diagonal weight matrix $\mW_{{\rm int},K} = {\rm diag}(w_k:k=0,\ldots,K, w_{k-1,k}:k=1,\ldots,K)$. Then, $\hat\tau_{\rm int}$ can be expressed as
\begin{align*}
  \hat\tau_{\rm int} = \mW_{{\rm int},K}^{-1} \tilde\tau_{{\rm int}} = \mW_{{\rm int},K}^{-1} (\mZ_{{\rm int},K}^\top \mZ_{{\rm int},K})^{-1} \mZ_{{\rm int},K}^\top \mY_K.
\end{align*}
% Analogous to the equivalence between the OLS regressions in \eqref{eq:ols} and \eqref{eq:ols_w}, $\hat\tau_{\rm int}$ can be obtained directly from the following OLS regression without an intercept:
% \[
%   \textbf{lm}(Y_t \sim w_0 \tZ_t + \cdots + w_K \tZ_{t-K} + w_{0,1} \tZ_t \tZ_{t-1} + \cdots + w_{K-1,K} \tZ_{t-K+1}\tZ_{t-K}).
% \]
Analogous to the decomposition of $\hat\tau_k$ in \eqref{eq:hat_tau}, we can decompose $\hat\tau_{\rm int}$ as
\[
  \hat{\tau}_{\rm int} = \tau + \mW_{{\rm int},K}^{-1} [(T-K)^{-1}\mZ_{{\rm int},K}^\top \mZ_{{\rm int},K}]^{-1} [(T-K)^{-1}\mZ_{{\rm int},K}^\top (\mY_K - \mZ_{{\rm int},K} \mW_{{\rm int},K} \tau)].
\]

To study the asymptotic properties of the OLS estimator $\hat\tau_{\rm int}$, analogous to \eqref{eq:residual}, we define the oracle residual of OLS regression at time $t$ as $$U_{{\rm int}, t} = Y_t - \sum_{k=0}^K \tZ_{t-k} w_k \tau_k - \sum_{k=1}^K \tZ_{t-k} \tZ_{t-k+1} w_{k-1,k} \tau_{k-1,k}.$$ We then define the asymptotic variance matrix $\mV_{\rm int} \in \bR^{(2K+1) \times (2K+1)}$ analogous to \eqref{eq:mV} as
\begin{align*}
  &[\mV_{\rm int}]_{k, k'} = \frac{1}{T-K}\Cov\Big[\sum_{t=K+1}^T \tZ_{t-k+1} U_{{\rm int}, t}, \sum_{t=K+1}^T \tZ_{t-k'+1} U_{{\rm int}, t}\Big],\\
  & [\mV_{\rm int}]_{K+1+k, K+1+k'} = \frac{1}{T-K}\Cov\Big[\sum_{t=K+1}^T \tZ_{t-k} \tZ_{t-k+1} U_{{\rm int}, t}, \sum_{t=K+1}^T \tZ_{t-k'}\tZ_{t-k'+1} U_{{\rm int}, t}\Big],\\
  & [\mV_{\rm int}]_{K+1+k, k'} = [\mV_{\rm int}]_{k', K+1+k} = \frac{1}{T-K}\Cov\Big[\sum_{t=K+1}^T \tZ_{t-k} \tZ_{t-k+1} U_{{\rm int}, t}, \sum_{t=K+1}^T \tZ_{t-k'+1} U_{{\rm int}, t}\Big].
\end{align*}

The design assumptions and treatment probability conditions carry over unchanged from the main effects model. Under Assumption~\ref{asp:moment}, the uniformly boundedness of $\lVert \tau_{\rm int} \rVert_1$ across $T$ follows immediately, ensuring that both main and interaction effects remain well-behaved. To guarantee invertibility of the asymptotic covariance, we impose the following eigenvalue condition.

\begin{assumption}[Eigenvalue condition]\label{asp:eigen_int}
  The smallest eigenvalue of $\mV_{\rm int}$ is uniformly bounded away from zero for all $T$, i.e., $\inf_{T \ge 1} \lambda_{\min}(\mV_{\rm int}) > 0$.
\end{assumption}

With these assumptions in place, we obtain the consistency and asymptotic normality of the extended estimator $\hat\tau_{\rm int}$. These results parallel Theorem~\ref{thm:consistency} and Theorem~\ref{thm:clt}, now incorporating both main and interaction effects.

\begin{theorem}[Consistency]\label{thm:consistency_int}
  Assume Assumption~\ref{asp:no-anticipation}, Assumption~\ref{asp:design}, Assumption~\ref{asp:ps}, Assumption~\ref{asp:ani}\ref{asp:ani:i} with $K \sum_{k=1}^T \theta_{T,k}/T \to 0$ or Assumption~\ref{asp:mdep}\ref{asp:mdep:i} with $K m/T \to 0$, Assumption~\ref{asp:moment}, and $K^2/T \to 0$. Then we have
  \begin{align*}
    \lVert \hat\tau_{\rm int} - \tau_{\rm int}\rVert_2 \stackrel{\sf p}{\longrightarrow} 0.
  \end{align*}
\end{theorem}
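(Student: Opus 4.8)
The plan is to mirror the proof of Theorem~\ref{thm:consistency}, working from the decomposition of $\hat\tau_{\rm int}$ displayed above. Write $G = (T-K)^{-1}\mZ_{{\rm int},K}^\top \mZ_{{\rm int},K}$ and $r = (T-K)^{-1}\mZ_{{\rm int},K}^\top(\mY_K - \mZ_{{\rm int},K}\mW_{{\rm int},K}\tau_{\rm int})$, so that $\hat\tau_{\rm int} - \tau_{\rm int} = \mW_{{\rm int},K}^{-1} G^{-1} r$ and $\lVert \hat\tau_{\rm int} - \tau_{\rm int}\rVert_2 \le \lVert \mW_{{\rm int},K}^{-1}\rVert_2 \,\lVert G^{-1}\rVert_2 \,\lVert r\rVert_2$. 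Under Assumption~\ref{asp:ps} the diagonal entries of $\mW_{{\rm int},K}$ lie in a fixed compact subinterval of $(0,\infty)$ (each $w_k \le 1/4$ and $w_{k-1,k}\le 1/16$, both bounded below by powers of $\epsilon(1-\epsilon)$), so $\lVert \mW_{{\rm int},K}^{-1}\rVert_2 = O(1)$. It therefore suffices to prove (i) $G$ is invertible with high probability and $\lVert G^{-1}\rVert_2 = O_{\sf p}(1)$, and (ii) $\lVert r\rVert_2 \stackrel{\sf p}{\longrightarrow} 0$.

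For (i), the central computation uses that the normalized indicators satisfy $\E[\tZ_t]=0$, $\E[\tZ_t^2]=\Var[Z_t]^{-1}$, and, by independence across time, any product containing an index that appears exactly once has zero expectation. Hence $\E[G]=\mW_{{\rm int},K}^{-1}$ exactly: the main-effect diagonal entries reproduce $w_k^{-1}$, the interaction diagonal entries reproduce $w_{k-1,k}^{-1}$, and all off-diagonal and main-interaction cross entries vanish in expectation. For the fluctuation, note that for each pair of regressors the summands are products of mean-zero independent factors over a time window of width at most $K+1$ (with the interaction regressors producing moments up to degree four, uniformly bounded by Assumption~\ref{asp:ps}); because a nonzero cross-time covariance forces every unsquared index to cancel, each entry of $G$ has variance $O(1/T)$. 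Summing over the $O(K^2)$ entries gives $\E\lVert G - \E[G]\rVert_{\rm F}^2 = O(K^2/T)\to 0$ when $K^2/T\to 0$, so $\lVert G-\E[G]\rVert_2\stackrel{\sf p}{\longrightarrow}0$. Since $\lambda_{\min}(\E[G]) \ge 4 > 0$ uniformly, this yields (i).

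For (ii), observe first that $U_{{\rm int},t}$ is uniformly bounded: $\lvert Y_t\rvert$ is bounded by Assumption~\ref{asp:moment}, $\lvert \tZ_{t-k}\rvert \le \epsilon^{-1}$ and $w_k\le 1/4$ by Assumption~\ref{asp:ps}, and $\lVert \tau_{\rm int}\rVert_1 = O(1)$, so the subtracted regression part is $O(1)$. The decisive property is that the centering and scaling make each component of $r$ exactly mean-zero after time averaging: using $\E[\tZ_{t-k}Y_t]=\tau_{t,k}$ and $\E[\tZ_{t-k}\tZ_{t-k+1}Y_t]=\tau_{t,k-1,k}$ together with $(T-K)^{-1}\sum_t \tau_{t,k}=\tau_k$ and $(T-K)^{-1}\sum_t \Var[Z_{t-k}]^{-1}=w_k^{-1}$ (and their interaction analogues), one checks $\E[r]=0$. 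Thus $\E\lVert r\rVert_2^2 = \sum_{\text{components}} \Var$. Each component is a normalized sum of bounded, weakly dependent terms: under Assumption~\ref{asp:mdep}\ref{asp:mdep:i} the sequence is $(K\vee m)$-dependent, so its variance is $O((K\vee m)/T)$ and the total over $O(K)$ components is $O(K^2/T + Km/T)\to 0$; under Assumption~\ref{asp:ani}\ref{asp:ani:i} the cross-time covariances are controlled by $\theta_{T,k}$, giving a total of order $K\sum_{k=1}^T\theta_{T,k}/T + K^2/T \to 0$. Both bounds vanish precisely under the stated hypotheses, giving (ii), and Slutsky's argument completes the proof.

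The main obstacle is controlling the fluctuation uniformly over the growing dimension $2K+1$ when the interaction regressors inject products of up to four normalized indicators. The crucial observation—already present in the main-effects proof but more delicate here—is that the mean-zero, independent structure forces the cross-time covariances of these higher-degree products to vanish except over an $O(1)$ range of time shifts, so each entry of $G$ and each component of $r$ retains the same per-entry order as in the main-effects case. This is exactly what allows the interaction estimator to inherit the same moment condition $K^2/T\to 0$ rather than a higher power of $K$.
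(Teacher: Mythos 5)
Your proposal is correct and follows essentially the same route as the paper: the paper's own proof simply invokes the interaction analogues of Lemma~\ref{lemma:matrix} (Gram matrix $\to \mW_{{\rm int},K}^{-1}$) and Lemma~\ref{lemma:consistency} (the score term $\to 0$) and combines them as in Theorem~\ref{thm:consistency}, which is exactly your decomposition into $G$ and $r$. Your additional details — the exact identity $\E[G]=\mW_{{\rm int},K}^{-1}$, the mean-zero verification of $r$ via $\E[\tZ_{t-k}Y_t]=\tau_{t,k}$ and $\E[\tZ_{t-k}\tZ_{t-k+1}Y_t]=\tau_{t,k-1,k}$, and the observation that cross-time covariances of the degree-four products vanish outside an $O(1)$ window so each entry keeps variance $O(1/T)$ — are precisely the computations the paper leaves implicit when it says ``similar to'' the two lemmas.
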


\begin{theorem}[Asymptotic normality]\label{thm:clt_int}
  Assume Assumption~\ref{asp:no-anticipation}, Assumption~\ref{asp:design}, Assumption~\ref{asp:ps}, Assumption~\ref{asp:ani}\ref{asp:ani:ii} with $K$ fixed or Assumption~\ref{asp:mdep}\ref{asp:mdep:ii} with $K^4/T \to 0$, Assumption~\ref{asp:moment}, and Assumption~\ref{asp:eigen_int}. Then for any $\lambda_K \in \bR^{2K+1}$ satisfying $0 < \inf_{T \ge 1} (\lVert \lambda_K \rVert_2/\lVert \lambda_K \rVert_1) \le \sup_{T \ge 1} (\lVert \lambda_K \rVert_2/\lVert \lambda_K \rVert_1) < \infty$, we have
  \begin{align*}
    \sqrt{T-K} (\lambda_K^\top \mV_{\rm int} \lambda_K)^{-1/2} \lambda_K^\top (\hat\tau_{\rm int} - \tau_{\rm int}) \stackrel{\sf d}{\longrightarrow} N(0,1).
  \end{align*}
\end{theorem}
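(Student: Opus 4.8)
The plan is to mirror the proof of Theorem~\ref{thm:clt}, applying the same two-part strategy—Gram-matrix convergence plus a central limit theorem for a weakly dependent score—to the augmented regressor matrix $\mZ_{{\rm int},K}$. Starting from the displayed decomposition of $\hat\tau_{\rm int}$ preceding the theorem, it suffices to analyze $(T-K)^{-1}\mZ_{{\rm int},K}^\top\mZ_{{\rm int},K}$ and the score $(T-K)^{-1/2}\mZ_{{\rm int},K}^\top(\mY_K-\mZ_{{\rm int},K}\mW_{{\rm int},K}\tau_{\rm int})$, since $\mW_{{\rm int},K}$ is a fixed diagonal matrix whose entries are bounded away from $0$ and $\infty$ under Assumption~\ref{asp:ps}.

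First I would show that $(T-K)^{-1}\mZ_{{\rm int},K}^\top\mZ_{{\rm int},K}$ converges in probability to $\mW_{{\rm int},K}^{-1}$, the analogue of the first lemma behind Theorem~\ref{thm:clt}. The crucial observation is that, because the treatments are independent across time (Assumption~\ref{asp:design}) and $\E[\tZ_t]=0$, every off-diagonal entry of $\E[(T-K)^{-1}\mZ_{{\rm int},K}^\top\mZ_{{\rm int},K}]$ vanishes: a product such as $\tZ_{t-j}\tZ_{t-k}\tZ_{t-k+1}$ or $\tZ_{t-k}\tZ_{t-k+1}\tZ_{t-k'}\tZ_{t-k'+1}$ with non-coincident index sets always contains at least one isolated mean-zero factor, so its expectation is zero; only the matching squares survive, producing exactly the diagonal entries $w_k^{-1}$ and $w_{k-1,k}^{-1}$. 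Hence the expected Gram matrix equals $\mW_{{\rm int},K}^{-1}$ identically. Concentration around this expectation then follows from the weak-dependence structure together with the uniform bound $\lvert\tZ_t\rvert\le 1/\epsilon$ implied by Assumption~\ref{asp:ps}, which keeps the relevant higher-order moments of the augmented regressors bounded; the entrywise variances are $O(1/(T-K))$, and the rate condition $K^4/T\to0$ controls the accumulation over the $O(K^2)$ entries so that the Frobenius deviation is $o_p(1)$. Since the limit is diagonal and well-conditioned, its inverse is bounded and converges to $\mW_{{\rm int},K}$.

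Second I would establish a central limit theorem for $(T-K)^{-1/2}\lambda_K^\top\sum_{t=K+1}^T g_{{\rm int},t}$, where $g_{{\rm int},t}\in\bR^{2K+1}$ collects the per-period score contributions $\tZ_{t-k}U_{{\rm int},t}$ and $\tZ_{t-k}\tZ_{t-k+1}U_{{\rm int},t}$. The first point is that the score is centered: the estimand $\tau_{\rm int}$ is defined exactly so that $\sum_t\E[g_{{\rm int},t}]=0$, which I would verify using the inverse-propensity identity $\E[\tZ_{t-k}Y_t]=\tau_{t,k}$ and the analogous double-difference identity for $\E[\tZ_{t-k}\tZ_{t-k+1}Y_t]$, together with the defining averages in $w_k$ and $w_{k-1,k}$; the per-period means are nonzero (they encode the heterogeneity terms analogous to $\mb_K$) but telescope to zero in the sum. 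Under Assumption~\ref{asp:mdep}, each summand $\lambda_K^\top g_{{\rm int},t}$ is a bounded function of $Z_{t-(K\vee m):t}$, so the sequence is $(K\vee m)$-dependent; under Assumption~\ref{asp:ani}\ref{asp:ani:ii} with $K$ fixed, its carryover effects decay at the same rate as $\theta_{T,k}$. In either case I would invoke the central limit theorem for weakly dependent sequences of \citet{chen2004normal,chandrasekhar2023general}, with the asymptotic variance $\lambda_K^\top\mV_{\rm int}\lambda_K$ bounded away from zero by Assumption~\ref{asp:eigen_int} and the regularity $0<\inf_T\lVert\lambda_K\rVert_2/\lVert\lambda_K\rVert_1\le\sup_T\lVert\lambda_K\rVert_2/\lVert\lambda_K\rVert_1<\infty$ ensuring the Lindeberg condition holds as $K$ diverges under $K^4/T\to0$. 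Combining the two parts through Slutsky's theorem—where $[(T-K)^{-1}\mZ_{{\rm int},K}^\top\mZ_{{\rm int},K}]^{-1}\to\mW_{{\rm int},K}$ cancels the leading $\mW_{{\rm int},K}^{-1}$, leaving $\sqrt{T-K}(\hat\tau_{\rm int}-\tau_{\rm int})$ asymptotically equal to $(T-K)^{-1/2}\sum_t g_{{\rm int},t}$—then yields the stated normality.

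The main obstacle is the higher-moment bookkeeping introduced by the interaction regressors. Because each interaction column is a product $\tZ_{t-k}\tZ_{t-k+1}$ and $U_{{\rm int},t}$ itself contains such products, the score entries involve products of up to four normalized indicators, so both the Gram-matrix concentration and the Lindeberg verification require controlling cross-moments of higher order and tracking how these accumulate across the $2K+1$ growing coordinates. The uniform boundedness $\lvert\tZ_t\rvert\le1/\epsilon$ renders every such moment finite, but showing that the remainder terms are genuinely negligible—not merely bounded—uniformly as $K$ grows is where the rate condition $K^4/T\to0$ and the weak-dependence assumption must be combined carefully, exactly as in the proof of Theorem~\ref{thm:clt} but over the enlarged index set.
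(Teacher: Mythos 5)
Your proposal is correct and follows essentially the same route as the paper: the paper's proof simply recycles the decomposition of $\hat\tau_{\rm int}$ and re-establishes the three supporting lemmas (Gram-matrix convergence to $\mW_{{\rm int},K}^{-1}$, centering of the score, and the weak-dependence CLT with Assumption~\ref{asp:eigen_int} in place of Assumption~\ref{asp:eigen}) for the enlarged regressor matrix $\mZ_{{\rm int},K}$, then combines them via Slutsky exactly as in Theorem~\ref{thm:clt}. Your additional observations—that independence and mean-zero normalization kill all off-diagonal expectations of the interaction Gram matrix, and that the per-period score means average (not ``telescope'') to zero by the definitions of $\tau_{k-1,k}$ and $w_{k-1,k}$—are precisely the details the paper leaves implicit.
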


To estimate the asymptotic variance $\mV_{\rm int}$ in Theorem~\ref{thm:clt_int}, we apply the HAC variance estimator introduced in Section~\ref{sec:hac}. Let $$\hat U_{{\rm int}, t} = Y_t - \sum_{k=0}^K \tZ_{t-k} w_k \hat\tau_{{\rm int},k} - \sum_{k=1}^K \tZ_{t-k} \tZ_{t-k+1} w_{k-1,k} \hat\tau_{{\rm int},k-1,k}$$ be the sample residual at time $t$ analogous to \eqref{eq:residual_sample}. Using these residuals, we can then construct the HAC variance estimator in the same way as in \eqref{eq:hac_var}, and the consistency result is parallel to Theorem~\ref{thm:var}, since the inclusion of interaction terms simply augments the number of regressors from $K+1$ to $2K+1$.

We note that the $k$-th element of both $\tau$ and $\tau_{\rm int}$ coincides with $\tau_k$. This raises a natural question: do we actually need to include the interaction terms? The answer closely parallels the comparison between full OLS and marginal OLS in Section~\ref{sec:full_marginal}: both procedures target the same estimand, but differ in their asymptotic variances and variance estimators.

By Theorems~\ref{thm:clt} and \ref{thm:clt_int}, the asymptotic variance of $\hat\tau_k$ without interaction terms is $(T-K)^{-1} \Var[\sum_{t=K+1}^T \tZ_{t-k} (Y_t - \sum_{\ell=0}^K \tZ_{t-\ell} w_\ell \tau_\ell)]$, while the asymptotic variance with interaction terms is $(T-K)^{-1} \Var[\sum_{t=K+1}^T \tZ_{t-k} (Y_t - \sum_{\ell=0}^K \tZ_{t-\ell} w_\ell \tau_\ell - \sum_{\ell=1}^K \tZ_{t-\ell} \tZ_{t-\ell+1} w_{\ell-1,\ell} \tau_{\ell-1,\ell})]$. To compare these variances explicitly, we consider a linear potential outcome model with interaction terms.

\begin{proposition}\label{prop:int}
  Consider a linear model with homogeneous effects and nonrandom error: $Y_t = Y_t(z_{t:1})=\sum_{k=0}^{t-1} \beta_k z_{t-k} + \sum_{k=1}^{t-1} \beta_{k-1,k} z_{t-k} z_{t-k+1} + \epsilon_t$. Assume a constant treatment probability $p_t=p$. Then, the asymptotic variance for OLS with interaction terms is
  \begin{align*}
    &\frac{1}{T-K} \Var\Big[\sum_{t=K+1}^T \tZ_{t-k} (Y_t - \sum_{\ell=0}^K \tZ_{t-\ell} w_\ell \tau_\ell - \sum_{\ell=1}^K \tZ_{t-\ell} \tZ_{t-\ell+1} w_{\ell-1,\ell} \tau_{\ell-1,\ell})\Big] \\
    =& \frac{1}{T-K} \sum_{t=K+1}^T \sum_{\ell=K+1}^{t-1} (\beta_\ell^2+ p^2 \beta_{\ell-1,\ell}^2 + p^2 \beta_{\ell,\ell+1}^2) + \frac{1}{T-K} \sum_{t=K+1}^T \sum_{\ell=K+1}^{t-1} p(1-p)\beta_{\ell-1,\ell}^2\\
    &+ \frac{1}{T-K} \frac{1}{p(1-p)}\sum_{t=K+1}^T \Big(p\sum_{k=0}^{t-1} \beta_k + p^2 \sum_{k=1}^{t-1} \beta_{k-1,k} + \epsilon_t \Big)^2,
  \end{align*}
  while the asymptotic variance for OLS without interaction terms is
  \begin{align*}
    &\frac{1}{T-K} \Var\Big[\sum_{t=K+1}^T \tZ_{t-k} (Y_t - \sum_{\ell=0}^K \tZ_{t-\ell} w_\ell \tau_\ell)\Big] \\
    =& \frac{1}{T-K} \sum_{t=K+1}^T \sum_{\ell=K+1}^{t-1} (\beta_\ell^2+ p^2 \beta_{\ell-1,\ell}^2 + p^2 \beta_{\ell,\ell+1}^2) + \frac{1}{T-K} \sum_{t=K+1}^T \sum_{\ell=1}^{t-1} p(1-p)\beta_{\ell-1,\ell}^2\\
    &+ \frac{1}{T-K} \frac{1}{p(1-p)}\sum_{t=K+1}^T \Big(p\sum_{k=0}^{t-1} \beta_k + p^2 \sum_{k=1}^{t-1} \beta_{k-1,k} + \epsilon_t \Big)^2.
  \end{align*}
\end{proposition}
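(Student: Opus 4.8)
The plan is to reduce the variance to the second-moment structure of the oracle residual under the independence of the treatments, exactly as in the proof of Proposition~\ref{prop:var}, and then keep track of the extra bilinear terms created by the interaction coefficients. First I would pass from the raw $Z$'s to the normalized $\tZ$'s. Writing $Z_{t-j} = p + p(1-p)\tZ_{t-j}$ and substituting into the model, $Y_t$ becomes a constant plus a sum of linear monomials $\tZ_{t-m}$ and bilinear monomials $\tZ_{t-j}\tZ_{t-j+1}$. The constant collapses to $C_t = p\sum_{k=0}^{t-1}\beta_k + p^2\sum_{k=1}^{t-1}\beta_{k-1,k} + \epsilon_t$, which is exactly the quantity squared in the last displayed term; the coefficient of $\tZ_{t-m}$ is $p(1-p)(\beta_m + p\beta_{m-1,m} + p\beta_{m,m+1})$, and the coefficient of $\tZ_{t-j}\tZ_{t-j+1}$ is $[p(1-p)]^2\beta_{j-1,j}$.

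Next I would compute the estimands under this model, showing $\tau_{k-1,k}=\beta_{k-1,k}$ and $\tau_k = \beta_k + p\beta_{k-1,k} + p\beta_{k,k+1}$ (up to negligible boundary terms), so that $w_k\tau_k$ and $w_{k-1,k}\tau_{k-1,k}$ match the corresponding monomial coefficients and subtracting the fitted part removes precisely the lag-$\le K$ monomials. The decisive structural difference between the two regressions is that the full specification also removes the bilinear monomials at lags $\ell\le K$, whereas the main-effects specification leaves all bilinear monomials $\tZ_{t-j}\tZ_{t-j+1}$, $1\le j\le t-1$, inside the residual. This is the sole source of the difference in the two formulas, namely the range of the middle sum ($\sum_{\ell=K+1}^{t-1}$ versus $\sum_{\ell=1}^{t-1}$), and so subtracting the two expressions isolates $\sum_{\ell=1}^{K}p(1-p)\beta_{\ell-1,\ell}^2$.

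I would then evaluate $(T-K)^{-1}\Var[\sum_{t}\tZ_{t-k}U_t]$ by writing the variance as $\E[(\cdot)^2]$ after checking $\E[\tZ_{t-k}U_t]=0$, and exploiting independence across $t$ together with $\E[\tZ_s]=0$, $\E[\tZ_s^2]=1/[p(1-p)]$. The key reductions are: the off-diagonal ($t\ne t'$) cross terms vanish because the largest time index in $\tZ_{t'-k}U_{t'}$ is carried by the lone mean-zero factor $\tZ_{t'-k}$, independent of everything of smaller index; and the diagonal terms factor into the constant piece $C_t^2/[p(1-p)]$, a linear piece producing the $\beta_\ell^2 + p^2\beta_{\ell-1,\ell}^2 + p^2\beta_{\ell,\ell+1}^2$ contributions, and a bilinear piece producing $p(1-p)\beta_{\ell-1,\ell}^2$. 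Summing over $t$ and over the surviving lags yields the two displayed formulas.

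The main obstacle is the combinatorial moment bookkeeping of the diagonal step for the main-effects residual: there the retained bilinear monomials at lags $k$ and $k+1$ share an index with the projection direction $\tZ_{t-k}$, so the relevant expectations involve $\E[\tZ_{t-k}^4]$ and third-order cross moments rather than just $\E[\tZ_{t-k}^2]$, and one must verify that these overlapping terms combine to the same $p(1-p)\beta_{\ell-1,\ell}^2$ contribution as the non-overlapping lags. Establishing the off-diagonal vanishing uniformly in the lags, and controlling the boundary terms in the identification $w_k\tau_k\leftrightarrow$ monomial coefficient, are the remaining technical points.
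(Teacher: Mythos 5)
Your outline follows the same route as the paper's proof: substitute $z_{t-j}=p+p(1-p)\tZ_{t-j}$, read off $w_k=p(1-p)$, $w_{k-1,k}=p^2(1-p)^2$, $\tau_k=\beta_k+p\beta_{k-1,k}+p\beta_{k,k+1}$ and $\tau_{k-1,k}=\beta_{k-1,k}$, observe that subtracting the fitted part removes exactly the lag-$\le K$ monomials (the bilinear ones only in the full specification, which is the sole structural difference), and then evaluate the variance termwise using independence. So at the level of strategy you have reproduced the paper's argument.

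However, the step you defer is a genuine gap, and it cannot be closed in the way you hope; your own bookkeeping makes this visible. (i) Your first paragraph correctly identifies the coefficient of the single monomial $\tZ_{t-\ell}$ in the residual as $p(1-p)(\beta_\ell+p\beta_{\ell-1,\ell}+p\beta_{\ell,\ell+1})$, so its diagonal contribution to $\Var[\tZ_{t-k}U_t]$ is $(\beta_\ell+p\beta_{\ell-1,\ell}+p\beta_{\ell,\ell+1})^2$, which contains the cross products $2p\beta_\ell\beta_{\ell-1,\ell}$, $2p\beta_\ell\beta_{\ell,\ell+1}$, $2p^2\beta_{\ell-1,\ell}\beta_{\ell,\ell+1}$; your step 4 asserts only the pure squares $\beta_\ell^2+p^2\beta_{\ell-1,\ell}^2+p^2\beta_{\ell,\ell+1}^2$, so it is inconsistent with your step 1 unless those cross products vanish. (ii) The overlap verification you flag actually fails for $p\neq 1/2$: for $\ell=k$ the retained monomial is $p^2(1-p)^2\beta_{k-1,k}\tZ_{t-k}^2\tZ_{t-k+1}$, and since $\E[\tZ^4]=\{1-3p(1-p)\}/\{p(1-p)\}^3$ its variance is $\{1-3p(1-p)\}\beta_{k-1,k}^2$, which equals the non-overlapping value $p(1-p)\beta_{k-1,k}^2$ only when $4p(1-p)=1$; moreover $\E[\tZ^3]=(1-2p)/\{p(1-p)\}^2\neq 0$ creates nonzero covariances involving these overlapping monomials. (iii) Your off-diagonal argument is also invalid for the main-effects residual, because there $U_{t'}$ contains bilinear monomials whose indices run up to $t'$ itself (e.g.\ $\tZ_{t'-1}\tZ_{t'}$), so $\tZ_{t'-k}$ is not the lone carrier of the largest index: pairing the term $p^2(1-p)^2\beta_{k,k+1}\tZ_{t-k}^2\tZ_{t-k-1}$ from time $t$ with the constant term $C_{t-1}\tZ_{t-1-k}$ from time $t-1$ gives the nonvanishing covariance contribution $\beta_{k,k+1}C_{t-1}$. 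All three discrepancies are of order one after averaging over $t$, so they cannot be absorbed into an asymptotic remainder. For what it is worth, the paper's own proof arrives at the displayed formulas only by making the same moves silently — it writes the variance of the perfectly correlated linear contributions as a sum of variances and treats overlapping lags and cross-$t$ terms as if non-overlapping/vanishing — so your plan mirrors the paper's route, but the obstacle you identify is real, is not resolved there either, and carrying out your step 4 honestly produces expressions that differ from the stated ones by the terms in (i)--(iii).
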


We face a natural trade-off when deciding whether to include interaction terms in the regression. Even if the primary goal is to estimate the main effects, Proposition~\ref{prop:int} shows that incorporating interaction terms can reduce asymptotic variance: the gain is exactly $(T-K)^{-1} \sum_{t=K+1}^T \sum_{\ell=1}^{K} p(1-p)\beta_{\ell-1,\ell}^2 = \sum_{\ell=1}^{K} p(1-p)\beta_{\ell-1,\ell}^2$, which is strictly positive whenever the interaction effects $\beta_{\ell-1,\ell}$ are non-negligible. Thus, including interactions improves asymptotic efficiency in the presence of interaction effects. However, this benefit comes at a cost. When the true interactions are small or absent, the variance reduction is negligible, while the inclusion of additional regressors increases model complexity and inflates finite-sample variance. Therefore, this creates a trade-off: including interactions can improve asymptotic efficiency but may sacrifice finite sample performance.

\section{General exposure mapping}\label{sec:general}

In practice, we may be interested in using the exposure mappings to summarize the treatment history. Define a collection of exposure mappings: $$g_{t,0} = g_0(Z_{t:t-k_0}), \quad g_{t,1} = g_1(Z_{t-k_0-1:t-k_1}), \ldots, \quad g_{t,S} = g_S(Z_{t-k_{S-1}-1:t-k_S}),$$ where $S \ge 0$ indexes the exposure mappings. The exposure lags satisfy $0 \le k_0 < k_1 < \ldots < k_S = K$, and each function $g_s$ is a known mapping from a block of treatment assignments to $\{0,1\}$: $g_0: \bR^{k_0+1} \to \{0,1\}, g_1: \bR^{k_1-k_0} \to \{0,1\}, \ldots, g_S: \bR^{k_S-k_{S-1}} \to \{0,1\}$. Intuitively, we divide the treatment history into $S+1$ consecutive time blocks, and within each block, the sequence of past treatments is summarized by a binary value from the exposure mapping. We assume that the potential outcome can be expressed in terms of these exposure mappings: $Y_t(Z_{t:1}) = Y_t(g_{t,0},g_{t,1},\ldots,g_{t,S})$. 
% In contrast to network experiments, where we typically assume the potential outcomes depend only on reduced-form exposure mappings, we allow potential outcomes to depend on the entire treatment history. 
We treat these exposure mappings as treatment indicators in Section~\ref{sec:setup}. For each exposure mapping $g_{t,s}$, we define a normalized version $\tilde{g}_{t,s} = (g_{t,s} - \E[g_{t,s}])/\Var[g_{t,s}]$ and corresponding weight 
\[
  w_{{\rm g},s} = \left[\frac{1}{T-K} \sum_{t=K+1}^T \left(\Var[g_{t,s}]\right)^{-1}\right]^{-1},
\]
for $s=0,1,\ldots,S$. We consider the following OLS regression:
\[
  \textbf{lm}(Y_t \sim \tilde{g}_{t,0} + \tilde{g}_{t,1} + \ldots + \tilde{g}_{t,S}).
\]
We denote the corresponding estimator by $\tilde\tau_{\rm g} = (\tilde \tau_{{\rm g},0},\ldots,\tilde \tau_{{\rm g},S})$. We define $\hat\tau_{{\rm g},s} = w_{{\rm g},s}^{-1} \tilde\tau_{{\rm g},s}$ for $s=0,1,\ldots,S$ analogous to \eqref{eq:hat_tau_k}, and obtain the transformed OLS estimator $\hat\tau_{\rm g} = (\hat\tau_{{\rm g},0},\ldots,\hat\tau_{{\rm g},S})$. The causal estimand is $\tau_{\rm g} = (\tau_{{\rm g},0},\ldots,\tau_{{\rm g},S})$, where each component $\tau_{{\rm g}, s}$  captures the causal effect of the $s$-th exposure mapping. Formally,
\begin{align*}
  \tau_{{\rm g}, s} = \frac{1}{T-K} \sum_{t=K+1}^T \E[&Y_t(g_{t,0},\ldots,g_{t,s-1},1,g_{t,s+1},\ldots,g_{t,S})- Y_t(g_{t,0},\ldots,g_{t,s-1},0,g_{t,s+1},\ldots,g_{t,S})],
\end{align*}
where the expectation averages over $(g_{t,0},\ldots,g_{t,s-1},g_{t,s+1},\ldots,g_{t,S})$, holding the exposure mapping $g_{t,s}$ fixed at either 1 or 0.

This exposure mapping framework unifies and generalizes many commonly studied models. If we take $g_0(z) = \cdots = g_S(z) = z$, and let $k_s = s$ with $S=K$, then each exposure mapping corresponds to a single treatment indicator at a lag. In this case, the model reduces to the OLS regression introduced in Section~\ref{sec:setup}, where each treatment indicator enters separately as a regressor. If instead we set $S=0$, $k_0 = K$, and define $g_0(z) = g(z)$ for some function $g : \{0,1\}^K \to \{0,1\}$, then the potential outcomes depend only on a single transformation of the treatment history over the most recent $K$ periods. This corresponds to models in which treatment effects are summarized by a single exposure mapping.

To write $\hat\tau_{\rm g}$ explicitly, recall the outcome vector $\mY_K = (Y_{K+1},\ldots,Y_T)^\top \in \bR^{T-K}$ and define the regressor matrix $\mZ_{{\rm g},K} = (\tZ_{{\rm g},K+1}^\top,\ldots,\tZ_{{\rm g},T}^\top)^\top \in \bR^{(T-K) \times (S+1)}$ where $\tZ_{{\rm g},t} = (g_{t,0},\ldots,g_{t,S})^\top$ for $t=K+1,\ldots,T$.  Let $\mW_{{\rm g},K} \in \bR^{(S+1)\times(S+1)}$ be the diagonal weight matrix $\mW_{{\rm g},K} = {\rm diag}(w_{{\rm g},s}:s=0,\ldots,S)$. Then, $\hat\tau_{\rm g}$ can be expressed as
\[
  \hat{\tau} = \mW_{{\rm g},K}^{-1} \tilde\tau_{{\rm g}} = \mW_{{\rm g},K}^{-1} (\mZ_{{\rm g},K}^\top \mZ_{{\rm g},K})^{-1} \mZ_{{\rm g},K}^\top \mY_K.
\]
% Analogous to the equivalence between the OLS regressions in \eqref{eq:ols} and \eqref{eq:ols_w}, $\hat\tau_{\rm g}$ can be obtained directly from the following OLS regression without an intercept:
% \[
%   \textbf{lm}(Y_t \sim w_{{\rm g},0} g_{t,0} + w_{{\rm g},1} g_{t,1} + \cdots + w_{{\rm g},S} g_{t,S}).
% \]
Analogous to the decomposition of $\hat\tau_k$ in \eqref{eq:hat_tau}, we can decompose $\hat\tau_{\rm g}$ as
\[
  \hat{\tau}_{\rm g} = \tau + \mW_{{\rm g},K}^{-1} [(T-K)^{-1}\mZ_{{\rm g},K}^\top \mZ_{{\rm g},K}]^{-1} [(T-K)^{-1}\mZ_{{\rm g},K}^\top (\mY_K - \mZ_{{\rm g},K} \mW_{{\rm g},K} \tau)].
\]

To study the asymptotic properties of the OLS estimator $\hat\tau_{\rm g}$, analogous to \eqref{eq:residual}, we define the oracle residual of OLS regression at time $t$ as $$U_{{\rm g}, t} = Y_t - \sum_{s=0}^S w_{{\rm g},s} \tau_{{\rm g},s} \tilde{g}_{t,s}.$$ We then define the asymptotic variance matrix $\mV_{\rm g} \in \bR^{(S+1) \times (S+1)}$ analogous to \eqref{eq:mV} as
\begin{align*}
  [\mV_{\rm g}]_{s, s'} = \frac{1}{T-K}\Cov\Big[\sum_{t=K+1}^T g_{t,s} U_{{\rm g}, t}, \sum_{t=K+1}^T g_{t,s'} U_{{\rm g}, t} \Big].
\end{align*}

To ensure well-defined and non-degenerate exposure mappings, we impose the following assumption, which is the analogue of Assumption~\ref{asp:ps} in the exposure setting.

\begin{assumption}[Treatment distributions  condition]\label{asp:ps_general}
  There exists some constant $\epsilon>0$ such that $\epsilon \le \inf_{t \ge 1, s=0,1,\ldots,S} \Var[g_{t,s}]$.
\end{assumption}

We also impose a boundedness condition on potential outcomes and exposure effects, analogous to Assumption~\ref{asp:moment}.
\begin{assumption}[Moment condition]\label{asp:moment_general}
  The potential outcomes $Y_t(z_{t:1})$ are uniformly bounded for all $t$ and all $z_{t:1} \in \bR^t$, i.e., $\sup_{t \ge 1, z_{t:1} \in \bR^t} \lvert Y_t(z_{t:1}) \rvert < \infty$. Additionally, $\lVert \tau_{\rm g} \rVert_1$ is uniformly bounded for all $T$, i.e., $\sup_{T} \lVert \tau_{\rm g} \rVert_1 < \infty$.
\end{assumption}

To avoid degeneracy, we impose an eigenvalue condition on $\mV_{\rm g}$.

\begin{assumption}[Eigenvalue condition]\label{asp:eigen_general}
  The smallest eigenvalue of $\mV_{\rm g}$ is uniformly bounded away from zero for all $T$, i.e., $\inf_{T \ge 1} \lambda_{\min}(\mV_{\rm g}) > 0$.
\end{assumption}

With these assumptions, we obtain the consistency and asymptotic normality of the exposure-effect estimator $\hat\tau_{\rm g}$.

\begin{theorem}[Consistency]\label{thm:consistency_general}
  Assume Assumption~\ref{asp:no-anticipation}, Assumption~\ref{asp:design}, Assumption~\ref{asp:ps_general}, Assumption~\ref{asp:ani}\ref{asp:ani:i} with $K \sum_{k=1}^T \theta_{T,k}/T \to 0$ or Assumption~\ref{asp:mdep}\ref{asp:mdep:i} with $K m/T \to 0$, Assumption~\ref{asp:moment_general}, and $K^2/T \to 0$. Then we have
  \begin{align*}
    \lVert \hat\tau_{\rm g} - \tau_{\rm g} \rVert_2 \stackrel{\sf p}{\longrightarrow} 0.
  \end{align*}
\end{theorem}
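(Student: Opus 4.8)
The plan is to follow the two-part template of Theorem~\ref{thm:consistency}, starting from the displayed decomposition of $\hat\tau_{\rm g}$. It suffices to show that the rescaled Gram matrix $(T-K)^{-1}\mZ_{{\rm g},K}^\top\mZ_{{\rm g},K}$ converges in probability to the invertible diagonal matrix $\mW_{{\rm g},K}^{-1}$, and that the score term $(T-K)^{-1}\mZ_{{\rm g},K}^\top(\mY_K - \mZ_{{\rm g},K}\mW_{{\rm g},K}\tau)$ converges in probability to zero in $\ell_2$; the remaining factor $\mW_{{\rm g},K}^{-1}$ is fixed with $\lVert\mW_{{\rm g},K}^{-1}\rVert_{\op}\le \epsilon^{-1}$, since each $w_{{\rm g},s}$ is a harmonic mean of the variances $\Var[g_{t,s}]\in[\epsilon,1/4]$ under Assumption~\ref{asp:ps_general}.

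For the Gram term, I would first use that at each fixed $t$ the normalized exposure mappings $\tilde g_{t,0},\ldots,\tilde g_{t,S}$ are functions of the \emph{disjoint} treatment blocks, hence mutually independent by Assumption~\ref{asp:design}; they are mean-zero with $\E[\tilde g_{t,s}\tilde g_{t,s'}]=\delta_{s,s'}\Var[g_{t,s}]^{-1}$. Taking expectations entrywise gives $\E[(T-K)^{-1}\mZ_{{\rm g},K}^\top\mZ_{{\rm g},K}]=\mW_{{\rm g},K}^{-1}$ exactly, a diagonal matrix with eigenvalues $w_{{\rm g},s}^{-1}\in[4,\epsilon^{-1}]$. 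It then remains to bound the entrywise variances: each summand $\tilde g_{t,s}\tilde g_{t,s'}$ is bounded by $\epsilon^{-2}$ and depends only on $Z_{t-K:t}$, so the sequence in $t$ is $K$-dependent. The crucial accounting is that the block lengths sum to $K+1$, which keeps the total Frobenius second moment at order $K/T$ times the effective dimension, so $(T-K)^{-1}\mZ_{{\rm g},K}^\top\mZ_{{\rm g},K}\stackrel{\sf p}{\longrightarrow}\mW_{{\rm g},K}^{-1}$ under $K^2/T\to0$. By the continuous mapping theorem and the uniform lower bound on its eigenvalues, the inverse is $O_p(1)$ in operator norm.

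For the score term, the central point is that its mean is exactly zero. Conditioning on the remaining mappings $\{\tilde g_{t,s'}\}_{s'\neq s}$ and applying the inverse-propensity identity for the binary $g_{t,s}$ yields $\E[\tilde g_{t,s}Y_t]=\tau_{t,{\rm g},s}$, the $t$-th summand in the definition of $\tau_{{\rm g},s}$, while orthogonality gives $\E[\tilde g_{t,s}\sum_{s'}\tilde g_{t,s'}w_{{\rm g},s'}\tau_{{\rm g},s'}]=w_{{\rm g},s}\tau_{{\rm g},s}\Var[g_{t,s}]^{-1}$; summing over $t$ and invoking the harmonic-mean definition of $w_{{\rm g},s}$ shows the $s$-th coordinate of the mean equals $\tau_{{\rm g},s}-\tau_{{\rm g},s}=0$. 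Since $Y_t$ enters only through $g_{t,0},\ldots,g_{t,S}$, the summand $\tilde g_{t,s}U_{{\rm g},t}$ is a bounded function of $Z_{t-K:t}$ (Assumptions~\ref{asp:ps_general} and~\ref{asp:moment_general}); this is precisely the $m$-dependence branch of Assumption~\ref{asp:mdep} with $m=K$, so the condition $Km/T\to0$ reduces to the stated $K^2/T\to0$, while Assumption~\ref{asp:ani} supplies the decaying-carryover alternative. A $K$-dependent second-moment bound then gives variance $O(K/T)$ for each coordinate, and summing over the $S+1\le K+1$ coordinates yields $\ell_2$ convergence to zero under $K^2/T\to0$.

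Combining the two parts through $\lVert\hat\tau_{\rm g}-\tau_{\rm g}\rVert_2\le\lVert\mW_{{\rm g},K}^{-1}\rVert_{\op}\,\lVert[(T-K)^{-1}\mZ_{{\rm g},K}^\top\mZ_{{\rm g},K}]^{-1}\rVert_{\op}\,\lVert(T-K)^{-1}\mZ_{{\rm g},K}^\top(\mY_K-\mZ_{{\rm g},K}\mW_{{\rm g},K}\tau)\rVert_2$ and Slutsky's theorem finishes the argument. I expect the main obstacle to be exactly the uniform variance control once $K$ (and hence the number of regressors) diverges: unlike the per-lag model, the exposure mappings pool treatments across blocks, so the relevant summands are $K$-dependent rather than $O(1)$-dependent, and one must exploit that the block lengths sum to $K+1$ to keep the Frobenius and $\ell_2$ errors at order $K/T$ rather than a higher power of $K$. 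This is where the hypothesis $K^2/T\to0$ is consumed.
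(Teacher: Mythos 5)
Your proposal is correct and takes essentially the same route as the paper: the paper's own proof of Theorem~\ref{thm:consistency_general} just invokes the decomposition of $\hat\tau_{\rm g}$ and transposes Lemma~\ref{lemma:matrix} and Lemma~\ref{lemma:consistency} (with Assumptions~\ref{asp:ps_general} and~\ref{asp:moment_general} replacing Assumptions~\ref{asp:ps} and~\ref{asp:moment}), which is exactly your two-step Gram-matrix/score argument followed by the operator-norm combination. The details you supply beyond the paper's sketch --- the orthogonality of the $\tilde g_{t,s}$ across disjoint blocks, the mean-zero identity for the score via the inverse-propensity computation, the automatic $K$-dependence coming from $Y_t = Y_t(g_{t,0},\ldots,g_{t,S})$, and the block-length accounting that keeps the Frobenius error at order $(S+1)K/T \le K^2/T$ --- all check out.
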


\begin{theorem}[Asymptotic normality]\label{thm:clt_general}
  Assume Assumption~\ref{asp:no-anticipation}, Assumption~\ref{asp:design}, Assumption~\ref{asp:ps_general}, Assumption~\ref{asp:ani}\ref{asp:ani:ii} with $K$ fixed or Assumption~\ref{asp:mdep}\ref{asp:mdep:ii} with $K^4/T \to 0$, Assumption~\ref{asp:moment_general}, and Assumption~\ref{asp:eigen_general}. Then for any $\lambda_K \in \bR^{S+1}$ satisfying $0 < \inf_{T \ge 1} (\lVert \lambda_K \rVert_2/\lVert \lambda_K \rVert_1) \le \sup_{T \ge 1} (\lVert \lambda_K \rVert_2/\lVert \lambda_K \rVert_1) < \infty$, we have
  \begin{align*}
    \sqrt{T-K} (\lambda_K^\top \mV_{\rm g} \lambda_K)^{-1/2} \lambda_K^\top (\hat\tau_{\rm g} - \tau_{\rm g}) \stackrel{\sf d}{\longrightarrow} N(0,1).
  \end{align*}
\end{theorem}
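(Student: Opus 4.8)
The plan is to follow the proof of Theorem~\ref{thm:clt} line by line, treating the normalized exposure mappings $\tilde g_{t,0},\ldots,\tilde g_{t,S}$ as the transformed regressors in place of $\tZ_t,\ldots,\tZ_{t-K}$. Starting from the decomposition of $\hat\tau_{\rm g}$ displayed above,
\[
  \sqrt{T-K}\,\lambda_K^\top(\hat\tau_{\rm g}-\tau_{\rm g})
  = \lambda_K^\top \mW_{{\rm g},K}^{-1}\big[(T-K)^{-1}\mZ_{{\rm g},K}^\top\mZ_{{\rm g},K}\big]^{-1}\cdot(T-K)^{-1/2}\mZ_{{\rm g},K}^\top\big(\mY_K-\mZ_{{\rm g},K}\mW_{{\rm g},K}\tau_{\rm g}\big),
\]
where the $s$-th coordinate of the rightmost vector equals $(T-K)^{-1/2}\sum_{t=K+1}^T \tilde g_{t,s}U_{{\rm g},t}$ with the oracle residual $U_{{\rm g},t}$. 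The first step is to show the Gram-matrix convergence $(T-K)^{-1}\mZ_{{\rm g},K}^\top\mZ_{{\rm g},K}\stackrel{\sf p}{\longrightarrow}\mW_{{\rm g},K}^{-1}$, so that the prefactor $\mW_{{\rm g},K}^{-1}[(T-K)^{-1}\mZ_{{\rm g},K}^\top\mZ_{{\rm g},K}]^{-1}$ converges to the identity. By Slutsky's theorem the display then reduces to the self-normalized scalar $(\lambda_K^\top\mV_{\rm g}\lambda_K)^{-1/2}(T-K)^{-1/2}\sum_t(\lambda_K^\top\tilde g_{t,\cdot})U_{{\rm g},t}$, which is exactly what must be shown to converge to $N(0,1)$.

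The crucial structural simplification, and the reason the limiting Gram matrix is diagonal, is that the blocks $Z_{t:t-k_0},Z_{t-k_0-1:t-k_1},\ldots$ defining $g_{t,0},\ldots,g_{t,S}$ are disjoint: under the independence of treatments (Assumption~\ref{asp:design}) the exposure mappings at a fixed $t$ are mutually independent, so $\E[\tilde g_{t,s}\tilde g_{t,s'}]=0$ for $s\neq s'$ and $\E[\tilde g_{t,s}^2]=\Var[g_{t,s}]^{-1}$. Averaging over $t$ gives $\E[(T-K)^{-1}\mZ_{{\rm g},K}^\top\mZ_{{\rm g},K}]=\mW_{{\rm g},K}^{-1}$ exactly, by the harmonic-mean definition of $w_{{\rm g},s}$. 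It then remains to prove concentration of each entry around its mean: since each $\tilde g_{t,s}$ is bounded (Assumption~\ref{asp:ps_general} forces $\Var[g_{t,s}]\ge\epsilon$, hence $\lvert\tilde g_{t,s}\rvert\le\epsilon^{-1}$) and the sequence $\{\tilde g_{t,s}\tilde g_{t,s'}\}_t$ has finite dependence range $K\vee m$ (under Assumption~\ref{asp:mdep}) or decaying dependence (under Assumption~\ref{asp:ani}), each averaged entry has variance of order $O(1/T)$ times the dependence range, and a bound over the $(S+1)^2$ entries controlled by $K^4/T\to0$ yields Frobenius convergence. The same block-independence argument shows the score is mean-zero on aggregate: $\E[\tilde g_{t,s}Y_t]$ equals the per-period exposure effect $\E[Y_t(\cdots,1,\cdots)-Y_t(\cdots,0,\cdots)]$ at time $t$, and summing $\E[\tilde g_{t,s}U_{{\rm g},t}]$ over $t$ cancels against $w_{{\rm g},s}\tau_{{\rm g},s}\sum_t\Var[g_{t,s}]^{-1}$ by the definitions of $\tau_{{\rm g},s}$ and $w_{{\rm g},s}$.

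With the score centred, I apply a central limit theorem for weakly dependent sequences \citep{chen2004normal, chandrasekhar2023general} to the normalized sum $(T-K)^{-1/2}\sum_t\xi_t$, with $\xi_t=(\lambda_K^\top\tilde g_{t,\cdot})U_{{\rm g},t}$. The summands are bounded (boundedness of $Y_t$ and $\lVert\tau_{\rm g}\rVert_1$ by Assumption~\ref{asp:moment_general}, and of $\tilde g_{t,s}$), and they inherit the weak dependence of the treatment path: under Assumption~\ref{asp:mdep}\ref{asp:mdep:ii} the sequence $\{\xi_t\}$ is $(K\vee m)$-dependent because $g_{t,s}$ and $Y_t$ each depend only on treatments within lag $K\vee m$; under Assumption~\ref{asp:ani}\ref{asp:ani:ii} with $K$ fixed it has decaying dependence of the same order as $\theta_{T,k}$. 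The eigenvalue condition (Assumption~\ref{asp:eigen_general}), together with the assumed $\ell_2/\ell_1$ control on $\lambda_K$, guarantees $\lambda_K^\top\mV_{\rm g}\lambda_K$ is bounded away from zero, so the self-normalization is non-degenerate, and the Stein-type bound supplied by the CLT gives a quantitative rate that vanishes under $K^4/T\to0$.

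I expect the main obstacle to be the joint control of the growing dimension $S+1$ (bounded by $K+1$) together with the weak dependence. The two delicate estimates are: (a) bounding $\lVert\mW_{{\rm g},K}^{-1}[(T-K)^{-1}\mZ_{{\rm g},K}^\top\mZ_{{\rm g},K}]^{-1}-\mI\rVert_2$ and showing it multiplies a score of bounded self-normalized length, which relies on the spectral lower bound from Assumption~\ref{asp:eigen_general} and on $K^4/T\to0$ to kill the cross terms introduced by the off-diagonal Gram fluctuations; and (b) verifying the moment and dependence hypotheses of the weak-dependence CLT uniformly in $T$ for the triangular array $\{\xi_t\}$, where both the dependence range and the number of regressors may drift with $T$. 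Once these are in place, the conclusion follows by Slutsky's theorem exactly as in Theorem~\ref{thm:clt}.
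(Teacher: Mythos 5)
Your proposal is correct and follows essentially the same route as the paper, which proves this theorem by establishing exposure-mapping analogues of Lemmas~\ref{lemma:matrix}, \ref{lemma:consistency} and \ref{lemma:clt} (with Assumptions~\ref{asp:ps_general}, \ref{asp:moment_general}, \ref{asp:eigen_general} substituted for Assumptions~\ref{asp:ps}, \ref{asp:moment}, \ref{asp:eigen}) and then combining them via Slutsky exactly as in the proof of Theorem~\ref{thm:clt}. In fact, your write-up supplies details the paper leaves implicit, notably that disjointness of the treatment blocks makes the exposure mappings at a fixed $t$ mutually independent (so the limiting Gram matrix is the diagonal $\mW_{{\rm g},K}^{-1}$) and that the cancellation $\sum_t \E[\tilde g_{t,s} U_{{\rm g},t}]=0$ follows from the binary-exposure identity $\E[\tilde g_{t,s}Y_t]=\E[Y_t(\ldots,1,\ldots)-Y_t(\ldots,0,\ldots)]$ together with the harmonic-mean definition of $w_{{\rm g},s}$.
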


To estimate the asymptotic variance $\mV_{\rm g}$ in Theorem~\ref{thm:clt_general}, we apply the HAC variance estimator introduced in Section~\ref{sec:hac}. Let $$\hat U_{{\rm g}, t} = Y_t - \sum_{s=0}^S w_{{\rm g},s} \hat\tau_{{\rm g},s} \tilde{g}_{t,s}$$ be the sample residual at time $t$ analogous to \eqref{eq:residual_sample}. Using these residuals, we can then construct the HAC variance estimator in the same way as in \eqref{eq:hac_var}, and the consistency result is parallel to Theorem~\ref{thm:var}, since the inclusion of exposure mappings simply changes the number of regressors from $K+1$ to $S+1$.

% Beyond the OLS regression on exposure mappings, we also allow for the inclusion of time-varying covariates $\tilde g’_t = \tilde g’_t(Z_{t-K-1:1}) \in \bR^p$, where $p \geq 0$ is fixed. The regression specification then takes the form:
% \[
%   \textbf{lm}(Y_t \sim \tilde{g}_{t,0} + \tilde{g}_{t,1} + \ldots + \tilde{g}_{t,S} + \tilde{g}'_t).
% \]
% Here we enrich the regression with additional predictors. For example, setting $\tilde g’_t = Y_{t-K-1}$ introduces lagged outcomes as covariates, which only depends on $Z_{t-K-1:1}$. This specification improves estimation efficiency by leveraging auxiliary information that is predictive of the outcome but uncorrelated with the treatment assignment until lag-$k$.

\section{More details on the relationship between full OLS, marginal OLS and WLS}\label{sec:ols_wls}

In this section, we provide more details on the relationship between full OLS, marginal OLS and WLS discussed in Section~\ref{sec:full_marginal}.

\subsection{Full OLS and marginal OLS}

By Theorem~\ref{thm:clt}, the full OLS estimator satisfies the following CLT:
\[
\sqrt{T-K} \mV_{k+1,k+1}^{-1/2} (\hat\tau_k - \tau_k) \stackrel{\sf d}{\longrightarrow} N(0, 1),
\]
where $\mV$ denotes the asymptotic variance matrix of the full OLS estimator defined in \eqref{eq:mV}. The corresponding asymptotic variance component for lag $k$ can be written as
\[
  \mV_{k+1,k+1} = \frac{1}{T-K} \Var\left[\sum_{t=K+1}^T \tZ_{t-k} U_t\right] = \frac{1}{T-K} \Var\left[\sum_{t=K+1}^T \tZ_{t-k} \left(Y_t - \sum_{\ell=0}^K \tZ_{t-\ell} w_\ell \tau_\ell\right)\right].
\]

Define the asymptotic variance of the marginal OLS estimator as
\[
V_{{\rm marginal},k} = \frac{1}{T-K} \Var\left[\sum_{t=K+1}^T \tZ_{t-k} (Y_t - \tZ_{t-k} w_k \tau_k)\right].
\]
Analogous to Theorem~\ref{thm:clt}, the marginal OLS estimator also satisfies the following CLT:
\[
\sqrt{T-K} V_{{\rm marginal},k}^{-1/2} (\hat\tau_k - \tau_k) \stackrel{\sf d}{\longrightarrow} N(0, 1).
\]

Both $\mV$ and $V_{{\rm marginal},k}$ can be estimated using the HAC variance estimator.
By Theorem~\ref{thm:var}, we have for the full OLS estimator:
\[
\lVert \hat\mV - \mV - \mB_K \rVert_\F \stackrel{\sf p}{\longrightarrow} 0.
\]

Let $b_k \in \bR^{T-K}$ be defined by
\[
[b_k]_{t-K} = \tau_{t,k} - \frac{w_k}{p_{t-k}(1-p_{t-k})} \tau_k, \quad t=K+1,\ldots,T.
\]
Then $[\mB_K]_{k+1,k+1} = b_k^\top \mQ_K b_k$, which implies for the full OLS estimator:
\[
\lvert \hat\mV_{k+1,k+1} - \mV_{k+1,k+1} - b_k^\top \mQ_K b_k \rvert \stackrel{\sf p}{\longrightarrow} 0.
\]

Similarly, let $\hat\mV_{{\rm marginal},k}$ denote the HAC variance estimator for the marginal OLS. Then for the marginal OLS estimator, following the proof of Theorem~\ref{thm:var}, we have
\[
\lvert \hat\mV_{{\rm marginal},k} - V_{{\rm marginal},k} - b_k^\top \mQ_K b_k \rvert \stackrel{\sf p}{\longrightarrow} 0,
\]
since by the definition of $b_k$, for $t=K+1,\ldots,T$,
\[
\E[\tZ_{t-k}(Y_t - \tZ_{t-k} w_k \tau_k)] = \tau_{t,k} - \frac{w_k}{p_{t-k}(1-p_{t-k})} \tau_k = [b_k]_{t-K}.
\]

Hence, both $\hat\mV$ and $\hat\mV_{{\rm marginal},k}$ share the same bias term $b_k^\top \mQ_K b_k$ when estimating $\mV_{k+1,k+1}$ and $V_{{\rm marginal},k}$, respectively. The comparison between full and marginal OLS variance estimators therefore reduces to their asymptotic true variances.

\subsection{Marginal OLS and WLS}

Assume without loss of generality that we are interested in the treatment effect at lag $0$. To directly estimate the causal effect, we center the treatment and assign a weight of $1/[p_t(1-p_t)]$ to each time point $t$. For comparison, \cite{gao2023causal} did not center the treatment and use weight $1/p_t$ and $1/(1-p_t)$ since they estimated the average potential outcomes for each group seperately. The only distinction between marginal OLS and WLS is that marginal OLS applies weights to the treatment indicators, whereas WLS applies weights to the observations. Let $\tilde\mZ = (Z_1 - p_1,\ldots,Z_T - p_T)^\top$, $\tilde \mW = \diag(1/[p_t(1-p_t)],t=1,\ldots,T)$, and $\tilde\mY = (Y_1,\ldots,Y_T)^\top$. Then the WLS estimator can be written as
\begin{align*}
  (\tilde\mZ^\top \tilde\mW \tilde\mZ)^{-1}  (\tilde\mZ^\top \tilde\mW \tilde\mY) = \Big(\frac{1}{T}\sum_{t=1}^T \frac{(Z_t-p_t)^2}{p_t(1-p_t)}\Big)^{-1} \Big(\frac{1}{T}\sum_{t=1}^T \tZ_t Y_t\Big).
\end{align*}

In contrast, marginal OLS estimator can be written as
\begin{align*}
  \hat{\tau} = \Big(\Big(\frac{1}{T} \sum_{t=1}^T \frac{1}{p_t(1-p_t)}\Big)^{-1}\frac{1}{T}\sum_{t=1}^T \frac{(Z_t-p_t)^2}{p_t^2(1-p_t)^2}\Big)^{-1} \Big(\frac{1}{T}\sum_{t=1}^T \tZ_t Y_t\Big).
\end{align*}

The numerators of these two estimators are identical, and under certain conditions, both denominators converge to 1. If $p_t$'s values are constant, the two estimators coincide. However, when we infer treatment effects across multiple lags using full OLS, WLS cannot be generalized.

\section{Lemmas for the main results}\label{sec:lemmas}

We need the following lemmas to prove the main results.

\begin{lemma}\label{lemma:matrix}
  Assume Assumption~\ref{asp:no-anticipation}, Assumption~\ref{asp:design} and Assumption~\ref{asp:ps}. If $K^2/T \to 0$, then
  \begin{align*}
    \left\lVert \frac{1}{T-K} \mZ_K^\top \mZ_K - \mW_K^{-1} \right\rVert_\F \stackrel{\sf p}{\longrightarrow} 0.
  \end{align*}
\end{lemma}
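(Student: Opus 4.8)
The plan is to control the expected squared Frobenius error and then invoke Markov's inequality. The first observation is that $(T-K)^{-1}\mZ_K^\top\mZ_K$ is \emph{exactly} unbiased for $\mW_K^{-1}$. The $(k{+}1,k'{+}1)$ entry is $(T-K)^{-1}\sum_{t=K+1}^T \tZ_{t-k}\tZ_{t-k'}$; since $\E[\tZ_t]=0$ and the treatments are independent (Assumption~\ref{asp:design}), every off-diagonal entry ($k\neq k'$) has mean zero, while each diagonal entry has mean $(T-K)^{-1}\sum_{t=K+1}^T \E[\tZ_{t-k}^2] = (T-K)^{-1}\sum_{t=K+1}^T [p_{t-k}(1-p_{t-k})]^{-1} = w_k^{-1}$, using $\E[\tZ_t^2]=1/[p_t(1-p_t)]$. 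Hence $\E[(T-K)^{-1}\mZ_K^\top\mZ_K] = \mW_K^{-1}$, and the Frobenius error is mean zero, so
\[
\E\left\lVert \frac{1}{T-K}\mZ_K^\top\mZ_K - \mW_K^{-1}\right\rVert_\F^2 = \sum_{k,k'=0}^K \Var\left(\frac{1}{T-K}\sum_{t=K+1}^T \tZ_{t-k}\tZ_{t-k'}\right).
\]

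The core step is to show each entrywise variance is $O(1/(T-K))$ uniformly in $(k,k')$. Under Assumption~\ref{asp:ps} one has $|\tZ_t| \le 1/\epsilon$, so every product $X_t := \tZ_{t-k}\tZ_{t-k'}$ and its low-order moments are bounded by constants depending only on $\epsilon$. The key point is that the summands $\{X_t\}_t$ are uncorrelated: for $t\neq s$ the four indices $t-k,t-k',s-k,s-k'$ cannot group into coincident pairs unless $t=s$ or $k=k'$, so by independence together with $\E[\tZ_t]=0$ at least one factor is isolated and $\Cov(X_t,X_s)=0$ (on the diagonal $k=k'$ the $X_t=\tZ_{t-k}^2$ are outright independent across $t$). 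Therefore $\Var(\sum_t X_t) = \sum_t \Var(X_t) \le C(\epsilon)(T-K)$, which yields the claimed $O(1/(T-K))$ bound for every entry.

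Summing over the $(K+1)^2$ entries then gives
\[
\E\left\lVert \frac{1}{T-K}\mZ_K^\top\mZ_K - \mW_K^{-1}\right\rVert_\F^2 = O\!\left(\frac{(K+1)^2}{T-K}\right) = O\!\left(\frac{K^2}{T}\right) \longrightarrow 0
\]
under the hypothesis $K^2/T\to 0$ (note that $T-K\sim T$ in this regime), and Markov's inequality delivers $\lVert (T-K)^{-1}\mZ_K^\top\mZ_K - \mW_K^{-1}\rVert_\F \stackrel{\sf p}{\longrightarrow} 0$. The hard part will be the covariance bookkeeping in the second paragraph: verifying that the cross-time covariances vanish (or, more weakly, that each $t$ has only $O(1)$ correlated neighbors) uniformly over all lag pairs, since this is exactly what keeps each entry at order $1/(T-K)$ and allows the factor $(K+1)^2$ to be absorbed by the assumption $K^2/T\to 0$.
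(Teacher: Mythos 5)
Your proposal is correct and follows essentially the same route as the paper's proof: establish exact unbiasedness $\E[(T-K)^{-1}\mZ_K^\top \mZ_K] = \mW_K^{-1}$, bound each entrywise variance by $O(1/(T-K))$ using the vanishing of cross-time covariances (which your index-coincidence argument justifies in more detail than the paper, which simply asserts it from independence), sum over the $(K+1)^2$ entries, and conclude by Markov's inequality under $K^2/T \to 0$.
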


\begin{lemma}\label{lemma:consistency}
  Assume Assumption~\ref{asp:no-anticipation}, Assumption~\ref{asp:design}, Assumption~\ref{asp:ani}\ref{asp:ani:i} with $K \sum_{k=1}^T \theta_{T,k}/T \to 0$ or Assumption~\ref{asp:mdep}\ref{asp:mdep:i} with $K m/T \to 0$, Assumption~\ref{asp:moment} and $K^2/T \to 0$. Then we have
  \begin{align*}
    \left\lVert \frac{1}{T-K} \mZ_K^\top (\mY_K - \mZ_K \mW_K \tau) \right\rVert_2 \stackrel{\sf p}{\longrightarrow} 0.
  \end{align*}
\end{lemma}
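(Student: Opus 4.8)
The plan is to recognize the target as a coordinatewise mean-zero average and reduce convergence to a variance bound. Writing the product out, the $j$-th block identity $\mZ_K^\top(\mY_K-\mZ_K\mW_K\tau)=\sum_{t=K+1}^T\tZ_{t:t-K}U_t$ with the oracle residual $U_t=Y_t-\tZ_{t:t-K}^\top\mW_K\tau$ from \eqref{eq:residual} shows that the vector in question has coordinates $S_k=\frac{1}{T-K}\sum_{t=K+1}^T\tZ_{t-k}U_t$ for $k=0,\ldots,K$. Since $\lVert\cdot\rVert_2^2$ is the sum of the squared coordinates, Markov's inequality reduces the claim to showing $\sum_{k=0}^K\E[S_k^2]\to 0$. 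Thus I would first prove $\E[S_k]=0$ (so $\E[S_k^2]=\Var[S_k]$) and then bound $\sum_{k=0}^K\Var[S_k]$.

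The unbiasedness step is where the design and the specific centering/scaling of $\tZ_t$ enter. Because the treatments are independent (Assumption~\ref{asp:design}) and $\E[\tZ_t]=0$, one has $\E[\tZ_{t-k}\tZ_{t-\ell}]=0$ for $k\neq\ell$ and $\E[\tZ_{t-k}^2]=\Var[Z_{t-k}]^{-1}$. Conditioning $Z_{t-k}$ on the remaining treatments gives the propensity-weighting identity $\E[\tZ_{t-k}Y_t]=\E[Y_t(Z_{t:t-k+1},1,Z_{t-k-1:1})-Y_t(Z_{t:t-k+1},0,Z_{t-k-1:1})]=\tau_{t,k}$, exactly the estimand in \eqref{eq:tau_t}. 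Hence $\E[\tZ_{t-k}U_t]=\tau_{t,k}-w_k\tau_k/\Var[Z_{t-k}]$, and summing over $t$ the harmonic-mean definition of $w_k$ in \eqref{eq:w_k} together with $\tau_k=(T-K)^{-1}\sum_t\tau_{t,k}$ makes the two terms cancel, so $\E[S_k]=0$.

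For the variance, I would first record that $V_{t,k}:=\tZ_{t-k}U_t$ is uniformly bounded by a constant $C$: Assumption~\ref{asp:ps} gives $|\tZ_{t-\ell}|\le\epsilon^{-1}$ and $w_\ell\le 1/4$, so $|\tZ_{t:t-K}^\top\mW_K\tau|\le(4\epsilon)^{-1}\lVert\tau\rVert_1$, which with the bounded potential outcomes of Assumption~\ref{asp:moment} bounds $|U_t|$ and hence $|V_{t,k}|$. The covariance bound splits by dependence type. Under $m$-dependence (Assumption~\ref{asp:mdep}), $V_{t,k}$ is a function of $Z_{t:t-(K\vee m)}$, so $\Cov[V_{t,k},V_{s,k}]=0$ for $|t-s|>K\vee m$ and is at most $C^2$ otherwise, giving $\Var[S_k]\lesssim(K\vee m)/(T-K)$. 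Under decaying carryover (Assumption~\ref{asp:ani}) I would use a coupling: for $t-s>K$, replace $Z_{s:1}$ inside $Y_t$ by an independent copy $Z'_{s:1}$ to form $V'_{t,k}$; since $t-K>s$ the $\tZ$-factors are untouched, so $V'_{t,k}$ is independent of $V_{s,k}$ and $|\Cov[V_{t,k},V_{s,k}]|\le 2C\,\E|V_{t,k}-V'_{t,k}|\le 2C\epsilon^{-1}\theta_{T,t-s}$, recognizing the replaced block as exactly the quantity defining $\theta_{T,t-s}$. For $|t-s|\le K$ the crude bound $C^2$ is kept.

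Summing over $t,s$ and over $k$, the near-diagonal band contributes $\lesssim K^2/(T-K)\to 0$ by $K^2/T\to 0$, while the off-diagonal tail contributes $\lesssim K\sum_{j=1}^T\theta_{T,j}/(T-K)\to 0$ under $K\sum_{k=1}^T\theta_{T,k}/T\to 0$ (respectively $\lesssim K(K\vee m)/(T-K)\to 0$ under $Km/T\to 0$ and $K^2/T\to 0$). This yields $\sum_k\Var[S_k]\to 0$, and Markov's inequality finishes the proof. The main obstacle I anticipate is the coupling step in the decaying-carryover case: aligning the cut so that the replaced block matches $\theta_{T,t-s}$ precisely, verifying the independence of $V'_{t,k}$ from $V_{s,k}$, and carrying all constants uniformly as $K$ diverges so that the factor $(K+1)$ from summing over coordinates is correctly absorbed by the rate conditions.
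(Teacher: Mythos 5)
Your proposal is correct and follows essentially the same route as the paper's proof: the same coordinatewise decomposition with the oracle residuals $U_t$, the same mean-zero computation exploiting the centering/scaling of $\tZ_t$ and the harmonic-mean definition of $w_k$, and the same variance bound (a coupling argument under decaying carryover, a banded zero-covariance argument under $m$-dependence) finished by Markov's inequality with the identical rate bookkeeping. The only cosmetic differences are that you couple the whole residual $U_t$ at once where the paper expands the covariance into separate terms and couples each, and that you explicitly invoke Assumption~\ref{asp:ps} for boundedness of $\tZ_t$ and $w_k/[p_{t-k}(1-p_{t-k})]$ --- an assumption the lemma statement omits but which the paper's proof also uses implicitly and which Theorem~\ref{thm:consistency} supplies.
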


\begin{lemma}\label{lemma:clt}
  Assume Assumption~\ref{asp:no-anticipation}, Assumption~\ref{asp:design}, Assumption~\ref{asp:ani}\ref{asp:ani:ii} with $K$ fixed or Assumption~\ref{asp:mdep}\ref{asp:mdep:ii} with $K^4/T \to 0$, Assumption~\ref{asp:moment} and Assumption~\ref{asp:eigen}. Then for any $\lambda_K \in \bR^{K+1}$ satisfying $0 < \inf_{T \ge 1} (\lVert \lambda_K \rVert_2/\lVert \lambda_K \rVert_1) \le \sup_{T \ge 1} (\lVert \lambda_K \rVert_2/\lVert \lambda_K \rVert_1) < \infty$, we have
  \begin{align*}
    \sqrt{T-K} (\lambda_K^\top \mV \lambda_K)^{-1/2} \lambda_K^\top \left[\frac{1}{T-K} \mZ_K^\top (\mY_K - \mZ_K \mW_K \tau)\right] \stackrel{\sf d}{\longrightarrow} N(0,1).
  \end{align*}
\end{lemma}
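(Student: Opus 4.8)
The plan is to reduce the vector statement to a one-dimensional central limit theorem for a weakly dependent array and then invoke a Stein-method Berry--Esseen bound. Write $\lambda_K = (\lambda_{K,1},\ldots,\lambda_{K,K+1})^\top$ and define the scalar summands $\xi_t = (\lambda_K^\top \tZ_{t:t-K}) U_t$ for $t = K+1,\ldots,T$, with $S_T = \sum_{t=K+1}^T \xi_t$. Expanding the left-hand side shows the quantity in the lemma equals $S_T/\sqrt{\Var[S_T]}$, since by the definition of $\mV$ in \eqref{eq:mV} one has $(T-K)\,\lambda_K^\top \mV \lambda_K = \Var[S_T]$. First I would record the mean: using independence of the treatments and $\E[\tZ_t] = 0$, a direct computation gives $\E[\tZ_{t-k} U_t] = \tau_{t,k} - w_k\tau_k/[p_{t-k}(1-p_{t-k})] = [\mb_K]_{t-K,k+1}$; summing over $t$ and using $\sum_{t}\tau_{t,k} = (T-K)\tau_k$ from \eqref{eq:tau} together with $\sum_t [p_{t-k}(1-p_{t-k})]^{-1} = (T-K)w_k^{-1}$ from \eqref{eq:w_k}, each column of $\mb_K$ sums to zero, so $\E[S_T] = \lambda_K^\top \sum_t \E[\tZ_{t:t-K} U_t] = 0$. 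Hence the target is exactly $S_T/\sqrt{\Var[S_T]} \stackrel{\sf d}{\longrightarrow} N(0,1)$ for the centered sum.

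Next I would establish the uniform bounds and the dependence structure. Under Assumption~\ref{asp:ps} we have $|\tZ_t| \le 1/\epsilon$ and $w_k \in [\epsilon(1-\epsilon), 1/4]$, so boundedness of the potential outcomes and of $\lVert \tau \rVert_1$ (Assumption~\ref{asp:moment}) yields a uniform bound $|U_t| \le C_U$ and therefore $|\xi_t| \le C\lVert \lambda_K \rVert_1$, giving $\E|\xi_t|^p \lesssim \lVert \lambda_K \rVert_1^p$ for $p=3,4$. For the non-degeneracy, Assumption~\ref{asp:eigen} and the ratio condition on $\lambda_K$ give $\lambda_K^\top \mV \lambda_K \ge \lambda_{\min}(\mV)\lVert \lambda_K \rVert_2^2 \gtrsim \lVert \lambda_K \rVert_2^2$, hence $\Var[S_T] \gtrsim (T-K)\lVert \lambda_K \rVert_2^2$. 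Under Assumption~\ref{asp:mdep}, $\xi_t$ is a function of $Z_{t:t-(K\vee m)}$, so $\{\xi_t\}$ is $(K\vee m)$-dependent with dependency neighborhoods of size $D \lesssim K\vee m$.

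I would then apply a Berry--Esseen bound for sums with bounded dependency neighborhoods \citep{chen2004normal, chandrasekhar2023general}, of the schematic form $d_{\mathrm W}(S_T/\sigma_T, N(0,1)) \lesssim D^2 \sigma_T^{-3}\sum_t \E|\xi_t|^3 + D^{3/2}\sigma_T^{-2}(\sum_t \E[\xi_t^4])^{1/2}$, where $\sigma_T^2 = \Var[S_T]$. Substituting the bounds above and using that the ratio condition forces $\lVert \lambda_K \rVert_1 \asymp \lVert \lambda_K \rVert_2$ (so $\lVert \lambda_K \rVert_1^3/\lVert \lambda_K \rVert_2^3 = O(1)$), the first term is $\lesssim (K\vee m)^2/(T-K)^{1/2}$ and the second is $\lesssim (K\vee m)^{3/2}/(T-K)^{1/2}$; both vanish under $m^4/T \to 0$ (Assumption~\ref{asp:mdep}\ref{asp:mdep:ii}) and $K^4/T \to 0$. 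For the decaying carryover case with $K$ fixed, only $U_t$ carries long-range dependence, so I would truncate $\xi_t$ to a version depending only on $Z_{t:t-r}$, control the truncation error by $\theta_{T,r}$, apply the $m$-dependent bound to the $r$-dependent approximation, and let $r\to\infty$ slowly; the rate $\theta_{T,k} = O(k^{-1-\delta})$ of Assumption~\ref{asp:ani}\ref{asp:ani:ii} guarantees $\sum_k \theta_{T,k} < \infty$, making the accumulated truncation error negligible.

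The main obstacle is making the Berry--Esseen rate and the non-degeneracy work simultaneously when $K$ and $m$ both diverge: the dependency-neighborhood moment terms grow like $(K\vee m)^2\sum_t\E|\xi_t|^3$, and beating the variance floor $\Var[S_T]\gtrsim (T-K)\lVert \lambda_K \rVert_2^2$ is precisely what forces the $(K\vee m)^4/T \to 0$ condition, while the $\lVert \lambda_K \rVert_2/\lVert \lambda_K \rVert_1$ control is exactly what keeps the Lyapunov ratio $\lVert \lambda_K \rVert_1^3/\lVert \lambda_K \rVert_2^3$ bounded so the projection does not concentrate its mass on too many coordinates. In the decaying-carryover regime the delicate point is instead the bookkeeping that transfers the CLT from the non-finitely-dependent sequence $\{\xi_t\}$ to its $r$-dependent truncation without the errors accumulating across the $T-K$ summands.
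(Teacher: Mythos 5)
Your reduction to the scalar sum $S_T=\sum_t \xi_t$, the mean-zero computation via the columns of $\mb_K$, the variance identity $(T-K)\,\lambda_K^\top\mV\lambda_K=\Var[S_T]$, and the moment and non-degeneracy bounds (via Assumption~\ref{asp:moment}, Assumption~\ref{asp:eigen}, and the $\lVert\lambda_K\rVert_2/\lVert\lambda_K\rVert_1$ condition) all match the paper's argument. For the $m$-dependence branch, your dependency-neighborhood Berry--Esseen bound gives exactly the paper's rate $(K\vee m)^2(T-K)^{-1/2}$ (the paper invokes Theorem 2.6 of \cite{chen2004normal} with $p=3$), so that half of the proof is essentially the same as the paper's.

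The gap is in the decaying-carryover branch. The paper does not truncate: it applies the affinity-set CLT of \cite{chandrasekhar2023general} directly, with neighborhoods $\{t':\lvert t-t'\rvert\le D\}$, $D=T^{\epsilon}$; crucially, that theorem requires only covariance-level control of the long-range dependence (its third condition), which is precisely $\sum_{k\ge D}\theta_{T,k}\lesssim D^{-\delta}\to 0$. Your truncation route replaces $\xi_t$ by an $r$-dependent version $\xi_t^{(r)}$ and asserts that the accumulated error is negligible because $\sum_k\theta_{T,k}<\infty$. That justification does not work as stated: the accumulated $L_1$ truncation error is $\sum_t\E\lvert\xi_t-\xi_t^{(r)}\rvert\lesssim (T-K)\,\theta_{T,r}$, which after dividing by $\sigma_T\asymp\sqrt{T-K}$ is of order $\sqrt{T}\,r^{-1-\delta}$; since your Berry--Esseen step on the truncated sequence forces $r^2/\sqrt{T}\to0$, i.e.\ $r\ll T^{1/4}$, this error vanishes only when $\delta>1$, whereas Assumption~\ref{asp:ani}\ref{asp:ani:ii} allows any $\delta>0$. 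To make truncation work for all $\delta>0$ you need a second-moment rather than an $L_1$ argument: using $\E[S_T]=0$, write $S_T=\bigl(\sum_t\xi_t^{(r)}-\E\sum_t\xi_t^{(r)}\bigr)+\bigl(\sum_t\eta_t-\E\sum_t\eta_t\bigr)$ with $\eta_t=\xi_t-\xi_t^{(r)}$, and show $\Var[\sum_t\eta_t]=o(T)$: pairs with $\lvert t-s\rvert\le r\vee K$ contribute $\lesssim T\,r\,\theta_{T,r}$ via $\E[\eta_t^2]\lesssim\theta_{T,r}$ (boundedness plus the $L_1$ bound), while pairs with $\lvert t-s\rvert> r\vee K$ contribute $\lesssim T\sum_{k>r}\theta_{T,k}$ because $\xi_t^{(r)}$ is independent of $\eta_s$ (for $t-s>r\vee K$) and $\lvert\Cov[\xi_t,\eta_s]\rvert\lesssim\theta_{T,\lvert t-s\rvert}$; both terms are $O(T r^{-\delta})$, so Slutsky transfers the CLT from the centered truncated sum to $S_T/\sigma_T$. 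Without this covariance-decay computation (or without switching to the paper's direct affinity-set argument), your proof is incomplete for $0<\delta\le1$.
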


Now we prove these lemmas in the following subsections.

\subsection{Proof of Lemma~\ref{lemma:matrix}}

\begin{proof}[Proof of Lemma~\ref{lemma:matrix}]

Recall 
\[
\frac{1}{T-K}\mZ_K^\top \mZ_K = \frac{1}{T-K} \sum_{t=K+1}^T \tZ_{t:t-K} \tZ_{t:t-K}^\top,
\]
with the $(i,j)$-th entry
\[
\left[\frac{1}{T-K} \mZ_K^\top \mZ_K\right]_{ij} = \frac{1}{T-K} \sum_{t=K+1}^T \tZ_{t-i+1} \tZ_{t-j+1}.
\]

Under Assumptions~\ref{asp:design} and \ref{asp:ps}, the sequence $Z_t$ is independent over $t$, and $\E[\tZ_{t}]=0,\E[\tZ_{t}^2]=\Var[Z_t]^{-1}=\big(p_t(1-p_t)\big)^{-1}$, with $p_t\in(0,1)$ uniformly bounded away from $0$ and $1$. If $i\neq j$, independence and centering imply $\E[\tZ_{t-i+1}\tZ_{t-j+1}]=0$. If $i=j$, $\E[\tZ_{t-i+1}^2]=\big(p_{t-i+1}(1-p_{t-i+1})\big)^{-1}$. Therefore 
\[
\E\left[\frac{1}{T-K} \mZ_K^\top \mZ_K\right] = \mW_K^{-1}.
\]

For $i\neq j$, we have $\E[\tZ_{t-i+1}\tZ_{t-j+1}]=0$ and $\sup_t \E[(\tZ_{t-i+1}\tZ_{t-j+1})^2] < \infty$ since the treatment probability $p_t$ is uniformly bounded away from $0$ and $1$. Moreover, because $\tZ_t$ is independent across $t$, we have for $t \neq t'$,
\[
\Cov(\tZ_{t-i+1}\tZ_{t-j+1},\tZ_{t'-i+1}\tZ_{t'-j+1})=0.
\]
Consequently,
\begin{align*}
  &\Var\left[\left[\frac{1}{T-K}\mZ_K^\top \mZ_K - \mW_K^{-1}\right]_{ij}\right] = \Var\left[\frac{1}{T-K} \sum_{t=K+1}^T \tZ_{t-i+1} \tZ_{t-j+1}\right] \\
  =& \frac{1}{(T-K)^2} \sum_{t=K+1}^T \Var[\tZ_{t-i+1} \tZ_{t-j+1}] \le \frac{1}{T-K} \sup_t \E[(\tZ_{t-i+1} \tZ_{t-j+1})^2] \lesssim \frac{1}{T-K}.
\end{align*}

For $i=j$, we have $\sup_t \E[\tZ_{t-i+1}^2] < \infty$ and $\tZ_t$ is independent across $t$. Thus
\begin{align*}
  &\Var\left[\left[\frac{1}{T-K}\mZ_K^\top \mZ_K - \mW_K^{-1}\right]_{ij}\right] = \Var\left[\frac{1}{T-K} \sum_{t=K+1}^T \tZ_{t-i+1}^2 \right] \\
  =& \frac{1}{(T-K)^2} \sum_{t=K+1}^T \Var[\tZ_{t-i+1}^2] \le \frac{1}{T-K} \sup_t \E[\tZ_{t-i+1}^2] \lesssim \frac{1}{T-K}.
\end{align*}

Combine the entrywise bounds:
\[
\E\left\Vert\frac{1}{T-K}\mZ_K^\top \mZ_K - \mW_K^{-1}\right\Vert_\F^2 = \sum_{i,j=1}^{K+1} \Var\left[\left[\frac{1}{T-K}\mZ_K^\top \mZ_K - \mW_K^{-1}\right]_{ij}\right] \lesssim \frac{K^2}{T-K}.
\]
By Markov's inequality, if $K^2/T \to 0$, then
\[
\left\lVert \frac{1}{T-K}\mZ_K^\top \mZ_K - \mW_K^{-1}\right\rVert_\F \stackrel{\sf p}{\longrightarrow} 0.
\]
This completes the proof.

\end{proof}

\subsection{Proof of Lemma~\ref{lemma:consistency}}

\begin{proof}[Proof of Lemma~\ref{lemma:consistency}]
Expanding the term, we have:
\begin{align}\label{eq:appendix_1}
  \frac{1}{T-K} \mZ_K^\top (\mY_K - \mZ_K \mW_K \tau) = \frac{1}{T-K} \sum_{t=K+1}^T \tZ_{t:t-K} (Y_t - \tZ_{t:t-K}^\top \mW_K \tau).
\end{align}

We first consider the mean of \eqref{eq:appendix_1}. By the independence of $\tZ_t$ and $\E[\tZ_t] = 0$, we can show that $\E[(T-K)^{-1}\mZ_K^\top (\mY_K - \mZ_K \mW_K \tau)] = 0$ since for any $k = 0, \ldots, K$, 
\begin{align*}
  &\E\left[\frac{1}{T-K} \sum_{t=K+1}^T \tZ_{t-k}\left(Y_t - \tZ_{t:t-K}^\top \mW_K \tau\right)\right] = \E\left[\frac{1}{T-K} \sum_{t=K+1}^T \tZ_{t-k}\left(Y_t - \sum_{\ell=0}^K \tZ_{t-\ell}w_\ell \tau_\ell\right)\right]\\
  =& \frac{1}{T-K} \sum_{t=K+1}^T \left(\E[\tZ_{t-k} Y_t] - \Var[\tZ_{t-k}] w_k \tau_k\right) \\
  =& \left(\frac{1}{T-K} \sum_{t=K+1}^T \E[\tZ_{t-k} Y_t]\right) - \left(\frac{1}{T-K} \sum_{t=K+1}^T \Var[\tZ_{t-k}]\right) w_k \tau_k \\
  =& \left(\frac{1}{T-K} \sum_{t=K+1}^T \tau_{t,k} \right) - \left(\frac{1}{T-K} \sum_{t=K+1}^T \Var[Z_{t-k}]^{-1}\right) w_k \tau_k = \tau_k - \tau_k = 0.
\end{align*}

We then consider the variance of \eqref{eq:appendix_1}. For any $k = 0, \ldots, K$, we have
\begin{align*}
  &\Var\left[\frac{1}{T-K} \sum_{t=K+1}^T \tZ_{t-k}\left(Y_t - \tZ_{t:t-K}^\top \mW_K \tau\right)\right] = \Var\left[\frac{1}{T-K} \sum_{t=K+1}^T \tZ_{t-k}\left(Y_t - \sum_{\ell=0}^K \tZ_{t-\ell}w_\ell \tau_\ell\right)\right]\\
  =& \frac{1}{(T-K)^2} \sum_{t=K+1}^T \Var\left[\tZ_{t-k}\left(Y_t - \sum_{\ell=0}^K \tZ_{t-\ell}w_\ell \tau_\ell\right)\right] \\
  &+ \frac{1}{(T-K)^2} \sum_{t,t'=K+1, t\neq t'}^T \Cov\left[\tZ_{t-k}\left(Y_t - \sum_{\ell=0}^K \tZ_{t-\ell}w_\ell \tau_\ell\right), \tZ_{t'-k}\left(Y_{t'} - \sum_{\ell=0}^K \tZ_{t'-\ell}w_\ell \tau_\ell\right)\right].
\end{align*}

By Assumption~\ref{asp:moment},
\[
\sup_{t} \Var\left[\tZ_{t-k}\left(Y_t - \sum_{\ell=0}^K \tZ_{t-\ell}w_\ell \tau_\ell\right)\right] \lesssim \sup_t \lvert Y_t \rvert^2 + \sup_t \lVert \tau \rVert_1^2 < \infty,
\]
which implies
\[
  \frac{1}{(T-K)^2} \sum_{t=K+1}^T \Var\left[\tZ_{t-k}\left(Y_t - \sum_{\ell=0}^K \tZ_{t-\ell}w_\ell \tau_\ell\right)\right] \lesssim \frac{1}{T-K}.
\]

By the independence of $\tZ_t$, for any $t,t' \ge K+1$ with $\lvert t-t' \rvert > K$, we have
\begin{align*}
  &\Cov\left[\tZ_{t-k}\left(Y_t - \sum_{\ell=0}^K \tZ_{t-\ell}w_\ell \tau_\ell\right), \tZ_{t'-k}\left(Y_{t'} - \sum_{\ell=0}^K \tZ_{t'-\ell}w_\ell \tau_\ell\right)\right] \\
  =& \Cov\left[\tZ_{t-k} Y_t, \tZ_{t'-k} Y_{t'}\right] - \Cov\left[\tZ_{t-k}Y_t, \tZ_{t'-k} \sum_{\ell=0}^K \tZ_{t'-\ell}w_\ell \tau_\ell\right] - \Cov\left[\tZ_{t-k} \sum_{\ell=0}^K \tZ_{t-\ell}w_\ell \tau_\ell, \tZ_{t'-k} Y_{t'}\right].
\end{align*}

We handle the dependence structure in two cases corresponding to Assumptions~\ref{asp:ani} and~\ref{asp:mdep}, respectively.

\subsubsection{Decaying carryover effects condition (Assumption~\ref{asp:ani})}

Assume $t>t'$ without loss of generality. Consider $Z'_{t':1} = (Z'_{t'}, \ldots, Z'_1)^\top$ independent and identically distributed with $Z_{t':1}$. Then
\begin{align*}
  &\E\left[\tZ_{t-k} Y_t \tZ_{t'-k} Y_{t'}\right] = \E\left[\tZ_{t-k} Y_t(Z_{t:1}) \tZ_{t'-k} Y_{t'}(Z_{t':1})\right]\\
  =& \E\left[\tZ_{t-k} Y_t(Z_{t:1}) \tZ_{t'-k} Y_{t'}(Z_{t':1})\right] - \E\left[\tZ_{t-k} Y_t(Z_{t:{t'+1}}, Z'_{t':1}) \tZ_{t'-k} Y_{t'}(Z_{t':1})\right] \\
  &+ \E\left[\tZ_{t-k} Y_t(Z_{t:{t'+1}}, Z'_{t':1}) \tZ_{t'-k} Y_{t'}(Z_{t':1})\right].
\end{align*}

Since $(Z_{t:t'+1}, Z'_{t':1})$ is independent of $Z_{t':1}$, we have
\[
  \E\left[\tZ_{t-k} Y_t(Z_{t:{t'+1}}, Z'_{t':1}) \tZ_{t'-k} Y_{t'}(Z_{t':1})\right] = \E\left[\tZ_{t-k} Y_t(Z_{t:1})\right] \E\left[\tZ_{t'-k} Y_{t'}(Z_{t':1})\right] = \E\left[\tZ_{t-k} Y_t\right] \E\left[\tZ_{t'-k} Y_{t'}\right].
\]

By Assumption~\ref{asp:moment} and Assumption~\ref{asp:ani}, we have
\begin{align*}
  &\left\lvert \E\left[\tZ_{t-k} Y_t(Z_{t:1}) \tZ_{t'-k} Y_{t'}(Z_{t':1})\right] - \E\left[\tZ_{t-k} Y_t(Z_{t:{t'+1}}, Z'_{t':1}) \tZ_{t'-k} Y_{t'}(Z_{t':1})\right] \right\rvert \\
  =& \left\lvert \E\left[\tZ_{t-k} \left(Y_t(Z_{t:1}) - Y_t(Z_{t:{t'+1}}, Z'_{t':1}) \right) \tZ_{t'-k} Y_{t'}(Z_{t':1})\right] \right\rvert \\
  \le& \E\left[\left\lvert Y_t(Z_{t:1}) - Y_t(Z_{t:{t'+1}}, Z'_{t':1})\right\rvert \right] \lesssim \theta_{T,\lvert t-t' \rvert}.
\end{align*}

Therefore, we have
\begin{align*}
  \left\lvert\Cov\left[\tZ_{t-k} Y_t, \tZ_{t'-k} Y_{t'}\right] \right\rvert = \left\lvert \E\left[\tZ_{t-k} Y_t \tZ_{t'-k} Y_{t'}\right] - \E\left[\tZ_{t-k} Y_t\right] \E\left[\tZ_{t'-k} Y_{t'}\right] \right\rvert \lesssim \theta_{T,\lvert t-t' \rvert}.
\end{align*}

By Assumption~\ref{asp:moment} and Assumption~\ref{asp:ani}, we can show similarly that
\begin{align*}
  \left\lvert \Cov\left[\tZ_{t-k}Y_t, \tZ_{t'-k} \sum_{\ell=0}^K \tZ_{t'-\ell}w_\ell \tau_\ell\right] \right\rvert\lesssim \theta_{T,\lvert t-t' \rvert}.
\end{align*}

By $t > t'$ and $\lvert t-t' \rvert > K$, we have
\begin{align*}
  \Cov\left[\tZ_{t-k} \sum_{\ell=0}^K \tZ_{t-\ell}w_\ell \tau_\ell, \tZ_{t'-k} Y_{t'}\right] = 0.
\end{align*}

Therefore, we have
\begin{align*}
  \left\lvert\Cov\left[\tZ_{t-k}\left(Y_t - \sum_{\ell=0}^K \tZ_{t-\ell}w_\ell \tau_\ell\right), \tZ_{t'-k}\left(Y_{t'} - \sum_{\ell=0}^K \tZ_{t'-\ell}w_\ell \tau_\ell\right)\right]\right\rvert \lesssim \theta_{T,\lvert t-t' \rvert}.
\end{align*}

Summing over all time pairs gives
\begin{align*}
\Var\left[\frac{1}{T-K} \sum_{t=K+1}^T \tZ_{t-k}\left(Y_t - \tZ_{t:t-K}^\top \mW_K \tau\right)\right] \lesssim \frac{1}{T-K} \left(K + \sum_{k=K+1}^{T-K} \theta_{T,k}\right).
\end{align*}

\subsubsection{m-dependence condition (Assumption~\ref{asp:mdep})}

By Assumption~\ref{asp:moment} and Assumption~\ref{asp:mdep}, we have
\[
\left\lvert\Cov\left[\tZ_{t-k}\left(Y_t - \sum_{\ell=0}^K \tZ_{t-\ell}w_\ell \tau_\ell\right), \tZ_{t'-k}\left(Y_{t'} - \sum_{\ell=0}^K \tZ_{t'-\ell}w_\ell \tau_\ell\right)\right]\right\rvert \lesssim \ind\left(\left\lvert t-t' \right\rvert \le m\right).
\]

Summing over all time pairs gives
\[
\Var\left[\frac{1}{T-K} \sum_{t=K+1}^T \tZ_{t-k}\left(Y_t - \tZ_{t:t-K}^\top \mW_K \tau\right)\right] \lesssim \frac{1}{T-K} \left(K + m \right).
\]

\subsubsection{Final step}

By Assumption~\ref{asp:ani}\ref{asp:ani:i} with $K \sum_{k=1}^T \theta_{T,k}/T \to 0$ or Assumption~\ref{asp:mdep}\ref{asp:mdep:i} with $K m/T \to 0$, and $K^2/T \to 0$, we have
\begin{align*}
  \E \left\lVert \frac{1}{T-K} \mZ_K^\top (\mY_K - \mZ_K \mW_K \tau) \right\rVert_2^2 = \sum_{k=0}^K \Var\left[\frac{1}{T-K} \sum_{t=K+1}^T \tZ_{t-k}\left(Y_t - \tZ_{t:t-K}^\top \mW_K \tau\right)\right] \to 0.
\end{align*}

This completes the proof by Markov's inequality.

\end{proof}

\subsection{Proof of Lemma~\ref{lemma:clt}}

\begin{proof}[Proof of Lemma~\ref{lemma:clt}]

In the same way as the proof of Lemma~\ref{lemma:consistency}, expanding the term, we have:
\[
  \frac{1}{T-K} \mZ_K^\top (\mY_K - \mZ_K \mW_K \tau) = \frac{1}{T-K} \sum_{t=K+1}^T \tZ_{t:t-K} (Y_t - \tZ_{t:t-K}^\top \mW_K \tau).
\]

We can write
\begin{align*}
  &\sqrt{T-K} (\lambda_K^\top \mV \lambda_K)^{-1/2} \lambda_K^\top \left[\frac{1}{T-K} \mZ_K^\top (\mY_K - \mZ_K \mW_K \tau)\right]\\
  =&\sqrt{T-K} (\lambda_K^\top \mV \lambda_K)^{-1/2} \lambda_K^\top \left[\frac{1}{T-K} \sum_{t=K+1}^T \tZ_{t:t-K} (Y_t - \tZ_{t:t-K}^\top \mW_K \tau)\right]\\
  =& \sum_{t=K+1}^T (T-K)^{-1/2}  (\lambda_K^\top \mV \lambda_K)^{-1/2}  \lambda_K^\top \tZ_{t:t-K} (Y_t - \tZ_{t:t-K}^\top \mW_K \tau).
\end{align*}

Define the triangular array $[U_{T,t}]_{t=K+1}^T$ by
\begin{align*}
  U_{T,t} = (T-K)^{-1/2} (\lambda_K^\top \mV \lambda_K)^{-1/2}  \lambda_K^\top \tZ_{t:t-K} (Y_t - \tZ_{t:t-K}^\top \mW_K \tau).
\end{align*}

As shown earlier in the proof of Lemma~\ref{lemma:consistency},
\[
\E\left[\tZ_{t:t-K}(Y_t-\tZ_{t:t-K}^\top \mW_K\tau)\right]=0,
\]
hence \(\E[U_{T,t}]=0\). By the definition of \(\mV\) in \eqref{eq:mV}, we have
\[
\Var\left[\sum_{t=K+1}^T U_{T,t}\right] = 1.
\]

We then establish the moment bound for $U_{T,t}$. For $p \ge 2$, using $(\lambda_K^\top \mV \lambda_K)\ge \lambda_{\min}(\mV)\lVert \lambda_K \rVert_2^2$, the $p$-th moment of $U_{T,t}$ is bounded by
\begin{align*}
  \E[\lvert U_{T,t}\rvert^p] =& (T-K)^{-p/2} (\lambda_K^\top \mV \lambda_K)^{-p/2} \E[\lvert \lambda_K^\top \tZ_{t:t-K} (Y_t - \tZ_{t:t-K}^\top \mW_K \tau) \rvert^p]\\
  \le & (T-K)^{-p/2} \lVert \lambda_K \rVert_2^{-p} \lambda_{\min}(\mV)^{-p/2} \E[\lvert \lambda_K^\top \tZ_{t:t-K} (Y_t - \tZ_{t:t-K}^\top \mW_K \tau)\rvert^p]\\
  =& (T-K)^{-p/2} \lVert \lambda_K \rVert_2^{-p} \lambda_{\min}(\mV)^{-p/2} \E[\lvert \lambda_K^\top \tZ_{t:t-K}\rvert^p \lvert Y_t - \tZ_{t:t-K}^\top \mW_K \tau \rvert^p]\\
  \lesssim& (T-K)^{-p/2} \lVert \lambda_K \rVert_2^{-p} \lambda_{\min}(\mV)^{-p/2} \lVert \lambda_K \rVert_1^p \E[\lvert Y_t - \tZ_{t:t-K}^\top \mW_K \tau \rvert^p]\\
  \lesssim & (T-K)^{-p/2} \lambda_{\min}(\mV)^{-p/2} (\lVert \lambda_K \rVert_1/\lVert \lambda_K \rVert_2)^p \E[\lvert Y_t \rvert^p + \lVert \tau \rVert_1^p]\\
  \lesssim& (T-K)^{-p/2},
\end{align*}
where we used $0 < \inf_{T \ge 1} (\lVert \lambda_K \rVert_2/\lVert \lambda_K \rVert_1) \le \sup_{T \ge 1} (\lVert \lambda_K \rVert_2/\lVert \lambda_K \rVert_1) < \infty$, the uniform boundedness in Assumption~\ref{asp:moment}, and the lower bound of the smallest eigenvalue in Assumption~\ref{asp:eigen}. So the triangular array $[(T-K)^{1/2}U_{T,t}]_{t=K+1}^T$ has uniformly bounded $p$-th moments.

We now handle the dependence structure in two cases corresponding to Assumptions~\ref{asp:ani} and~\ref{asp:mdep}, respectively.

\subsubsection{Decaying carryover effects condition (Assumption~\ref{asp:ani}).}

Before starting the proof, we first introduce the CLT for dependent triangular arrays with general covariance-based conditions in \cite{chandrasekhar2023general}. 

\begin{lemma}[Theorem 1 in \cite{chandrasekhar2023general} under dimension one]\label{lemma:clt_general}
  Let $[Z_{n,i}]_{i=1}^n$ be a triangular array of random variables with mean zero. For each $n$, we define the affinity set $\mathcal A^n_i \subset \{1, \ldots, n\}$ for each $i$. Let $\Omega_n := \sum_{i=1}^n \sum_{j \in \mathcal A^n_i} \Cov[Z_{n,i}, Z_{n,j}]$. Let $Z_{n,-i} := \sum_{j \notin \mathcal A^n_i} Z_{n,j}$. Assume that
  \begin{enumerate}
    \item $\sum_{i=1}^n \sum_{j,k \in \mathcal A^n_i} \E[\lvert Z_{n,i} \rvert Z_{n,j} Z_{n,k}] = o(\lvert \Omega_n \rvert^{3/2})$.
    \item $\sum_{i,j=1}^n \sum_{k \in \mathcal A^n_i, \ell \in A^n_j} \Cov[Z_{n,i} Z_{n,k}, Z_{n,j} Z_{n,\ell}] = o(\lvert \Omega_n \rvert^2)$.
    \item $\sum_{i=1}^n \E[\lvert\E[Z_{n,i} Z_{n,-i} \given Z_{n,-i}]\rvert] = \sum_{i=1}^n \E[\lvert Z_{n,-i} \E[Z_{n,i}  \given Z_{n,-i}]\rvert] = o(\lvert \Omega_n \rvert)$.
  \end{enumerate}
  Then the triangular array $[Z_{n,i}]_{i=1}^n$ satisfies the CLT:
  \[
    \Omega_n^{-1/2} \sum_{i=1}^n Z_{n,i} \stackrel{\sf d}{\longrightarrow} N(0,1).
  \]
\end{lemma}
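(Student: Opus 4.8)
The plan is to prove this normal approximation by Stein's method, which is the canonical route for central limit theorems under local or weak dependence and is the tool underlying the cited reference. Write $\xi_i = \Omega_n^{-1/2} Z_{n,i}$ and $W = \sum_{i=1}^n \xi_i$, so that $\E[W]=0$, and for a test function $h$ with bounded derivatives let $f = f_h$ solve the Stein equation $f'(w) - w f(w) = h(w) - \E[h(\zeta)]$ with $\zeta \sim N(0,1)$; recall $f_h$ satisfies the standard bounds $\lVert f \rVert_\infty, \lVert f' \rVert_\infty, \lVert f'' \rVert_\infty < \infty$. Since convergence of $\E[h(W)] \to \E[h(\zeta)]$ along such a class of $h$ metrizes weak convergence, it suffices to show $\E[f'(W) - W f(W)] \to 0$. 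For each $i$ set $\eta_i = \sum_{j \in \mathcal A_i^n} \xi_j$ and $W_{-i} = \sum_{j \notin \mathcal A_i^n} \xi_j = \Omega_n^{-1/2} Z_{n,-i}$, so $W = W_{-i} + \eta_i$ and $W_{-i}$ is measurable with respect to $\{Z_{n,j} : j \notin \mathcal A_i^n\}$.

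The core computation expands $\E[W f(W)] = \sum_i \E[\xi_i f(W)]$ by a second-order Taylor expansion of $f$ about $W_{-i}$, giving
\begin{align*}
  \E[W f(W)] = \sum_i \E[\xi_i f(W_{-i})] + \sum_i \E[\xi_i \eta_i f'(W_{-i})] + \tfrac12 \sum_i \E[\xi_i \eta_i^2 f''(\zeta_i)]
\end{align*}
for intermediate points $\zeta_i$, and I treat the three sums in turn. For the first (leakage) sum, subtracting $f(0)$ and using $\E[\xi_i]=0$ with the mean value theorem gives $\lvert\E[\xi_i f(W_{-i})]\rvert \le \lVert f' \rVert_\infty \E[\lvert W_{-i}\,\E[\xi_i \mid Z_{n,-i}]\rvert]$; summing and rescaling by $\Omega_n$ shows this equals $\lVert f'\rVert_\infty \Omega_n^{-1}\sum_i \E[\lvert Z_{n,-i}\,\E[Z_{n,i}\mid Z_{n,-i}]\rvert]$, which is $o(1)$ by condition~3. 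The third (remainder) sum is bounded by $\tfrac12 \lVert f'' \rVert_\infty \sum_i \E[\lvert\xi_i\rvert\eta_i^2] = \tfrac12 \lVert f'' \rVert_\infty \Omega_n^{-3/2}\sum_i \sum_{j,k \in \mathcal A_i^n} \E[\lvert Z_{n,i}\rvert Z_{n,j} Z_{n,k}]$, which is $o(1)$ by condition~1.

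For the middle sum I first replace $f'(W_{-i})$ by $f'(W)$ at the cost of a term bounded by $\lVert f''\rVert_\infty \sum_i \E[\lvert\xi_i\rvert\eta_i^2]$, again $o(1)$ by condition~1, leaving $\E[f'(W)\,T]$ with $T = \sum_i \xi_i \eta_i = \sum_i \sum_{j \in \mathcal A_i^n} \xi_i \xi_j$. By the definition of $\Omega_n$ and mean-zero centering, $\E[T] = \Omega_n^{-1} \sum_i \sum_{j \in \mathcal A_i^n}\Cov[Z_{n,i}, Z_{n,j}] = 1$, so it remains to show $\E[f'(W)(T-1)] \to 0$. Bounding this by $\lVert f' \rVert_\infty \E[\lvert T - \E T\rvert] \le \lVert f' \rVert_\infty (\Var[T])^{1/2}$ and expanding $\Var[T] = \Omega_n^{-2}\sum_{i,j}\sum_{k \in \mathcal A_i^n, \ell \in \mathcal A_j^n} \Cov[Z_{n,i}Z_{n,k}, Z_{n,j}Z_{n,\ell}]$ shows $\Var[T] \to 0$ by condition~2. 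Collecting the three estimates yields $\E[W f(W)] = \E[f'(W)] + o(1)$, i.e.\ $\E[f'(W) - W f(W)] \to 0$, which is the desired conclusion.

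The step I expect to be the main obstacle is the concentration of the random normalizer $T$ in the middle sum: heuristically $T$ is the conditional variance of $W$, and replacing $\E[f'(W)\,T]$ by $\E[f'(W)]\E[T]$ is exactly where dependence across distinct affinity sets must be shown not to accumulate. This is what condition~2 buys, but confirming that the double sum defining $\Var[T]$ is genuinely $o(\Omega_n^2)$ — not merely that each summand is small — requires careful bookkeeping of which index quadruples $(i,j,k,\ell)$ yield nonvanishing fourth-order covariances. The leakage term governed by condition~3 is the other delicate piece, since it is precisely where the affinity sets are permitted to be imperfect (covariances need not vanish outside $\mathcal A_i^n$); handling it through the linearization $f(W_{-i}) - f(0)$ rather than a crude sup bound is what keeps the estimate at the level of condition~3 instead of forcing a stronger mixing requirement.
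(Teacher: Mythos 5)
The paper itself contains no proof of this lemma: it is imported verbatim as Theorem~1 of \citet{chandrasekhar2023general} (specialized to dimension one) and used purely as a black box in the proof of Lemma~\ref{lemma:clt}. So the relevant comparison is with the cited source, and your reconstruction follows exactly the Stein's-method route that underlies it. Your bookkeeping is correct, and each hypothesis is consumed once and in the right place: writing $W = W_{-i} + \eta_i$ with $\eta_i = \sum_{j \in \mathcal A_i^n} \xi_j$, condition~1 controls both the second-order Taylor remainder $\tfrac12\sum_i \E[\lvert\xi_i\rvert \eta_i^2 \lvert f''\rvert]$ and the swap of $f'(W_{-i})$ for $f'(W)$; condition~2 is precisely $\Var[T] \to 0$ for the quadratic term $T = \sum_i \xi_i\eta_i$ with $\E[T] = 1$ by the definition of $\Omega_n$; and condition~3 handles the leakage term $\sum_i \E[\xi_i f(W_{-i})]$ after centering by $f(0)$. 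Your treatment of the leakage term is the one genuinely delicate step and you got it right: the tower-property step conditions on $\sigma(Z_{n,-i})$ and is legitimate precisely because $W_{-i} = \Omega_n^{-1/2} Z_{n,-i}$ is a deterministic function of the \emph{scalar} $Z_{n,-i}$, which is why condition~3 can be stated with conditioning on the sum rather than on the full vector $(Z_{n,j})_{j \notin \mathcal A_i^n}$; a less careful expansion would have forced the stronger vector conditioning.

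Two small points to make explicit in a polished write-up. First, the conclusion implicitly requires $\Omega_n > 0$ for all large $n$: the hypotheses involve only $\lvert \Omega_n \rvert$, but the normalization $\xi_i = \Omega_n^{-1/2} Z_{n,i}$ and the conclusion $\Omega_n^{-1/2}\sum_i Z_{n,i} \stackrel{\sf d}{\longrightarrow} N(0,1)$ presuppose positivity; this is a feature of the statement, not a gap in your argument. Second, to pass from $\E[f'(W) - Wf(W)] \to 0$ to weak convergence you should fix the test class and the associated Stein bounds, e.g.\ absolutely continuous $h$ with $\lVert h'\rVert_\infty \le 1$, for which the Stein solution satisfies $\lVert f_h\rVert_\infty \le 2$, $\lVert f_h'\rVert_\infty \le \sqrt{2/\pi}$, $\lVert f_h''\rVert_\infty \le 2$ (with $f_h'$ absolutely continuous, which suffices for your Taylor step); convergence uniformly over this class gives convergence in the bounded-Wasserstein metric, which metrizes convergence in distribution. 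It is also worth noting that your proof nowhere uses $i \in \mathcal A_i^n$ --- everything is formal in the three conditions --- which is consistent with the statement and with the application in the paper, where $\mathcal A_t^T = \{t' : \lvert t - t'\rvert \le D\}$ does contain $t$.
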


To apply Lemma~\ref{lemma:clt_general}, we follow the proof in Section 3 of \cite{chandrasekhar2023general}. We define $\mathcal A^T_t = \{t': \lvert t-t' \rvert \le D\}$, where $D = T^{\epsilon} \to \infty$ as $T \to \infty$ for some small $\epsilon > 0$.

By the selection of $D$, we have
\begin{align*}
  \left\lvert\sum_{t=K+1}^T \sum_{t' \notin \mathcal A^T_t} \Cov\left[U_{T,t}, U_{T,t'}\right] \right\vert \lesssim \sum_{t=K+1}^T \sum_{t' \notin \mathcal A^T_t} \frac{1}{T-K}  \theta_{T, \lvert t-t' \rvert} \lesssim \sum_{k=D}^T \theta_{T,k} \lesssim D^{-\delta} = o(1),
\end{align*}
which implies that
\begin{align*}
  \Omega_T =& \sum_{t=K+1}^T \sum_{t' \in \mathcal A^T_t} \Cov\left[U_{T,t}, U_{T,t'}\right] = \Var\left[\sum_{t=K+1}^T U_{T,t}\right] - \sum_{t=K+1}^T \sum_{t' \notin \mathcal A^T_t} \Cov\left[U_{T,t}, U_{T,t'}\right]\\
  =& 1 - o(1) \to 1.
\end{align*}

For the first condition in Lemma~\ref{lemma:clt_general}, we have
\begin{align*}
  & \left\lvert \sum_{t=K+1}^T \sum_{t',t'' \in \mathcal A^T_t} \E[\lvert U_{T,t} \rvert U_{T,t'} U_{T,t''}] \right\rvert \lesssim D^2 \sum_{t=K+1}^T \E[\lvert U_{T,t} \rvert^3] \lesssim D^2 (T-K) (T-K)^{-3/2}\\
  =& D^2 (T-K)^{-1/2} = o(1) = o(\lvert \Omega_T \rvert^{3/2}).
\end{align*}

For the second condition in Lemma~\ref{lemma:clt_general}, we have
\begin{align*}
  & \left\lvert \sum_{s,t=K+1}^T \sum_{s' \in \mathcal A^T_s, t' \in \mathcal A^T_t} \Cov[U_{T,s} U_{T,s'}, U_{T,t} U_{T,t'}] \right\rvert \lesssim \sum_{t=K+1}^T \sum_{\lvert s-t \rvert \le D, \lvert t'-t \rvert \le D, \lvert s'-t \rvert \le 2D} \E[\lvert U_{T,t} \rvert^4]\\
  \lesssim & D^3 (T-K) (T-K)^{-2} = D^3 (T-K)^{-1} = o(1) = o(\lvert \Omega_T \rvert^2).
\end{align*}

For the third condition in Lemma~\ref{lemma:clt_general}, defining the random variable $\xi_{t,-t}$ such that $\xi_{t,-t} = 1$ if $\E[U_{T,t} \given U_{T,-t}] > 0$ and $\xi_{t,-t} = -1$ otherwise, then we have
\begin{align*}
  & \left\lvert \sum_{t=K+1}^T \E[\left\lvert \E[U_{T,t} U_{T,-t} \given U_{T,-t}] \right\rvert] \right\rvert = \left\lvert \sum_{t=K+1}^T \sum_{t' \notin \mathcal A^T_t} \Cov\left[\xi_{t,-t} U_{T,t'}, U_{T,t}\right]\right\rvert\\
  \lesssim & \sum_{t=K+1}^T \sum_{t' \notin \mathcal A^T_t} \frac{1}{T-K} \theta_{T, \lvert t-t' \rvert} \lesssim \sum_{k=D}^T \theta_{T,k} \lesssim D^{-\delta} = o(1) = o(\lvert \Omega_T \rvert).
\end{align*}

Then by Lemma~\ref{lemma:clt_general} and $\Omega_T \to 1$, we have
\[
\sum_{t=K+1}^{T} U_{T,t} \stackrel{\sf d}{\longrightarrow} N(0,1),
\]
which is exactly
\[
  \sqrt{T-K} (\lambda_K^\top \mV \lambda_K)^{-1/2} \lambda_K^\top \left[\frac{1}{T-K} \mZ_K^\top (\mY_K - \mZ_K \mW_K \tau)\right] \stackrel{\sf d}{\longrightarrow} N(0,1).
\]

\subsubsection{m-dependence condition (Assumption~\ref{asp:mdep}).}

Before starting the proof, we first introduce the CLT for $m$-dependent triangular arrays in \cite{chen2004normal}.

\begin{lemma}[Theorem 2.6 in \cite{chen2004normal}]\label{lemma:clt_mdependent}
  Let $[Z_{n,i}]_{i=1}^n$ be a $m$-dependent triangular array of random variables with mean zero, $\Var[\sum_{i=1}^n Z_{n,i}] = 1$, and finite $\E[\lvert Z_{n,i} \rvert^p] < \infty$ for $2 < p \le 3$. Let $F$ be the distribution function of $\sum_{i=1}^n Z_{n,i}$ and $\Phi$ be the distribution function of $N(0,1)$. Then
  \[
  \sup_{z} \lvert F(z) - \Phi(z) \rvert \le 75 (10m+1)^{p-1} \sum_{i=1}^n \E[\lvert Z_{n,i} \rvert^p].
  \]
\end{lemma}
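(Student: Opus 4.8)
The plan is to prove this Berry--Esseen bound by Stein's method, exploiting the $m$-dependence through a local decomposition of the Stein identity. Write $W = \sum_{i=1}^n Z_{n,i}$ and abbreviate $\xi_i = Z_{n,i}$. For each fixed $z$, let $f = f_z$ be the bounded solution of the Stein equation $f'(w) - w f(w) = \mathbbm{1}(w \le z) - \Phi(z)$, so that $F(z) - \Phi(z) = \E[f'(W) - W f(W)]$ and it suffices to bound $\sup_z \lvert \E[f'(W) - Wf(W)] \rvert$. I would invoke the standard properties $0 \le f(w) \le \sqrt{2\pi}/4$, $\lVert f' \rVert_\infty \le 1$, and the Lipschitz behaviour of $f_z$ away from $z$; the crucial feature to keep in view is that $f_z'$ carries a unit jump at $w = z$, which is precisely the source of the difficulty below.

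First I would introduce the dependence neighborhoods $A_i = \{j : \lvert j - i \rvert \le m\}$, so that $m$-dependence gives $\xi_i$ independent of $\{\xi_j : j \notin A_i\}$, and set $\eta_i = \sum_{j \in A_i}\xi_j$. Since $\E[\xi_i] = 0$ and $\xi_i$ is independent of $W - \eta_i = \sum_{j \notin A_i}\xi_j$, the key identity $\E[\xi_i f(W - \eta_i)] = 0$ holds, so that $\E[Wf(W)] = \sum_i \E[\xi_i\{f(W) - f(W - \eta_i)\}]$. Using the integral representation $f(W) - f(W - \eta_i) = \int_{-\infty}^\infty f'(W - \eta_i + t)\,\hat K_i(t)\,dt$ with kernel $\hat K_i(t) = \xi_i[\mathbbm{1}(0 \le t \le \eta_i) - \mathbbm{1}(\eta_i \le t \le 0)]$, together with the normalization $\sum_i \E[\xi_i\eta_i] = \Var(W) = 1$ (cross terms outside the neighborhoods vanish by independence and centering), I would rewrite $\E[f'(W) - Wf(W)]$ as a smoothing remainder $R_1 = \sum_i \E\!\int \{f'(W) - f'(W - \eta_i + t)\}\hat K_i(t)\,dt$ plus a variance remainder $R_2 = \E[f'(W)(1 - \sum_i \xi_i\eta_i)]$.

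The main obstacle is controlling $R_1$: because $f_z'$ jumps at $z$, the increment $f'(W) - f'(W - \eta_i + t)$ cannot be bounded by $\lvert \eta_i - t \rvert\,\lVert f'' \rVert_\infty$ as in the smooth regime. The remedy is a concentration (anti-concentration) inequality for the locally dependent sum, namely $\P(a \le W - \eta_i \le b) \lesssim (b - a) + (\text{neighborhood } p\text{-th moment terms})$, which I would establish by a separate Stein-type argument applied to $W^{(i)} = W - \eta_i$ combined with truncation of the summands. Feeding this into $R_1$ replaces the jump contribution by the probability that $W$ falls within an $O(\lvert \eta_i \rvert + \lvert t \rvert)$-window of $z$, which is then integrated against $\hat K_i$ and summed over $i$; each summand couples only to the $O(m)$ indices of $A_i$, and applying the power-mean inequality to $\eta_i$ against the $p$-th moments $\E\lvert \xi_j \rvert^p$ is what generates the factor $(10m+1)^{p-1}\sum_i \E\lvert \xi_i \rvert^p$. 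For $R_2$ I would again localize, pairing $f'(W)$ with a neighborhood-removed version $f'(W^{(i)})$ that is independent of the block around $i$: the leading cross-correlations collapse because $\E[(\xi_i\eta_i - \E[\xi_i\eta_i])f'(W^{(i)})] = 0$, and the residual $\E[(f'(W) - f'(W^{(i)}))(\xi_i\eta_i - \E[\xi_i\eta_i])]$ is handled by the same jump/concentration estimate and third-moment bookkeeping. Assembling $\lvert R_1 \rvert + \lvert R_2 \rvert$ and tracking the numerical constants through the Stein bounds and the concentration inequality yields $\sup_z \lvert F(z) - \Phi(z) \rvert \le 75(10m+1)^{p-1}\sum_{i=1}^n \E\lvert Z_{n,i} \rvert^p$; the only genuinely delicate part is proving the concentration inequality with explicit constants and verifying that the accounting closes at the power $p-1$ rather than a larger power of $m$.
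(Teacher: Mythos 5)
This lemma is not proved in the paper at all: it is quoted verbatim, constants included, as Theorem 2.6 of \citet{chen2004normal}, and is used as an imported black box in the proof of Lemma~\ref{lemma:clt}. So the only meaningful benchmark is the original source, and there your sketch does follow the genuine route: Chen and Shao establish the $m$-dependent Berry--Esseen bound by Stein's method for locally dependent arrays --- the Stein equation for $f_z$, the identity $\E[\xi_i f(W-\eta_i)]=0$ from independence of $\xi_i$ and $W-\eta_i$, the kernel representation via $\hat K_i$, and a concentration inequality substituting for the missing smoothness of $f_z'$ at the jump point. That skeleton is exactly what you describe, and your identification of the jump of $f_z'$ as the central obstacle is correct.

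However, as written your text is a plan rather than a proof, and the two steps you defer are precisely where the quantitative content lives. First, the anti-concentration bound for $W^{(i)}=W-\eta_i$ ``with explicit constants'' is the hardest part of the Chen--Shao argument (a randomized concentration inequality proved by its own Stein-type construction); invoking it as ``a separate Stein-type argument combined with truncation'' leaves the heart of the lemma unproved, and one must check that this auxiliary argument does not presuppose the Berry--Esseen bound being established. Second, a single layer of neighborhoods $A_i=\{j: \lvert j-i\rvert\le m\}$ cannot by itself produce the factor $(10m+1)^{p-1}$: in the source, the $m$-dependent statement is deduced from a general local-dependence theorem using a nested hierarchy of blocks of radii $m$, $2m$, $3m$ (with further enlargements inside the concentration step), and it is the cardinality of the largest block, $10m+1$ points, that enters through the H\"older/power-mean reduction of block moments to $\sum_{i}\E\lvert \xi_i\rvert^p$; your remark that $R_2$ needs a ``neighborhood-removed'' $f'(W^{(i)})$ gestures at this but does not set up the second- and third-level neighborhoods that make it rigorous. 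Without that multi-scale bookkeeping, your argument can at best certify a bound of the form $C(m+1)^{p-1}\sum_{i=1}^n\E\lvert Z_{n,i}\rvert^p$ with an unspecified constant $C$ --- which, to be fair, would suffice for the paper's application (the proof of Lemma~\ref{lemma:clt} only uses the rate $(K\vee m)^2(T-K)^{-1/2}\to 0$ with $p=3$), but it is not the stated lemma with the explicit constants $75$ and $(10m+1)^{p-1}$.
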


To apply Lemma~\ref{lemma:clt_mdependent}, note that by Assumption~\ref{asp:mdep}, the triangular array $[U_{T,t}]_{t=K+1}^T$ is $(K \vee m)$-dependent since each $U_{T,t}$ is measurable with respect to $\sigma(Y_t, Z_t, \ldots,Z_{t-K}) = \sigma(Y_t(Z_{t:t-m}), Z_t, \ldots,Z_{t-K})$. From the moment bound already established for $U_{T,t}$ (take $p=3$), $\sup_{t} \E\left[\lvert U_{T,t} \rvert^3\right] \lesssim (T-K)^{-3/2}$. Let $F_T$ be the distribution function of $\sum_{t=K+1}^T U_{T,t}$, then by Lemma~\ref{lemma:clt_mdependent} with $p=3$, we have
\begin{align*}
  \sup_{z} \lvert F_T(z) - \Phi(z) \rvert \lesssim (K \vee m)^2 (T-K) (T-K)^{-3/2} = (K \vee m)^2(T-K)^{-1/2}.
\end{align*}

Then by Assumption~\ref{asp:mdep}\ref{asp:mdep:ii} with $K^4/T \to 0$, we have
\[
\sum_{t=K+1}^{T} U_{T,t} \stackrel{\sf d}{\longrightarrow} N(0,1),
\]
which is exactly
\[
  \sqrt{T-K} (\lambda_K^\top \mV \lambda_K)^{-1/2} \lambda_K^\top \left[\frac{1}{T-K} \mZ_K^\top (\mY_K - \mZ_K \mW_K \tau)\right] \stackrel{\sf d}{\longrightarrow} N(0,1).
\]

\end{proof}

\section{Proofs}\label{sec:proofs}

\subsection{Proof of Theorem~\ref{thm:consistency}}

\begin{proof}[Proof of Theorem~\ref{thm:consistency}]

Recall \eqref{eq:hat_tau} that 
\[
  \hat{\tau} = \tau + \mW_K^{-1} [(T-K)^{-1}\mZ_K^\top \mZ_K]^{-1} [(T-K)^{-1}\mZ_K^\top (\mY_K - \mZ_K \mW_K \tau)].
\]
  
By  Lemma~\ref{lemma:matrix}, under the assumptions of the theorem, we have
\[
  \left\lVert \frac{1}{T-K} \mZ_K^\top \mZ_K - \mW_K^{-1} \right\rVert_\F \stackrel{\sf p}{\longrightarrow} 0.
\]

Since $\mW_K^{-1}$ is positive definite with eigenvalues uniformly bounded away from 0 by Assumption~\ref{asp:ps}, we have
\[
  \left\Vert \mW_K^{-1} \left(\frac{1}{T-K}\mZ_K^\top \mZ_K\right)^{-1}\right\Vert_2 \stackrel{\sf p}{\longrightarrow} \lVert \mI_{K+1} \rVert_2 = 1.
\]

By Lemma~\ref{lemma:consistency}, under the assumptions of the theorem, we have
\[
  \left\lVert \frac{1}{T-K} \mZ_K^\top (\mY_K - \mZ_K \mW_K \tau) \right\rVert_2 \stackrel{\sf p}{\longrightarrow} 0.
\]
  
Then we obtain
\[
  \left\lVert \hat\tau - \tau \right\rVert_2
  \le
  \left\lVert \mW_K^{-1} \left(\frac{1}{T-K}\mZ_K^\top \mZ_K\right)^{-1}\right\Vert_2
  \left\lVert \frac{1}{T-K} \mZ_K^\top (\mY_K - \mZ_K \mW_K \tau) \right\rVert_2
  \stackrel{\sf p}{\longrightarrow} 0,
\]
which completes the proof.
\end{proof}

\subsection{Proof of Theorem~\ref{thm:clt}}

\begin{proof}[Proof of Theorem~\ref{thm:clt}]

  Recall \eqref{eq:hat_tau} that 
  \[
    \hat{\tau} = \tau + \mW_K^{-1} [(T-K)^{-1}\mZ_K^\top \mZ_K]^{-1} [(T-K)^{-1}\mZ_K^\top (\mY_K - \mZ_K \mW_K \tau)].
  \]

Then we have the following decomposition:
\begin{align*}
  \lambda_K^\top (\hat\tau - \tau) =& \lambda_K^\top \mW_K^{-1} [(T-K)^{-1}\mZ_K^\top \mZ_K]^{-1} [(T-K)^{-1}\mZ_K^\top (\mY_K - \mZ_K \mW_K \tau)] \\
  =& \lambda_K^\top \left[\frac{1}{T-K} \mZ_K^\top \left(\mY_K - \mZ_K \mW_K \tau\right)\right] \\
  &+ \lambda_K^\top \left[\mW_K^{-1} \left(\frac{1}{T-K}\mZ_K^\top \mZ_K\right)^{-1} - \mI_{K+1}\right] \left[\frac{1}{T-K} \mZ_K^\top \left(\mY_K - \mZ_K \mW_K \tau\right)\right].
\end{align*}

By Lemma~\ref{lemma:clt}, under the assumptions of the theorem, we have
\begin{align*}
  \sqrt{T-K} (\lambda_K^\top \mV \lambda_K)^{-1/2} \lambda_K^\top \left[\frac{1}{T-K} \mZ_K^\top (\mY_K - \mZ_K \mW_K \tau)\right] \stackrel{\sf d}{\longrightarrow} N(0,1).
\end{align*}

By Lemma~\ref{lemma:matrix}, under the assumptions of the theorem, we have
\[
  \left\lVert \mW_K^{-1} \left(\frac{1}{T-K}\mZ_K^\top \mZ_K\right)^{-1} - \mI_{K+1} \right\rVert_2 \stackrel{\sf p}{\longrightarrow} 0.
\]

Combining the above two results, we have
\begin{align*}
  \sqrt{T-K} (\lambda_K^\top \mV \lambda_K)^{-1/2} \lambda_K^\top (\hat\tau - \tau) \stackrel{\sf d}{\longrightarrow} N(0,1),
\end{align*}
which completes the proof.

\end{proof}

\subsection{Proof of Theorem~\ref{thm:var}}

\begin{proof}[Proof of Theorem~\ref{thm:var}]

By Lemma~\ref{lemma:matrix}, we have $\lVert \frac{1}{T-K}\mZ_K^\top \mZ_K - \mW_K^{-1} \rVert_\F \stackrel{\sf p}{\longrightarrow} 0$, and then we have the asymptotic approximation:
\begin{align*}
  \hat\mV = \frac{1}{T-K}\mZ_K^\top \hat\mU_K \mQ_K \hat\mU_K \mZ_K + o_\P(1).
\end{align*}
We decompose $\hat\mV$ as follows:
\begin{align}\label{eq:appendix_2}
  &\hat\mV - \mV - \mB_K \nonumber \\
  =& \frac{1}{T-K} \left(\mZ_K^\top \hat\mU_K \mQ_K \hat\mU_K \mZ_K - \mZ_K^\top \mU_K \mQ_K \mU_K \mZ_K\right) \nonumber \\
  &+ \frac{1}{T-K} \left(\mZ_K^\top \mU_K \mQ_K \mU_K \mZ_K - \mB_K - (\mU_K \mZ_K - \E[\mU_K \mZ_K])^\top \mQ_K (\mU_K \mZ_K - \E[\mU_K \mZ_K])\right) \nonumber \\
  &+  \frac{1}{T-K} \Big((\mU_K \mZ_K - \E[\mU_K \mZ_K])^\top \mQ_K (\mU_K \mZ_K - \E[\mU_K \mZ_K]) \nonumber \\
  &- \E[(\mU_K \mZ_K - \E[\mU_K \mZ_K])^\top \mQ_K (\mU_K \mZ_K - \E[\mU_K \mZ_K])]\Big) \nonumber \\
  &+ \frac{1}{T-K} \left(\E[(\mU_K \mZ_K - \E[\mU_K \mZ_K])^\top \mQ_K (\mU_K \mZ_K - \E[\mU_K \mZ_K])] - \mV\right) + o_\P(1).
\end{align}

We bound the Frobenius norm of each term in \eqref{eq:appendix_2}.

\subsubsection{First term.}

For entries $(k,k')$, by the definition of $\hat\mU_K$,$\mU_K$ and $\mQ_K$, we have
\begin{align*}
  \left[\mZ_K^\top \hat\mU_K \mQ_K \hat\mU_K \mZ_K\right]_{k,k'} = \sum_{t=K+1}^T \sum_{t'=K+1}^T \tZ_{t-k+1} \hat U_t \tZ_{t'-k'+1} \hat U_{t'} \left(1 - \frac{\lvert t - t' \rvert}{L+1}\right) \ind \left(\lvert t - t' \rvert \le L\right), 
\end{align*}
and
\begin{align*}
  \left[\mZ_K^\top \mU_K \mQ_K \mU_K \mZ_K\right]_{k,k'} = \sum_{t=K+1}^T \sum_{t'=K+1}^T \tZ_{t-k+1} U_t \tZ_{t'-k'+1} U_{t'} \left(1 - \frac{\lvert t - t' \rvert}{L+1}\right) \ind \left(\lvert t - t' \rvert \le L\right).
\end{align*}

Then we obtain
\begin{align*}
  &\left\lvert \left[\frac{1}{T-K} \left(\mZ_K^\top \hat\mU_K \mQ_K \hat\mU_K \mZ_K - \mZ_K^\top \mU_K \mQ_K \mU_K \mZ_K\right)\right]_{k,k'} \right\rvert\\
  =& \left\lvert \frac{1}{T-K}\sum_{t,t'=K+1, \lvert t-t' \rvert \le L}^T \left(\tZ_{t-k+1} \hat U_t \tZ_{t'-k'+1} \hat U_{t'} - \tZ_{t-k+1} U_t \tZ_{t'-k'+1} U_{t'}\right) \left(1 - \frac{\lvert t - t' \rvert}{L+1}\right) \right\rvert\\
  \le& \frac{1}{T-K}\sum_{t,t'=K+1, \lvert t-t' \rvert \le L}^T \lvert \tZ_{t-k+1} \tZ_{t'-k'+1} \rvert (\lvert \hat U_t - U_t \rvert \lvert \hat U_{t'} \rvert + \lvert \hat U_{t'} - U_{t'} \rvert \lvert U_t \rvert) \left(1 - \frac{\lvert t - t' \rvert}{L+1}\right).
\end{align*}

By Theorem~\ref{thm:clt},
\[
  \left\lvert \hat U_t - U_t \right\rvert = \left\lvert \tZ_{t:t-K}^\top \mW_K (\hat\tau - \tau) \right\rvert \lesssim \left\lVert \hat\tau - \tau \right\rVert_1 = O_\P\left(\frac{K}{\sqrt{T-K}}\right),
\]
and by Assumption~\ref{asp:moment},
\[
  \left\lvert U_t \right\rvert + \left\lvert \hat U_t \right\rvert = \left\lvert Y_t - \tZ_{t:t-K}^\top \mW_K \hat\tau \right\rvert + \left\lvert Y_t - \tZ_{t:t-K}^\top \mW_K \tau \right\rvert \lesssim \left\lvert Y_t \right\rvert + \left\lVert \tau \right\rVert_1 = O_\P(1).
\]

Then we have
\begin{align*}
  \left\lvert \left[\frac{1}{T-K} \left(\mZ_K^\top \hat\mU_K \mQ_K \hat\mU_K \mZ_K - \mZ_K^\top \mU_K \mQ_K \mU_K \mZ_K\right)\right]_{k,k'} \right\rvert= O_\P\left(\frac{KL}{\sqrt{T-K}}\right),
\end{align*}
and then
\begin{align*}
  &\left\lVert \frac{1}{T-K} \left(\mZ_K^\top \hat\mU_K \mQ_K \hat\mU_K \mZ_K - \mZ_K^\top \mU_K \mQ_K \mU_K \mZ_K\right)\right\rVert_\F = O_\P\Big(\frac{K^2 L}{\sqrt{T-K}}\Big) = o_\P(1),
\end{align*}
when $K^2 L/T^{1/2} \to 0$.

\subsubsection{Second term.}

Since $\mb_K = \E[\mU_K \mZ_K]$ and $\mB_K = \mb_K^\top \mQ_K \mb_K$, we have
\begin{align*}
  &\mZ_K^\top \mU_K \mQ_K \mU_K \mZ_K - \mB_K - (\mU_K \mZ_K - \E[\mU_K \mZ_K])^\top \mQ_K (\mU_K \mZ_K - \E[\mU_K \mZ_K]) \\
  =& 2 (\mU_K \mZ_K - \E[\mU_K \mZ_K])^\top  \mQ_K \E[\mU_K \mZ_K].
\end{align*}

For entries $(k,k')$, by the definition of $\mU_K$, $\mZ_K$ and $\mQ_K$, we have
\begin{align*}
  &\left[(\mU_K \mZ_K - \E[\mU_K \mZ_K])^\top  \mQ_K \E[\mU_K \mZ_K]\right]_{k,k'}\\
  =&\sum_{t=K+1}^T \sum_{t'=K+1}^T \Big(\tZ_{t-k+1} U_t - \E[\tZ_{t-k+1} U_t] \Big) \E[\tZ_{t-k'+1} U_{t'}] \left(1 - \frac{\lvert t - t' \rvert}{L+1}\right) \ind \left(\lvert t - t' \rvert \le L\right),
\end{align*}
which implies
\begin{align*}
  &\left[\frac{1}{T-K} \left((\mU_K \mZ_K - \E[\mU_K \mZ_K])^\top  \mQ_K \E[\mU_K \mZ_K]\right)\right]_{k,k'}\\
  =& \frac{1}{T - K} \sum_{t,t'=K+1, \lvert t-t' \rvert \le L}^T \Big(\tZ_{t-k+1} U_t - \E[\tZ_{t-k+1} U_t] \Big) \E[\tZ_{t-k'+1} U_{t'}] \Big(1 - \frac{\lvert t - t' \rvert}{L+1}\Big)\\
  =& \frac{1}{T-K} \sum_{t=K+1}^T \Big[\sum_{t'=K+1, \lvert t'-t \rvert \le L}^T \E[\tZ_{t-k'+1} U_{t'}] \Big(1 - \frac{\lvert t - t' \rvert}{L+1}\Big)\Big] \Big(\tZ_{t-k+1} U_t - \E[\tZ_{t-k+1} U_t] \Big).
\end{align*}

Since each term in the sum is independent and mean zero, by Assumption~\ref{asp:moment}, we have
\begin{align*}
  &\E\left[\frac{1}{T-K} \left((\mU_K \mZ_K - \E[\mU_K \mZ_K])^\top  \mQ_K \E[\mU_K \mZ_K]\right)\right]_{k,k'}^2\\
  =& \frac{1}{(T-K)^2} \sum_{t=K+1}^T \left[\sum_{t'=K+1, \lvert t'-t \rvert \le L}^T \E[\tZ_{t-k'+1} U_{t'}] \left(1 - \frac{\lvert t - t' \rvert}{L+1}\right)\right]^2 \Var[\tZ_{t-k+1} U_t]\\
  \lesssim& \frac{1}{(T-K)^2} \sum_{t=K+1}^T \left[\sum_{t'=K+1, \lvert t'-t \rvert \le L}^T \left(1 - \frac{\lvert t - t' \rvert}{L+1}\right)\right]^2\\
  \lesssim& \frac{L^2}{T-K},
\end{align*}
and then by Markov's inequality,
\begin{align*}
  & \left\lVert \frac{1}{T-K} \left(\mZ_K^\top \mU_K \mQ_K \mU_K \mZ_K - \mB_K - (\mU_K \mZ_K - \E[\mU_K \mZ_K])^\top \mQ_K (\mU_K \mZ_K - \E[\mU_K \mZ_K])\right)\right\rVert_\F \\
  =& O_\P\left(\frac{K L}{\sqrt{T-K}}\right) = o_\P(1),
\end{align*}
when $K L/T^{1/2} \to 0$.

\subsubsection{Third term.}

For entries $(k,k')$, by the definition of $\mU_K$, $\mZ_K$ and $\mQ_K$, we have
\begin{align*}
  &\left[\frac{1}{T-K} \left((\mU_K \mZ_K - \E[\mU_K \mZ_K])^\top \mQ_K (\mU_K \mZ_K - \E[\mU_K \mZ_K])\right)\right]_{k,k'} \\
  =& \frac{1}{T-K}\sum_{t,t'=K+1, \lvert t-t' \rvert \le L}^T \Big(\tZ_{t-k+1} U_t - \E[\tZ_{t-k+1} U_t] \Big) \Big(\tZ_{t'-k'+1} U_{t'} - \E[\tZ_{t'-k'+1} U_{t'} ]\Big) \Big(1 - \frac{\lvert t - t' \rvert}{L+1}\Big),
\end{align*}

Since $\tZ_t$ are independent, by Assumption~\ref{asp:moment}, we have
\begin{align*}
  &\Var\Big[\frac{1}{T-K}\sum_{t,t'=K+1, \lvert t-t' \rvert \le L}^T \Big(\tZ_{t-k+1} U_t - \E[\tZ_{t-k+1} U_t] \Big) \Big(\tZ_{t'-k'+1} U_{t'} - \E[\tZ_{t'-k'+1} U_{t'} ]\Big) \Big(1 - \frac{\lvert t - t' \rvert}{L+1}\Big)\Big]\\
  =& \frac{1}{(T-K)^2} \sum_{t,t'=K+1, \lvert t-t' \rvert \le L}^T \Var\Big[\Big(\tZ_{t-k+1} U_t - \E[\tZ_{t-k+1} U_t] \Big) \Big(\tZ_{t'-k'+1} U_{t'} - \E[\tZ_{t'-k'+1} U_{t'} ]\Big) \Big(1 - \frac{\lvert t - t' \rvert}{L+1}\Big)\Big]\\
  \lesssim & \frac{1}{(T-K)^2} (T-K) L = \frac{L}{T-K}.
\end{align*}

Then
\begin{align*}
  \E \left[\frac{1}{T-K} \left((\mU_K \mZ_K - \E[\mU_K \mZ_K])^\top \mQ_K (\mU_K \mZ_K - \E[\mU_K \mZ_K])\right)\right]_{k,k'}^2 \lesssim \frac{L}{T-K},
\end{align*}
and then by Markov's inequality,
\begin{align*}
  &\Big\lVert \frac{1}{T-K} \Big((\mU_K \mZ_K - \E[\mU_K \mZ_K])^\top \mQ_K (\mU_K \mZ_K - \E[\mU_K \mZ_K]) \\
  &- \E[(\mU_K \mZ_K - \E[\mU_K \mZ_K])^\top \mQ_K (\mU_K \mZ_K - \E[\mU_K \mZ_K])]\Big) \Big\rVert_\F = O_\P\Big(\frac{K \sqrt{L}}{\sqrt{T-K}}\Big) = o_\P(1),
\end{align*}
when $K \sqrt{L}/T^{1/2} \to 0$.

\subsubsection{Fourth term.}

For entries $(k,k')$, by the definition of $\mU_K$, $\mZ_K$ and $\mQ_K$, we have
\begin{align*}
  &\left[\frac{1}{T-K} \E\left[(\mU_K \mZ_K - \E\left[\mU_K \mZ_K\right])^\top \mQ_K (\mU_K \mZ_K - \E\left[\mU_K \mZ_K\right])\right]\right]_{k,k'}\\
  =& \frac{1}{T-K}\sum_{t,t'=K+1, \lvert t-t' \rvert \le L}^T \E\left[\left(\tZ_{t-k+1} U_t - \E[\tZ_{t-k+1} U_t] \right) \left(\tZ_{t'-k'+1} U_{t'} - \E[\tZ_{t'-k'+1} U_{t'} ]\right)\right] \left(1 - \frac{\lvert t - t' \rvert}{L+1}\right)\\
  =& \frac{1}{T-K}\sum_{t,t'=K+1, \lvert t-t' \rvert \le L}^T \Cov\left[\tZ_{t-k+1} U_t, \tZ_{t'-k'+1} U_{t'}\right] \left(1 - \frac{\lvert t - t' \rvert}{L+1}\right),
\end{align*}
and by the definition of $\mV$ in \eqref{eq:mV}, we have
\begin{align*}
  & \left[\mV\right]_{k,k'} = \frac{1}{T-K}\Cov\left[\sum_{t=K+1}^T \tZ_{t-k+1} U_t, \sum_{t'=K+1}^T \tZ_{t'-k'+1} U_{t'}\right]\\
  =& \frac{1}{T-K}\sum_{t,t'=K+1, \lvert t-t' \rvert \le L}^T \Cov\left[\tZ_{t-k+1} U_t, \tZ_{t'-k'+1} U_{t'}\right] + \frac{1}{T-K}\sum_{t,t'=K+1, \lvert t-t' \rvert > L}^T \Cov\left[\tZ_{t-k+1} U_t, \tZ_{t'-k'+1} U_{t'}\right].
\end{align*}

Then the entrywise difference is
\begin{align*}
  &\left\lvert \left[\frac{1}{T-K} \E\left[(\mU_K \mZ_K - \E[\mU_K \mZ_K])^\top \mQ_K (\mU_K \mZ_K - \E[\mU_K \mZ_K])\right] - \mV\right]_{k,k'} \right\rvert\\
  \le& \frac{1}{T-K}\sum_{t,t'=K+1, \lvert t-t' \rvert \le L}^T \frac{\lvert t - t' \rvert}{L+1} \left\lvert \Cov\left[\tZ_{t-k+1} U_t, \tZ_{t'-k'+1} U_{t'}\right] \right\rvert \\
  &+ \frac{1}{T-K}\sum_{t,t'=K+1, \lvert t-t' \rvert > L}^T \left\lvert \Cov\left[\tZ_{t-k+1} U_t, \tZ_{t'-k'+1} U_{t'}\right] \right\rvert.
\end{align*}

We now handle the dependence structure in two cases.

Assume decaying carryover effects condition (Assumption~\ref{asp:ani}). Assume $L > K$. Consider $t,t'$ such that $\lvert t-t' \rvert > K$, by Assumption~\ref{asp:moment} and the covariance inequality for the decaying carryover effects sequence in the proof of Theorem~\ref{thm:consistency}, we have
\[
\left\lvert\Cov\left[\tZ_{t-k+1} U_t, \tZ_{t'-k'+1} U_{t'}\right] \right\rvert \lesssim \theta_{T, \lvert t-t' \rvert}.
\]

Then by Assumption~\ref{asp:ani}\ref{asp:ani:ii}, we have
\begin{align*}
  &\left\lvert \left[\frac{1}{T-K} \E\left[(\mU_K \mZ_K - \E[\mU_K \mZ_K])^\top \mQ_K (\mU_K \mZ_K - \E[\mU_K \mZ_K])\right] - \mV\right]_{k,k'} \right\rvert\\
  \lesssim& \frac{K^2}{L} + \frac{1}{L} \sum_{k=1}^L k \theta_{T, k} + \sum_{k=L+1}^T \theta_{T, k} \lesssim \frac{K^2}{L} + \frac{1}{L} + \frac{1}{L^\delta},
\end{align*}
and then
\begin{align*}
  \left\lVert \frac{1}{T-K} \E\left[(\mU_K \mZ_K - \E[\mU_K \mZ_K])^\top \mQ_K (\mU_K \mZ_K - \E[\mU_K \mZ_K])\right] - \mV \right\rVert_\F \lesssim \frac{K^3 }{L} + \frac{K}{L^\delta} \to 0,
\end{align*}
when $K / L^{(1/3) \wedge \delta} \to 0$.

Assume $m$-dependence condition (Assumption~\ref{asp:mdep}). Assume $L > K$. Consider $t,t'$ such that $\lvert t-t' \rvert > K$, by Assumption~\ref{asp:moment} and the definition of $m$-dependence, we have
\[
\left\lvert\Cov\left[\tZ_{t-k+1} U_t, \tZ_{t'-k'+1} U_{t'}\right] \right\rvert \lesssim \ind(\lvert t-t' \rvert \le m).
\]

Assume further $L > m$. Then by Assumption~\ref{asp:mdep}, we have
\begin{align*}
  &\left\lvert \left[\frac{1}{T-K} \E\left[(\mU_K \mZ_K - \E[\mU_K \mZ_K])^\top \mQ_K (\mU_K \mZ_K - \E[\mU_K \mZ_K])\right] - \mV\right]_{k,k'} \right\rvert \lesssim \frac{m^2}{L},
\end{align*}
and then
\begin{align*}
  \left\lVert \frac{1}{T-K} \E\left[(\mU_K \mZ_K - \E[\mU_K \mZ_K])^\top \mQ_K (\mU_K \mZ_K - \E[\mU_K \mZ_K])\right] - \mV \right\rVert_\F \lesssim \frac{Km^2}{L} \to 0,
\end{align*}
when $Km^2 / L \to 0$.

\subsubsection{Final step.}

Combining the four terms using the decomposition of $\hat\mV$ completes the proof.

\end{proof}

\subsection{Proofs of Theorem~\ref{thm:consistency_cont} and Theorem~\ref{thm:clt_cont}}

\begin{proof}[Proofs of Theorem~\ref{thm:consistency_cont} and Theorem~\ref{thm:clt_cont}]

The proofs of Theorem~\ref{thm:consistency_cont} and Theorem~\ref{thm:clt_cont} are similar to the proofs of Theorem~\ref{thm:consistency} and Theorem~\ref{thm:clt}, we only outline the key steps here.

Recall that
\[
  \hat{\tau}_{\rm cont} = \tau + \mW_{{\rm cont},K}^{-1} [(T-K)^{-1}\mZ_{{\rm cont},K}^\top \mZ_{{\rm cont},K}]^{-1} [(T-K)^{-1}\mZ_{{\rm cont},K}^\top (\mY_K - \mZ_{{\rm cont},K} \mW_{{\rm cont},K} \tau)].
\]

We can establish Lemma~\ref{lemma:matrix}, Lemma~\ref{lemma:consistency}, Lemma~\ref{lemma:clt} in the same way, except replacing the mean $(p_1,\ldots, p_T)$ and the variance $(p_1(1-p_1),\ldots, p_T(1-p_T))$ with the mean $(\E[Z_1],\ldots, \E[Z_T])$ and the variance $(\Var[Z_1],\ldots, \Var[Z_T])$. To see why the estimand is
\begin{align*}
  \tau_k = \frac{1}{T-K} \sum_{t=K+1}^T \E\left[Y_t(Z_{t:1})\frac{(Z_{t-k} - \E[Z_{t-k}])}{\Var[Z_{t-k}]}\right],
\end{align*}
we show $\E[(T-K)^{-1}\mZ_K^\top (\mY_K - \mZ_K \mW_K \tau)] = 0$ by following the proof of Lemma~\ref{lemma:consistency}. For any $k=0,\ldots, K$, we have
\begin{align*}
  &\E\left[\frac{1}{T-K} \sum_{t=K+1}^T \tZ_{t-k}\left(Y_t - \tZ_{t:t-K}^\top \mW_K \tau\right)\right] = \E\left[\frac{1}{T-K} \sum_{t=K+1}^T \tZ_{t-k}\left(Y_t - \sum_{\ell=0}^K \tZ_{t-\ell}w_\ell \tau_\ell\right)\right]\\
  =& \frac{1}{T-K} \sum_{t=K+1}^T \left(\E[\tZ_{t-k} Y_t] - \Var[\tZ_{t-k}] w_k \tau_k\right) \\
  =& \left(\frac{1}{T-K} \sum_{t=K+1}^T \E[\tZ_{t-k} Y_t]\right) - \left(\frac{1}{T-K} \sum_{t=K+1}^T \Var[\tZ_{t-k}]\right) w_k \tau_k \\
  =& \left(\frac{1}{T-K} \sum_{t=K+1}^T \E\left[Y_t(Z_{t:1}) \frac{(Z_{t-k} - \E[Z_{t-k}])}{\Var[Z_{t-k}]}\right]\right) - \left(\frac{1}{T-K} \sum_{t=K+1}^T \Var[Z_{t-k}]^{-1}\right) w_k \tau_k \\
  =& \left(\frac{1}{T-K} \sum_{t=K+1}^T \E\left[Y_t(Z_{t:1}) \frac{(Z_{t-k} - \E[Z_{t-k}])}{\Var[Z_{t-k}]}\right]\right) - \tau_k = 0.
\end{align*}

We then complete the proofs.

\end{proof}

\subsection{Proof of Theorem~\ref{thm:var_cont}}

\begin{proof}[Proof of Theorem~\ref{thm:var_cont}]
  The proof is similar to the proof of Theorem~\ref{thm:var} by considering the decomposition of $\hat\mV$ in \eqref{eq:appendix_2}.
\end{proof}

\subsection{Proofs of Theorem~\ref{thm:consistency_int} and Theorem~\ref{thm:clt_int}}

\begin{proof}[Proofs of Theorem~\ref{thm:consistency_int} and Theorem~\ref{thm:clt_int}]

The proofs of Theorem~\ref{thm:consistency_int} and Theorem~\ref{thm:clt_int} are similar to the proofs of Theorem~\ref{thm:consistency} and Theorem~\ref{thm:clt}, we only outline the key steps here.

Recall that
\[
  \hat{\tau}_{\rm int} = \tau + \mW_{{\rm int},K}^{-1} [(T-K)^{-1}\mZ_{{\rm int},K}^\top \mZ_{{\rm int},K}]^{-1} [(T-K)^{-1}\mZ_{{\rm int},K}^\top (\mY_K - \mZ_{{\rm int},K} \mW_{{\rm int},K} \tau)].
\]

Similar to Lemma~\ref{lemma:matrix}, we can show that under the same assumptions of Lemma~\ref{lemma:matrix}, we have
\begin{align*}
  \left\lVert \frac{1}{T-K} \mZ_{{\rm int},K}^\top \mZ_{{\rm int},K} - \mW_{{\rm int},K}^{-1} \right\rVert_\F \stackrel{\sf p}{\longrightarrow} 0.
\end{align*}

Similar to Lemma~\ref{lemma:consistency}, we can show that under the same assumptions of Lemma~\ref{lemma:consistency}, we have
\begin{align*}
  \left\lVert \frac{1}{T-K} \mZ_{{\rm int},K}^\top (\mY_K - \mZ_{{\rm int},K} \mW_{{\rm int},K} \tau) \right\rVert_2 \stackrel{\sf p}{\longrightarrow} 0.
\end{align*}

As we have different asymptotic variance: $\mV_{\rm int}$ instead of $\mV$, similar to Lemma~\ref{lemma:clt}, we can show that under the same assumptions of Lemma~\ref{lemma:clt} except replacing Assumption~\ref{asp:eigen} with Assumption~\ref{asp:eigen_int}, we have
\begin{align*}
  \sqrt{T-K} (\lambda_K^\top \mV_{\rm int} \lambda_K)^{-1/2} \lambda_K^\top \left[\frac{1}{T-K} \mZ_{{\rm int},K}^\top (\mY_K - \mZ_{{\rm int},K} \mW_{{\rm int},K} \tau)\right] \stackrel{\sf d}{\longrightarrow} N(0,1).
\end{align*}

We then complete the proofs.

\end{proof}

\subsection{Proofs of Theorem~\ref{thm:consistency_general} and Theorem~\ref{thm:clt_general}}

\begin{proof}[Proofs of Theorem~\ref{thm:consistency_general} and Theorem~\ref{thm:clt_general}]

The proofs of Theorem~\ref{thm:consistency_general} and Theorem~\ref{thm:clt_general} are similar to the proofs of Theorem~\ref{thm:consistency} and Theorem~\ref{thm:clt}, we only outline the key steps here.

Recall that
\[
  \hat{\tau}_{\rm g} = \tau + \mW_{{\rm g},K}^{-1} [(T-K)^{-1}\mZ_{{\rm g},K}^\top \mZ_{{\rm g},K}]^{-1} [(T-K)^{-1}\mZ_{{\rm g},K}^\top (\mY_K - \mZ_{{\rm g},K} \mW_{{\rm g},K} \tau)].
\]

Similar to Lemma~\ref{lemma:matrix}, we can show that under the same assumptions of Lemma~\ref{lemma:matrix} except replacing Assumption~\ref{asp:ps} with Assumption~\ref{asp:ps_general}, we have
\begin{align*}
  \left\lVert \frac{1}{T-K} \mZ_{{\rm g},K}^\top \mZ_{{\rm g},K} - \mW_{{\rm g},K}^{-1} \right\rVert_\F \stackrel{\sf p}{\longrightarrow} 0.
\end{align*}

Similar to Lemma~\ref{lemma:consistency}, we can show that under the same assumptions of Lemma~\ref{lemma:consistency} except replacing Assumption~\ref{asp:ps}, \ref{asp:moment} with Assumption~\ref{asp:ps_general}, \ref{asp:moment_general}, we have
\begin{align*}
  \left\lVert \frac{1}{T-K} \mZ_{{\rm g},K}^\top (\mY_K - \mZ_{{\rm g},K} \mW_{{\rm g},K} \tau) \right\rVert_2 \stackrel{\sf p}{\longrightarrow} 0.
\end{align*}

As we have different asymptotic variance: $\mV_{\rm g}$ instead of $\mV$, similar to Lemma~\ref{lemma:clt}, we can show that under the same assumptions of Lemma~\ref{lemma:clt} except replacing Assumption~\ref{asp:ps}, \ref{asp:moment}, \ref{asp:eigen} with Assumption~\ref{asp:ps_general}, \ref{asp:moment_general}, \ref{asp:eigen_general}, we have
\begin{align*}
  \sqrt{T-K} (\lambda_K^\top \mV_{\rm g} \lambda_K)^{-1/2} \lambda_K^\top \left[\frac{1}{T-K} \mZ_{{\rm g},K}^\top (\mY_K - \mZ_{{\rm g},K} \mW_{{\rm g},K} \tau)\right] \stackrel{\sf d}{\longrightarrow} N(0,1).
\end{align*}

We then complete the proofs.

\end{proof}

\subsection{Proof of Proposition~\ref{prop:var}}

\begin{proof}[Proof of Proposition~\ref{prop:var}]
By the model assumption, we can equivalently write the model as
\begin{align*}
  Y_t(z_{t:1})=\sum_{k=0}^{t-1} \beta_k z_{t-k} + \epsilon_t = p(1-p)\sum_{k=0}^{t-1} \beta_k \frac{z_{t-k} - p}{p(1-p)} + p\sum_{k=0}^{t-1} \beta_k + \epsilon_t.
\end{align*}

By the model assumption, we have $w_k = p(1-p)$ and $\tau_k = \beta_k$. Then the asymptotic variance of full OLS is
\begin{align*}
  &\frac{1}{T-K} \Var\left[\sum_{t=K+1}^T \tZ_{t-k} \left(Y_t - \sum_{\ell=0}^K \tZ_{t-\ell} w_\ell \tau_\ell\right)\right] \\
  =& \frac{1}{T-K} \Var\left[\sum_{t=K+1}^T \tZ_{t-k} \left(p(1-p)\sum_{\ell=K+1}^{t-1} \beta_\ell \tZ_{t-\ell} + p\sum_{k=0}^{t-1} \beta_k + \epsilon_t \right)\right]\\
  =& \frac{1}{T-K} \sum_{t=K+1}^T \Var\left[\tZ_{t-k} \left(p(1-p)\sum_{\ell=K+1}^{t-1} \beta_\ell \tZ_{t-\ell} + p\sum_{k=0}^{t-1} \beta_k + \epsilon_t \right)\right]\\
  =& \frac{1}{T-K} \sum_{t=K+1}^T \Var\left[p(1-p) \tZ_{t-k} \sum_{\ell=K+1}^{t-1} \beta_\ell \tZ_{t-\ell} \right] + \frac{1}{T-K} \sum_{t=K+1}^T \Var\left[\left(p\sum_{k=0}^{t-1} \beta_k + \epsilon_t \right)\tZ_{t-k} \right]\\
  =& \frac{1}{T-K} \sum_{t=K+1}^T \sum_{\ell=K+1}^{t-1} \Var\left[p(1-p) \tZ_{t-k} \beta_\ell \tZ_{t-\ell} \right] + \frac{1}{T-K} \sum_{t=K+1}^T \Var\left[\left(p\sum_{k=0}^{t-1} \beta_k + \epsilon_t \right)\tZ_{t-k} \right]\\
  =& \frac{1}{T-K} \sum_{t=K+1}^T \sum_{\ell=K+1}^{t-1} \beta_\ell^2 + \frac{1}{T-K} \frac{1}{p(1-p)}\sum_{t=K+1}^T \left(p\sum_{k=0}^{t-1} \beta_k + \epsilon_t \right)^2.
\end{align*}

The asymptotic variance of marginal OLS is
\begin{align*}
  &\frac{1}{T-K} \Var\left[\sum_{t=K+1}^T \tZ_{t-k} (Y_t - \tZ_{t-k} w_k \tau_k)\right]\\
  =& \frac{1}{T-K} \Var\left[\sum_{t=K+1}^T \tZ_{t-k} \left(p(1-p)\sum_{\ell=0,\ell \neq k}^{t-1} \beta_\ell \tZ_{t-\ell} + p\sum_{k=0}^{t-1} \beta_k + \epsilon_t \right)\right]\\
  =& \frac{1}{T-K} \sum_{t=K+1}^T \Var\left[\tZ_{t-k} \left(p(1-p)\sum_{\ell=0,\ell \neq k}^{t-1} \beta_\ell \tZ_{t-\ell} + p\sum_{k=0}^{t-1} \beta_k + \epsilon_t \right)\right]\\
  =& \frac{1}{T-K} \sum_{t=K+1}^T \Var\left[p(1-p) \tZ_{t-k} \sum_{\ell=0,\ell \neq k}^{t-1} \beta_\ell \tZ_{t-\ell} \right] + \frac{1}{T-K} \sum_{t=K+1}^T \Var\left[\left(p\sum_{k=0}^{t-1} \beta_k + \epsilon_t \right)\tZ_{t-k} \right]\\
  =& \frac{1}{T-K} \sum_{t=K+1}^T \sum_{\ell=0,\ell \neq k}^{t-1} \Var\left[p(1-p) \tZ_{t-k} \beta_\ell \tZ_{t-\ell} \right] + \frac{1}{T-K} \sum_{t=K+1}^T \Var\left[\left(p\sum_{k=0}^{t-1} \beta_k + \epsilon_t \right)\tZ_{t-k} \right]\\
  =& \frac{1}{T-K} \sum_{t=K+1}^T \sum_{\ell=0,\ell \neq k}^{t-1} \beta_\ell^2 + \frac{1}{T-K} \frac{1}{p(1-p)}\sum_{t=K+1}^T \left(p\sum_{k=0}^{t-1} \beta_k + \epsilon_t \right)^2.
\end{align*}

\end{proof}

\subsection{Proof of Proposition~\ref{prop:int}}

\begin{proof}[Proof of Proposition~\ref{prop:int}]
  By the model assumption, we can equivalently write the model as
  \begin{align*}
    &Y_t(z_{t:1})=\sum_{k=0}^{t-1} \beta_k z_{t-k} + \sum_{k=1}^{t-1} \beta_{k-1,k} z_{t-k} z_{t-k+1} + \epsilon_t \\
    =& p(1-p)\sum_{k=0}^{t-1} \beta_k \frac{z_{t-k} - p}{p(1-p)} + p^2(1-p)^2\sum_{k=1}^{t-1} \beta_{k-1,k} \frac{z_{t-k} - p}{p(1-p)}\frac{z_{t-k+1} - p}{p(1-p)} \\
    &+ p\sum_{k=0}^{t-1} 
    \beta_k + p \sum_{k=1}^{t-1} \beta_{k-1,k}(z_{t-k} + z_{t-k+1} - p) + \epsilon_t\\
    =& p(1-p)\sum_{k=0}^{t-1} \beta_k \frac{z_{t-k} - p}{p(1-p)} + p^2(1-p)^2\sum_{k=1}^{t-1} \beta_{k-1,k} \frac{z_{t-k} - p}{p(1-p)}\frac{z_{t-k+1} - p}{p(1-p)} \\
    &+ p^2(1-p) \sum_{k=1}^{t-1} \beta_{k-1,k}\frac{z_{t-k} - p}{p(1-p)} + p^2(1-p) \sum_{k=0}^{t-2} \beta_{k,k+1}\frac{z_{t-k} - p}{p(1-p)}+ p\sum_{k=0}^{t-1} 
    \beta_k  + p^2 \sum_{k=1}^{t-1} \beta_{k-1,k} + \epsilon_t.
  \end{align*}
  
  By the definition of main effect weights and interaction effect weights, and the model assumption, we have $w_k = p(1-p)$ and $w_{k-1,k} = p^2(1-p)^2$. The main effect is 
  \begin{align*}
    \tau_k =& \frac{1}{T-K} \sum_{t=K+1}^T \E\left[Y_t(Z_{t:t-k+1},1,Z_{t-k-1:1}) - Y_t(Z_{t:t-k+1},0,Z_{t-k-1:1})\right] \\
    =& \frac{1}{T-K} \sum_{t=K+1}^T \E\left[\beta_k + \beta_{k-1,k} z_{t-k+1} \ind(k \neq 0) + \beta_{k,k+1} z_{t-k-1} \ind(k \neq T-1)\right]\\
    =& \beta_k + p \beta_{k-1,k} \ind(k \neq 0) + p \beta_{k,k+1} \ind(k \neq T-1),
  \end{align*}
  and the interaction effect is
  \begin{align*}
    \tau_{k-1,k} = \beta_{k-1,k}.
  \end{align*}

  Then the residual of OLS with interaction is
  \begin{align*}
    &Y_t - \sum_{\ell=0}^K \tZ_{t-\ell} w_\ell \tau_\ell - \sum_{\ell=1}^K \tZ_{t-\ell} \tZ_{t-\ell+1} w_{\ell-1,\ell} \tau_{\ell-1,\ell}\\
    =& p(1-p)\sum_{\ell=K+1}^{t-1} \beta_\ell \tZ_{t-\ell} + p^2(1-p)^2\sum_{\ell=K+1}^{t-1} \beta_{\ell-1,\ell} \tZ_{t-\ell} \tZ_{t-\ell+1}\\
    &+ p^2(1-p) \sum_{\ell=K+1}^{t-1} \beta_{\ell-1,\ell} \tZ_{t-\ell} + p^2(1-p) \sum_{\ell=K+1}^{t-2} \beta_{\ell,\ell+1} \tZ_{t-\ell} + p\sum_{k=0}^{t-1} 
    \beta_k  + p^2 \sum_{k=1}^{t-1} \beta_{k-1,k} + \epsilon_t.
  \end{align*}
  
  Then the asymptotic variance of OLS with interaction is
  \begin{align*}
    &\frac{1}{T-K} \Var\left[\sum_{t=K+1}^T \tZ_{t-k} (Y_t - \sum_{\ell=0}^K \tZ_{t-\ell} w_\ell \tau_\ell - \sum_{\ell=1}^K \tZ_{t-\ell} \tZ_{t-\ell+1} w_{\ell-1,\ell} \tau_{\ell-1,\ell})\right]  \\
    =& \frac{1}{T-K} \sum_{t=K+1}^T \Var\left[\tZ_{t-k} \left(Y_t - \sum_{\ell=0}^K \tZ_{t-\ell} w_\ell \tau_\ell - \sum_{\ell=1}^K \tZ_{t-\ell} \tZ_{t-\ell+1} w_{\ell-1,\ell} \tau_{\ell-1,\ell}\right)\right] \\
    =& \frac{1}{T-K} \sum_{t=K+1}^T \Var\left[\tZ_{t-k} \left(p(1-p)  \sum_{\ell=K+1}^{t-1} \beta_\ell \tZ_{t-\ell} + p^2(1-p) \sum_{\ell=K+1}^{t-1} \beta_{\ell-1,\ell} \tZ_{t-\ell} + p^2(1-p) \sum_{\ell=K+1}^{t-2} \beta_{\ell,\ell+1} \tZ_{t-\ell}\right) \right] \\
    &+ \frac{1}{T-K} \sum_{t=K+1}^T \Var\left[\tZ_{t-k} p^2(1-p)^2\sum_{\ell=K+1}^{t-1} \beta_{\ell-1,\ell} \tZ_{t-\ell} \tZ_{t-\ell+1}\right] \\
    &+ \frac{1}{T-K} \sum_{t=K+1}^T \Var\left[\left(p\sum_{k=0}^{t-1} 
    \beta_k  + p^2 \sum_{k=1}^{t-1} \beta_{k-1,k} + \epsilon_t\right)\tZ_{t-k} \right]\\
    =& \frac{1}{T-K} \sum_{t=K+1}^T \sum_{\ell=K+1}^{t-1} \Var\left[p(1-p) \tZ_{t-k} \beta_\ell \tZ_{t-\ell} \right] + \frac{1}{T-K} \sum_{t=K+1}^T \sum_{\ell=K+1}^{t-1} \Var\left[p^2(1-p) \tZ_{t-k} \beta_{\ell-1,\ell} \tZ_{t-\ell}\right]\\
    &+ \frac{1}{T-K} \sum_{t=K+1}^T \sum_{\ell=K+1}^{t-2} \Var\left[p^2(1-p) \tZ_{t-k} \beta_{\ell,\ell+1} \tZ_{t-\ell}\right] \\
    &+ \frac{1}{T-K} \sum_{t=K+1}^T \sum_{\ell=K+1}^{t-1} \Var\left[p^2(1-p)^2 \tZ_{t-k} \beta_{\ell-1,\ell} \tZ_{t-\ell} \tZ_{t-\ell+1}\right]\\
    &+\frac{1}{T-K} \sum_{t=K+1}^T \Var\left[\left(p\sum_{k=0}^{t-1} 
    \beta_k  + p^2 \sum_{k=1}^{t-1} \beta_{k-1,k} + \epsilon_t \right)\tZ_{t-k} \right]\\
    =& \frac{1}{T-K} \sum_{t=K+1}^T \sum_{\ell=K+1}^{t-1} \left(\beta_\ell^2+ p^2 \beta_{\ell-1,\ell}^2 + p^2 \beta_{\ell,\ell+1}^2\right) + \frac{1}{T-K} \sum_{t=K+1}^T \sum_{\ell=K+1}^{t-1} p(1-p)\beta_{\ell-1,\ell}^2\\
    &+ \frac{1}{T-K} \frac{1}{p(1-p)}\sum_{t=K+1}^T \left(p\sum_{k=0}^{t-1} \beta_k + p^2 \sum_{k=1}^{t-1} \beta_{k-1,k} + \epsilon_t \right)^2.
  \end{align*}

  On the other hand, the residual of OLS without interaction is
  \begin{align*}
    &Y_t - \sum_{\ell=0}^K \tZ_{t-\ell} w_\ell \tau_\ell\\
    =& p(1-p)\sum_{\ell=K+1}^{t-1} \beta_\ell \tZ_{t-\ell} + p^2(1-p)^2\sum_{\ell=1}^{t-1} \beta_{\ell-1,\ell} \tZ_{t-\ell} \tZ_{t-\ell+1}\\
    &+ p^2(1-p) \sum_{\ell=K+1}^{t-1} \beta_{\ell-1,\ell} \tZ_{t-\ell} + p^2(1-p) \sum_{\ell=K+1}^{t-2} \beta_{\ell,\ell+1} \tZ_{t-\ell} + p\sum_{k=0}^{t-1} 
    \beta_k  + p^2 \sum_{k=1}^{t-1} \beta_{k-1,k} + \epsilon_t.
  \end{align*}
  
  Then the asymptotic variance of OLS without interaction is
  \begin{align*}
    &\frac{1}{T-K} \Var\left[\sum_{t=K+1}^T \tZ_{t-k} \left(Y_t - \sum_{\ell=0}^K \tZ_{t-\ell} w_\ell \tau_\ell\right)\right]  \\
    =& \frac{1}{T-K} \sum_{t=K+1}^T \Var\left[\tZ_{t-k} \left(Y_t - \sum_{\ell=0}^K \tZ_{t-\ell} w_\ell \tau_\ell\right)\right] \\
    =& \frac{1}{T-K} \sum_{t=K+1}^T \Var\left[\tZ_{t-k} \left(p(1-p)  \sum_{\ell=K+1}^{t-1} \beta_\ell \tZ_{t-\ell} + p^2(1-p) \sum_{\ell=K+1}^{t-1} \beta_{\ell-1,\ell} \tZ_{t-\ell} + p^2(1-p) \sum_{\ell=K+1}^{t-2} \beta_{\ell,\ell+1} \tZ_{t-\ell}\right) \right] \\
    &+ \frac{1}{T-K} \sum_{t=K+1}^T \Var\left[\tZ_{t-k} p^2(1-p)^2\sum_{\ell=1}^{t-1} \beta_{\ell-1,\ell} \tZ_{t-\ell} \tZ_{t-\ell+1}\right] \\
    &+ \frac{1}{T-K} \sum_{t=K+1}^T \Var\left[\left(p\sum_{k=0}^{t-1} 
    \beta_k  + p^2 \sum_{k=1}^{t-1} \beta_{k-1,k} + \epsilon_t\right)\tZ_{t-k} \right]\\
    =& \frac{1}{T-K} \sum_{t=K+1}^T \sum_{\ell=K+1}^{t-1} \Var\left[p(1-p) \tZ_{t-k} \beta_\ell \tZ_{t-\ell} \right] + \frac{1}{T-K} \sum_{t=K+1}^T \sum_{\ell=K+1}^{t-1} \Var\left[p^2(1-p) \tZ_{t-k} \beta_{\ell-1,\ell} \tZ_{t-\ell}\right]\\
    &+ \frac{1}{T-K} \sum_{t=K+1}^T \sum_{\ell=K+1}^{t-2} \Var\left[p^2(1-p) \tZ_{t-k} \beta_{\ell,\ell+1} \tZ_{t-\ell}\right] \\
    &+ \frac{1}{T-K} \sum_{t=K+1}^T \sum_{\ell=1}^{t-1} \Var\left[p^2(1-p)^2 \tZ_{t-k} \beta_{\ell-1,\ell} \tZ_{t-\ell} \tZ_{t-\ell+1}\right]\\
    &+\frac{1}{T-K} \sum_{t=K+1}^T \Var\left[\left(p\sum_{k=0}^{t-1} 
    \beta_k  + p^2 \sum_{k=1}^{t-1} \beta_{k-1,k} + \epsilon_t \right)\tZ_{t-k} \right]\\
    =& \frac{1}{T-K} \sum_{t=K+1}^T \sum_{\ell=K+1}^{t-1} (\beta_\ell^2+ p^2 \beta_{\ell-1,\ell}^2 + p^2 \beta_{\ell,\ell+1}^2) + \frac{1}{T-K} \sum_{t=K+1}^T \sum_{\ell=1}^{t-1} p(1-p)\beta_{\ell-1,\ell}^2\\
    &+ \frac{1}{T-K} \frac{1}{p(1-p)}\sum_{t=K+1}^T \Big(p\sum_{k=0}^{t-1} \beta_k + p^2 \sum_{k=1}^{t-1} \beta_{k-1,k} + \epsilon_t \Big)^2.
  \end{align*}
  
  \end{proof}
\end{document}